\documentclass[onecolumn,review]{elsarticle}
\usepackage{epsfig}  
\usepackage{breakcites,setspace,amssymb,ulem,soul}
\newcommand*{\temp}{\multicolumn{1}{r|}{}} 
\usepackage{epstopdf,xcolor,epsfig,mathdots}
\usepackage{hhline}
\setlength{\arrayrulewidth}{.5pt}
\usepackage{graphicx}
\usepackage[utf8]{inputenc}
\usepackage[english]{babel}
\pdfminorversion=4
\usepackage[cal=boondoxo]{mathalfa}
\usepackage[T1]{fontenc}		
\usepackage{longtable,breakcites}
\usepackage{amssymb,amsmath,amsthm,amsbsy,graphicx}
\setlength{\columnsep}{0.2in}
\usepackage{amsmath,amssymb,amsfonts,enumitem}
\newcounter{defcounter}
\setcounter{defcounter}{0}

\newtheorem{thrm}{Theorem}[section]
\theoremstyle{definition}
\newtheorem{defn}{Definition}[section]
\newtheorem{fact}{Fact}[section]
\usepackage{chngcntr}
\usepackage{apptools}
\AtAppendix{\counterwithin{fact}{subsection}}
\newtheorem{lemma}{Lemma}[section]
\theoremstyle{definition}
\newtheorem{remark}{Remark}[section]
\newtheorem{corollary}{Corollary}[section]
\newtheorem{proposition}{Proposition}[section]
\theoremstyle{definition}
\newtheorem{example}{Example}[section]
\usepackage{hyperref}
\hypersetup{
     pdfmenubar=true,        
     pdffitwindow=true,     
     pdfauthor={Sujay Kadam},     
     pdfnewwindow=true,      
     colorlinks=true,
     citecolor=blue!50!green,
     filecolor=blue,
     urlcolor=blue!50!cyan,
     linkcolor=blue!50!green
}
\journal{European Journal of Control}
\begin{document}
\begin{frontmatter}
\title{{Trackability for Discrete-Time LTI Systems: A Brief Review and New Insights}
\tnoteref{mytitlenote}}
\tnotetext[mytitlenote]{This work was supported by Department of Science and Technology through projects SR/S3/MERC/0064/2012 and SERC/ET-0150/2012, and in part by IIT Gandhinagar.}
\cortext[sdk1]{Corresponding author}
\author{Sujay~D.~Kadam\corref{sdk1}\fnref{sdk} }
\ead{kadam.sujay@iitgn.ac.in}
\address{SysIDEA Lab, Indian Institute of Technology Gandhinagar, Palaj, Gandhinagar, Gujarat, India.}

\author{Harish J.~Palanthandalam-Madapusi  \corref{}\fnref{hpm}}
\ead{harish@iitgn.ac.in}

\fntext[sdk]{Doctoral student in Electrical Engineering at the Indian Institute of Technology, Gandhinagar.}
\fntext[hpm]{Associate Professor in Mechanical Engineering at the Indian Institute of Technology, Gandhinagar.}
\address{SysIDEA Lab, Indian Institute of Technology Gandhinagar, Palaj, Gandhinagar, Gujarat, India.}

\begin{abstract}
`Trackability', the ability of systems to follow arbitrary reference commands, is investigated in this work.
Controllability is not useful in explaining the tracking behavior of system outputs, a gap that is often overlooked.
Trackability addresses this gap by characterizing whether system dynamics permits tracking of certain reference commands, irrespective of the controller/control action used.
While earlier literature has discussed closely-related ideas, lack of consistency in terminology and focus in the literature has necessitated defining trackability in a simple but rigorous manner.
We present definitions for trackability based on elementary linear algebra and propose a rank test for determining trackability of systems, while posing tracking control as an existence question.
We discuss in some detail the relationship of this rank test with  other results in the literature.
These results are then used to generate insights about trackable systems and more importantly to gain insights in to tracking behavior of systems that are not trackable.
We also define three indices that indicate the expected tracking behavior of the system.
Furthermore, we also present a Venn diagram explaining in detail the connections between trackability and other fundamental system properties like controllability, observability and output controllability, while discussing several facts that elaborate these connections.
Specifically, we show that trackability is fundamentally different from controllability and output controllability.
The presented work is expected to serve as a foundation for a deeper investigation into the topic of tracking control and provide a framework for receiving new theoretical insights on trackability.
\end{abstract}
\begin{keyword}
Trackability \sep trackability indices \sep LTI discrete-time systems \sep Markov parameters.
\end{keyword}
\end{frontmatter} 
\section{Introduction}
The concept of controllability is a fundamental concept in control and has been discussed in the literature and put to use in applications for the past several decades.
However, controllability is not useful in explaining or determining whether it is possible for a system to track a specific reference command \cite{kreindlersarachik1964OC}.
This gap in understanding is often overlooked by the average controls engineer.
There are many examples that help in highlighting this gap, but as a trivial example, one can see that a system with more outputs than inputs will not be able to track independently all the reference outputs regardless of the controller employed even if the system is fully controllable (or more precisely, state controllable).
The same can be said about output controllability as well.

Therefore, there is a need to define an underlying system property that characterizes the ability of a system to track arbitrary reference commands or a way to characterize whether the system dynamics allow for certain reference trajectories to be followed (even before designing a controller).
Note that this is an existence question, and not a question related to actual tracking performance with a specific controller. 
We refer to this ability of a system to track arbitrary reference commands as `Trackability'.

Earlier literature has discussed ideas related to trackability in some detail \cite{kreindlersarachik1964OC,brockett1965reproducibility,dorato1969inverse,sain1969invertibility,albrecht1981reproducibility,brockett2015finite,kurek2002trackability,garcia2013alternative}. However, the terminology and focus seem to be varying in these works. These works cover several closely-related concepts such as  { functional reproducibility in \cite{brockett1965reproducibility,sain1969invertibility,albrecht1981reproducibility}, servo mechanism sense controllability in \cite{brockett2015finite}, functional output controllability in \cite{garcia2013testing,garcia2013alternative}, dynamic path controllability in  \cite{maas1994dynamic}, target path controllability in \cite{wohltmann1985target,engwerda1988control}, functional output $\epsilon$ - controllability in \cite{germani1983functional}, perfect output controllability in \cite{aoki}, right-invertibility in \cite{nijmeijer1986right}, and trackability \cite{kurek2002trackability,sandoz2008trackability}.}
This has led to a lack of consistency and common agreement on definitions.
Ideas related to trackability have also been discussed in works related to synthesis of tracking controllers such as  inversion based feedforward tracking control methods \cite{dev2,dev3,devasia1996nonlinear,george1999stable,leang2009feedforward} using either closed-loop inversion approach or the plant inversion approach \cite{clayton2009review}, and more recently, tracking control through signal decomposition \cite{kasemsinsup2017reference}.
Recently, a review of inversion-based approaches is presented in \cite{van2018inversion}.

{In this paper, we argue that there is a need for defining trackability in simple and rigorous terms. %\textcolor{orange!90!black}{
It may be noted that `functional reproducibility' due to Brockett and Mesarovi\'c \cite{brockett1965reproducibility} and the notion of trackability presented here are analogous, functional reproducibility concerns itself with continuous-time whereas trackability concerns itself with discrete-time systems. Given the natural association of the term trackability with tracking control, we chose to adopt this new terminology rather than use `functional reproducibility'.  While the idea is not entirely new, the approach taken and the framework used is intended to give the reader new insights into this property of trackability and explain its importance by putting it together with other fundamental system properties.  
% while giving a brief review of the existing works in the literature and also suggesting the utility of this approach in development of new tools useful in tracking control.}
% and trackability is a system property independent of the controller employed and the reference command.  
%We therefore define the property of trackability based on sets/spaces - the trackable set and the untrackable set. 
This approach also facilitates development of new tools that may be useful for tracking control problems.}

The key contributions of this paper are multifold.
First, we present the notion of trackability for discrete-time LTI systems based on definitions of trackable and untrackable sets, and propose a rank test for determining trackability of a system in Section \ref{sec:trackability}.
Secondly, these results are then put in the context of several different results in the literature thus also serving as a brief review of various concepts in the literature.
The third contribution is in terms of results that are used to generate insights about tracking behavior of systems that are not trackable, in Section \ref{sec:untrk}.
We also present in Section \ref{sec:indices}, three trackability indices that indicate the expected tracking behavior of a system in general or in response to a specific reference command, which is our fourth contribution.
These indices are applicable to both trackable and untrackable systems, and make use of system's Markov parameters to quantify expected tracking behavior.
We discuss elaborately the connections between trackability and other fundamental system properties like controllability and observability in Section \ref{sec:venn}, followed by several related remarks, which constitutes our fifth contribution. 
The discussion in Section \ref{sec:venn} is intended to provide the reader with a solid overview of various system properties and their interrelations. Section \ref{sec:conclusion} draws conclusions about the presented work.
Additionally, in the appendix, the paper also presents a number of examples to illustrate 17 different possibilities of combinations of system properties to further clarify the results in the paper.

\section{Trackability Results}\label{sec:trackability}
 \subsection{Preliminaries and Definitions}
Consider the discrete-time LTI system represented by the state space equations
\begin{equation}
\label{SS1}
x_{k+1}=A x_k + B u_k,
\end{equation}
\begin{equation}
\label{SS2}
y_{k}=C x_k,
\end{equation}
where $A \in \mathbb{R}^{n \times n}$, $B \in \mathbb{R}^{n \times m}$, $C \in \mathbb{R}^{l \times n}$,  and $x_k \in \mathbb{R}^{n}$, $u_k \in \mathbb{R}^{m}$ and $y_k \in \mathbb{R}^{l}$ are the states, inputs and the outputs respectively. 
{The non-negative integer $k \in \mathbb{N} \cup \lbrace 0 \rbrace$  represents the sample time index. We also assume $m,l \leq n$}. 
%We assume $L \in \mathbb{N}$ to be an input-output delay such that $CA^qB = 0$, for $0 \leq q < L$. 
%Also, we assume that $L \leq n$.

{Let this system \eqref{SS1}, \eqref{SS2} with an initial condition $x_0=0$ be applied  a unit impulse signal at $k=0$, having its $q^{\rm th}$ component $1$ and all other components $0$, such that,
\begin{equation}\label{impulse}
u_k =\begin{cases}
\delta_q \text{ for } k =0,\\
0, \text{ otherwise,}
\end{cases}
\end{equation}
where the $j^{\rm th}$ component of $\delta_q \in \mathbb{R}^m$ is
\begin{equation}
\delta_{q,j} =\begin{cases}
1 \text{ for } j=q,\\
0, \text{ otherwise.}
\end{cases}
\end{equation}
Then, the $m$ positive integers $L_q$ are defined such that corresponding to each $1 \leq q \leq m$ such that, each $L_q$ is the smallest $k$ for which $y_k \neq 0$, in response to unit impulse signal in \eqref{impulse}. In other words, each $L_q$ represents the delay between the unit impulse signal applied at $k=0$, exclusively on the $q^{\rm th}$ component of the input vector and its effect seen on the output vector.
A positive integer $L \in \mathbb{N}$ may now be defined as \begin{equation}
L \triangleq \operatorname{min}\left( L_1, L_2,\cdots,L_m \right).\end{equation}
The positive integer $L$ therefore represents the smallest possible delay between any of the input-output pairs for the system \eqref{SS1}, \eqref{SS2}.
The facts that follow are now stated following the definition of $L$.
\begin{remark}
Systems for which $L$ is not defined, have no input-output relationship and therefore appear as a transfer function with a static gain zero. Hence, for the remainder of the paper, we consider systems with a defined $L$.
\end{remark}
\begin{fact}
For every system \eqref{SS1}, \eqref{SS2}, either $1 \leq L \leq n$ or $L$ does not exist.
\end{fact}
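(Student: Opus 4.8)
The plan is to reduce the statement about $L$ to a statement about the system's Markov parameters, and then settle it with the Cayley--Hamilton theorem. First I would compute the impulse response explicitly. Applying the input \eqref{impulse} with $x_0 = 0$ yields $x_1 = B\delta_q$ (the $q^{\rm th}$ column of $B$) and, by induction, $x_k = A^{k-1}B\delta_q$ for $k \geq 1$, so that $y_k = CA^{k-1}B\delta_q$ for $k \geq 1$ while $y_0 = Cx_0 = 0$. Since the output equation \eqref{SS2} has no direct feedthrough, $y_0 = 0$ regardless of $q$, so every $L_q$ satisfies $L_q \geq 1$, and hence $L \geq 1$ whenever it exists. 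This disposes of the lower bound.

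For the upper bound I would reinterpret $L$ through the Markov parameters $H_i \triangleq CA^{i-1}B$. From the computation above, $L_q$ is the smallest $k \geq 1$ for which $CA^{k-1}B\delta_q \neq 0$, i.e. the first index at which the $q^{\rm th}$ column of $H_k$ is nonzero. Consequently $L = \min_q L_q$ is exactly the smallest $k \geq 1$ for which the full matrix $H_k = CA^{k-1}B$ is nonzero. In these terms, ``$L$ does not exist'' is equivalent to $CA^{i}B = 0$ for all $i \geq 0$, and the bound $L \leq n$ is equivalent to the implication: if some $CA^{i}B \neq 0$, then one is already nonzero for $i \in \{0, 1, \ldots, n-1\}$.

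The crux is proving the contrapositive of this implication via Cayley--Hamilton. Suppose $CA^{i}B = 0$ for every $i \in \{0, 1, \ldots, n-1\}$, i.e. the first $n$ Markov parameters $H_1, \ldots, H_n$ vanish. By the Cayley--Hamilton theorem, $A$ satisfies its degree-$n$ characteristic polynomial, so $A^n$ is a linear combination of $I, A, \ldots, A^{n-1}$; an easy induction then expresses every power $A^j$ with $j \geq n$ as a linear combination of $I, A, \ldots, A^{n-1}$. Multiplying such a combination on the left by $C$ and on the right by $B$ writes $CA^{j}B$ as the same linear combination of $CB, CAB, \ldots, CA^{n-1}B$, each of which is zero by hypothesis. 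Hence $CA^{j}B = 0$ for all $j \geq 0$, which is precisely the case in which $L$ does not exist. Contrapositively, if $L$ exists then some $H_i$ with $1 \leq i \leq n$ is nonzero, forcing $L \leq n$; combined with $L \geq 1$ this gives $1 \leq L \leq n$.

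I do not anticipate a serious obstacle, as the Cayley--Hamilton argument is standard. The only point demanding care is the bookkeeping that translates the column-wise definition of each $L_q$ (first nonzero output \emph{vector} on channel $q$) into the matrix condition $CA^{k-1}B \neq 0$, so that the minimum over input channels $q$ aligns exactly with the first nonvanishing Markov parameter; getting this correspondence right is what makes the reduction to Cayley--Hamilton airtight.
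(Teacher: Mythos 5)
Your proof is correct, and it is worth noting that the paper itself states this Fact without any proof, so there is no argument in the source to compare against; your write-up supplies exactly the justification the authors leave implicit. Both halves are sound: the zero-feedthrough computation $y_0 = Cx_0 = 0$, $y_k = CA^{k-1}B\delta_q$ for $k \geq 1$ gives $L \geq 1$, and the Cayley--Hamilton step correctly shows that vanishing of $CB, CAB, \ldots, CA^{n-1}B$ forces all Markov parameters to vanish, which is precisely the paper's ``$L$ does not exist'' case. Your care in translating the column-wise definition of $L_q$ into the matrix condition $CA^{k-1}B \neq 0$ (so that $\min_q L_q$ is the index of the first nonzero Markov parameter, the minimum being taken over those $q$ for which $L_q$ is defined) is exactly the bookkeeping needed to make the reduction rigorous; this is the standard argument one would expect, and it also quietly establishes the paper's subsequent Fact that $CA^{L-1}B \neq 0$ while $CA^{i-1}B = 0$ for $1 \leq i \leq L-1$.
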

\begin{fact}
For every system \eqref{SS1}, \eqref{SS2}, $CA^{L-1}B \neq 0$, and $CA^{i-1}B = 0$, for $1\leq i \leq L-1$. 
\end{fact}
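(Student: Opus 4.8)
The plan is to compute the impulse response of the system explicitly, identify the output samples with the columns of the Markov parameters $CA^{k-1}B$, and then translate the definitions of $L_q$ and $L$ directly into statements about when these Markov parameters vanish.

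First I would solve the state recursion \eqref{SS1} with $x_0=0$ under the impulse input \eqref{impulse}. Since $u_0=\delta_q$ and $u_k=0$ for $k\geq 1$, a straightforward induction gives $x_0=0$ and $x_k = A^{k-1}B\delta_q$ for $k\geq 1$, so that the output satisfies $y_0=0$ and
\begin{equation}
y_k = C A^{k-1} B \delta_q, \qquad k \geq 1 .
\end{equation}
The key observation is that $CA^{k-1}B\delta_q$ is precisely the $q^{\rm th}$ column of the Markov parameter $CA^{k-1}B$, because $\delta_q$ is the $q^{\rm th}$ standard basis vector of $\mathbb{R}^m$. Thus the impulse experiment that drives the $q^{\rm th}$ input reads off, sample by sample, the $q^{\rm th}$ column of each Markov parameter.

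Next I would substitute the definition of $L_q$. By construction $L_q$ is the smallest index with $y_{L_q}\neq 0$, which by the displayed identity means that the $q^{\rm th}$ column of $CA^{k-1}B$ is zero for $1\leq k\leq L_q-1$ and nonzero for $k=L_q$. For the vanishing half of the claim, fix $i$ with $1\leq i\leq L-1$. Since $L=\min(L_1,\ldots,L_m)$ we have $L\leq L_q$ for every $q$, whence $i\leq L-1\leq L_q-1$ and therefore the $q^{\rm th}$ column of $CA^{i-1}B$ is zero. As this holds for all $q$, every column of $CA^{i-1}B$ vanishes, giving $CA^{i-1}B=0$. For the non-vanishing half, choose an index $q^*$ attaining the minimum, so that $L=L_{q^*}$; then the $q^*{}^{\rm th}$ column of $CA^{L-1}B$ equals $y_{L_{q^*}}\neq 0$, and a matrix possessing a nonzero column is itself nonzero, so $CA^{L-1}B\neq 0$.

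I do not expect a deep obstacle here: the argument is essentially a careful translation between time-domain output samples and the column structure of the Markov parameters. The one point that warrants genuine care is the indexing offset, namely matching $y_k$ with $CA^{k-1}B$ and keeping the range $1\leq i\leq L-1$ aligned with $k\leq L_q-1$, together with the appeal to the preceding remark, which guarantees that $L$ (and hence at least one finite $L_q$) is well-defined so that the minimizer $q^*$ exists.
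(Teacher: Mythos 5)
Your proof is correct and follows exactly the reasoning the paper leaves implicit: the paper states this fact without proof, treating it as an immediate consequence of the definitions of $L_q$ and $L$, and your argument is precisely the careful spelling-out of that definitional translation (impulse response $y_k = CA^{k-1}B\delta_q$ read column-by-column, then minimality of $L = \min_q L_q$). The indexing and the existence of a minimizing $q^*$ are handled correctly, so nothing is missing.
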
}

{\begin{remark}
Since $L$ determines the smallest time within which an input may influence the output vector, it is logical to specify reference signals starting at sample number $L$ instead of $0$.
\end{remark}}

{For $r \geq L$ number of time samples, we define respectively, the sequences $\mathcal{U}_r \in \mathbb{R}^{(r+1)m}$ and {$\mathcal{Y}_{r,q} \in \mathbb{R}^{(r-q+1)l}$} as
\begin{equation}\label{YrUr}
\mathcal{U}_{r}\triangleq
\begin{bmatrix}
u_0\\u_1\\ \vdots \\u_{r}
\end{bmatrix}~\text{and}~
{\mathcal{Y}_{r,q}~\triangleq~
\begin{bmatrix}
y_q\\y_{q+1}\\ \vdots \\y_r
\end{bmatrix}}
\end{equation}
representing the system inputs and outputs respectively. Note that, the first entry of the output sequence {$\mathcal{Y}_{r,q}$ is $y_q$, which enables systems with delay to be taken into account, by choosing $q=L$.} %at corresponding sample indices. 
{We also define matrices $\Gamma_{r} \in \mathbb{R}^{(r-L+1)l \times n}$ and $M_{r} \in \mathbb{R}^{(r-L+1)l \times rm}$ dependent on system parameters $A,B,C$ and positive integers $r$ and $q$ as
\begin{equation}\label{GrMr}
\Gamma_{r} \triangleq \begin{bmatrix}
CA^L\\
CA^{L+1}\\
\vdots \\
CA^{r}
\end{bmatrix}~\text{and}~M_{r} \triangleq \begin{bmatrix}
CA^{L-1}B &0  &\cdots &0 \\
CA^{L}B &CA^{L-1}B &\cdots &0 \\
\vdots &\vdots  &\ddots &0 \\
CA^{r-L}B &CA^{r-L-1}B &\cdots &CA^{L-1}B\\
\end{bmatrix}.
\end{equation}}

The Markov parameter matrix $M_{r}$ can be seen as a convolution matrix of the impulse response of the system determined for $r$ samples, { starting from $L^{\rm th}$ sample.
%Similarly, $\Gamma_{r,0}$ determines the system's response to initial condition $x_0$. 
It is important to note that for a system \eqref{SS1}, \eqref{SS2} with a delay $L$, the matrices $CA^i$, need not be zero, even if $CA^iB=0$ for $0 \leq i <L$ .}

Batch equations for the recursive state space equations (\ref{SS1}) and (\ref{SS2}) {for a system with delay $L$,} initialized at initial condition $x_0$ may then be written as 
\begin{equation}\label{bigeq}
\mathcal{Y}_{r,L}= \Gamma_{r} x_0 + M_{r} \mathcal{U}_{r-L}.
\end{equation}

We first define the notions of trackable set and untrackable set, using which the system's property of trackability is defined. {By $\mathcal{Y}_{{\rm ref,}{r,L}}$, we denote
a reference command sequence for $r-L+1$ time samples starting from $L$ through $r$.}
{\begin{defn} \label{TrSSdef}
The \underline{trackable set} of the system (\ref{SS1}), (\ref{SS2}) {with a delay $L$}, for $x_0 \in \mathbb{R}^n$, and at time index $r$ is the set $\mathfrak{T}_{r,L} (A,B,C,x_0)$  given by
{\begin{equation}
\begin{split}
\mathfrak{T}_{r,L} (A,B,C,x_0) \triangleq \lbrace \mathcal{Y}_{{\rm ref,}{r,L}} \in \mathbb{R}^{(r-L+1)l}:
\text{there exists~} \mathcal{U}_{r-L}\text{~that satisfies }\\
{{\mathcal{Y}}_{{\rm ref,}{r,L}}}-\Gamma_{r} x_0 = M_{r} \mathcal{U}_{r-L} \rbrace,
\end{split}
\end{equation}}
and the corresponding \underline{untrackable set} is the set given by \begin{equation} \mathfrak{U}_{r,L} (A,B,C,x_0)\triangleq\mathbb{R}^{(r-L+1)l}-\mathfrak{T}_{r,L} (A,B,C,x_0). \end{equation} The untrackable set is the complement of the trackable set, and is the collection of reference commands that the system cannot follow. 
\end{defn}}
\subsection{Facts about initial conditions and trackability sets}
We recall facts from \cite{kadam2017revisiting} relating to the initial conditions and the trackable and untrackable sets of the system. Note that $\mathcal{R}(M_{r})$ and $\mathcal{N}(M_{r})$ are used to denote the column space and null space of $M_{r}$  respectively.
\begin{fact}
\label{fact_4}
For every $x_0 \in \mathbb{R}^n$, $\Gamma_{r} x_0 \in \mathfrak{T}_{r,L}(A,B,C,x_0)$ and $ \mathfrak{T}_{r,L}(A,B,C,x_0)=\mathcal{R}(M_{r}) + \lbrace{\Gamma_{r} x_0}\rbrace$.
\end{fact}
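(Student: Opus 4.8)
The plan is to unwind Definition \ref{TrSSdef} and to recognize the solvability condition of the batch equation as a membership condition for a column space; once this identification is made, both assertions follow immediately.

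First, for the claim $\Gamma_{r} x_0 \in \mathfrak{T}_{r,L}(A,B,C,x_0)$, I would exhibit an explicit input that witnesses membership. Setting $\mathcal{Y}_{{\rm ref,}{r,L}} = \Gamma_{r} x_0$, the defining relation $\mathcal{Y}_{{\rm ref,}{r,L}} - \Gamma_{r} x_0 = M_{r}\,\mathcal{U}_{r-L}$ collapses to $0 = M_{r}\,\mathcal{U}_{r-L}$, which is satisfied by the trivial choice $\mathcal{U}_{r-L}=0$. Hence $\Gamma_{r} x_0$ satisfies the condition in Definition \ref{TrSSdef} and therefore lies in the trackable set.

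For the set equality, the key observation is that, for the fixed matrix $M_{r}$ and the fixed vector $v \triangleq \mathcal{Y}_{{\rm ref,}{r,L}} - \Gamma_{r} x_0$, the linear system $M_{r}\,\mathcal{U}_{r-L} = v$ admits a solution $\mathcal{U}_{r-L}$ if and only if $v \in \mathcal{R}(M_{r})$; this is precisely the definition of the range (column space) as the collection of all vectors expressible as $M_{r}$ applied to some input vector. I would apply this equivalence directly: $\mathcal{Y}_{{\rm ref,}{r,L}} \in \mathfrak{T}_{r,L}(A,B,C,x_0)$ holds if and only if $\mathcal{Y}_{{\rm ref,}{r,L}} - \Gamma_{r} x_0 \in \mathcal{R}(M_{r})$, which in turn holds if and only if $\mathcal{Y}_{{\rm ref,}{r,L}} \in \mathcal{R}(M_{r}) + \lbrace \Gamma_{r} x_0 \rbrace$ by the definition of the sum of a subspace with a singleton. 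Reading this chain of equivalences in both directions yields the two set inclusions simultaneously, and hence the claimed equality.

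Since the statement is essentially a faithful translation of the definition of the trackable set into the language of column spaces, I do not expect a substantive obstacle here. The only point requiring care is to note that the existential quantifier ``there exists $\mathcal{U}_{r-L}$'' in Definition \ref{TrSSdef} encodes exactly the column-space membership condition, and to observe that because $0 \in \mathcal{R}(M_{r})$ the first assertion is in fact a special case of the second.
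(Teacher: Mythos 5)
Your proof is correct: the paper itself states this fact without proof (it is recalled from an earlier reference), and your argument is exactly the natural definitional one that such a statement admits. Both parts are handled properly --- the witness $\mathcal{U}_{r-L}=0$ for membership of $\Gamma_{r}x_0$, and the chain of equivalences identifying the existential quantifier in Definition~\ref{TrSSdef} with column-space membership, which immediately gives $\mathfrak{T}_{r,L}(A,B,C,x_0)=\mathcal{R}(M_{r})+\lbrace\Gamma_{r}x_0\rbrace$.
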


\begin{fact} \label{fact_1}
Let $\mathcal{R}(\Gamma_{r}) \subseteq \mathcal{R}(M_{r})$. Then, $\mathfrak{T}_{r,L}(A,B,C,x_0)$ is independent of $x_0$ and $\mathfrak{T}_{r,L}(A,B,C,x_0)=\mathcal{R}(M_{r})$.
\end{fact}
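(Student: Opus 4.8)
The plan is to reduce the entire statement to the characterization of the trackable set already established in Fact \ref{fact_4}, so that the claim becomes an almost immediate consequence of the subspace hypothesis. By Fact \ref{fact_4}, for every $x_0 \in \mathbb{R}^n$ we have $\mathfrak{T}_{r,L}(A,B,C,x_0) = \mathcal{R}(M_{r}) + \{\Gamma_{r} x_0\}$, which is an affine translate of the column space $\mathcal{R}(M_{r})$ by the vector $\Gamma_{r} x_0$. The whole argument therefore hinges on showing that this translation vector already lies inside $\mathcal{R}(M_{r})$, which collapses the affine set back onto the subspace.

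First I would observe that $\Gamma_{r} x_0 \in \mathcal{R}(\Gamma_{r})$ for every $x_0$, simply because the column space $\mathcal{R}(\Gamma_{r})$ is by definition the image of the linear map $x_0 \mapsto \Gamma_{r} x_0$. Invoking the hypothesis $\mathcal{R}(\Gamma_{r}) \subseteq \mathcal{R}(M_{r})$ then places $\Gamma_{r} x_0$ inside the subspace $\mathcal{R}(M_{r})$, and crucially this holds for every choice of $x_0$. Next I would apply the elementary linear-algebra fact that for any linear subspace $V$ and any $v \in V$, the Minkowski sum $V + \{v\}$ equals $V$: one containment follows since $w + v \in V$ whenever $w \in V$, and the reverse follows by writing any $w \in V$ as $w = (w - v) + v$ with $w - v \in V$. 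Applying this with $V = \mathcal{R}(M_{r})$ and $v = \Gamma_{r} x_0$ yields $\mathcal{R}(M_{r}) + \{\Gamma_{r} x_0\} = \mathcal{R}(M_{r})$, and hence $\mathfrak{T}_{r,L}(A,B,C,x_0) = \mathcal{R}(M_{r})$. Since the right-hand side makes no reference to $x_0$, independence from $x_0$ follows at once.

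I do not expect a genuine obstacle in this proof; the content of the fact is carried entirely by the inclusion $\mathcal{R}(\Gamma_{r}) \subseteq \mathcal{R}(M_{r})$, which is precisely the condition rendering the initial-condition contribution $\Gamma_{r} x_0$ redundant. The only point requiring care is not to treat the affine translate as a genuine subspace sum before the hypothesis has been invoked, and to make explicit that the collapse $\mathcal{R}(M_{r}) + \{\Gamma_{r} x_0\} = \mathcal{R}(M_{r})$ holds uniformly over all $x_0$, so that the independence claim is justified rather than merely asserted.
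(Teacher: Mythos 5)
Your proof is correct and is essentially the intended argument: the paper states this fact without proof (recalling it from prior work), but it places Fact \ref{fact_4} immediately before it precisely so that the claim follows by noting $\Gamma_{r}x_0 \in \mathcal{R}(\Gamma_{r}) \subseteq \mathcal{R}(M_{r})$ and that a subspace is invariant under translation by any of its own elements, which is exactly your chain of reasoning. The collapse $\mathcal{R}(M_{r}) + \lbrace \Gamma_{r}x_0 \rbrace = \mathcal{R}(M_{r})$ holding uniformly in $x_0$ gives both conclusions at once, and your write-up handles this point carefully.
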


Since trackable set is independent of initial conditions in the situation discussed in Fact \ref{fact_1}, we may denote, $\mathfrak{T}_{r,L}(A,B,C,x_0)$  as $\mathfrak{T}_{r,L}(A,B,C)$ and similarly, $\mathfrak{U}_{r,L} (A,B,C,x_0)$ as $\mathfrak{U}_{r,L} (A,B,C)$.

\begin{fact}\label{fact_2}
Let $\mathcal{R}(\Gamma_{r})~  {\not\subseteq}  ~\mathcal{R}(M_{r})$. Then,   $\mathfrak{T}_{r,L}(A,B,C,x_0) \neq \mathbb{R}^{(r-L+1)l}$,
 and $\mathfrak{U}_{r,L} (A,B,C,x_0)$ is not empty.
\end{fact}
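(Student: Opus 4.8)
The plan is to work directly from the affine-subspace description of the trackable set supplied by Fact~\ref{fact_4}, namely $\mathfrak{T}_{r,L}(A,B,C,x_0)=\mathcal{R}(M_{r})+\lbrace\Gamma_{r}x_0\rbrace$, and to exhibit a single reference command that provably falls outside it. Producing one such command is exactly what is needed, since it witnesses simultaneously that $\mathfrak{U}_{r,L}(A,B,C,x_0)$ is nonempty and that $\mathfrak{T}_{r,L}(A,B,C,x_0)\neq\mathbb{R}^{(r-L+1)l}$. Note that this is the complementary hypothesis to Fact~\ref{fact_1}, where the inclusion $\mathcal{R}(\Gamma_{r})\subseteq\mathcal{R}(M_{r})$ held; here the inclusion fails, and that failure is precisely the lever.

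First I would unpack the hypothesis $\mathcal{R}(\Gamma_{r})\not\subseteq\mathcal{R}(M_{r})$. By the definition of the column space this means there exists a vector $v\in\mathbb{R}^{n}$ with $\Gamma_{r}v\in\mathcal{R}(\Gamma_{r})$ but $\Gamma_{r}v\notin\mathcal{R}(M_{r})$. Using this witness I would form the candidate reference $\mathcal{Y}_{{\rm ref,}{r,L}}^{\ast}\triangleq\Gamma_{r}(v+x_0)\in\mathbb{R}^{(r-L+1)l}$. Testing membership in the trackable set via Definition~\ref{TrSSdef} requires removing the offset $\Gamma_{r}x_0$ first: if $\mathcal{Y}_{{\rm ref,}{r,L}}^{\ast}\in\mathfrak{T}_{r,L}(A,B,C,x_0)$, then there would have to exist $\mathcal{U}_{r-L}$ with $\mathcal{Y}_{{\rm ref,}{r,L}}^{\ast}-\Gamma_{r}x_0=M_{r}\mathcal{U}_{r-L}$, i.e. $\Gamma_{r}v=M_{r}\mathcal{U}_{r-L}\in\mathcal{R}(M_{r})$, contradicting the choice of $v$. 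Hence $\mathcal{Y}_{{\rm ref,}{r,L}}^{\ast}\in\mathfrak{U}_{r,L}(A,B,C,x_0)$, the untrackable set is nonempty, and the trackable set is a proper subset of $\mathbb{R}^{(r-L+1)l}$.

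I expect no genuine obstacle here; the statement is essentially a short corollary of Fact~\ref{fact_4}. The one point that deserves care — and where a hasty argument could slip — is the affine offset $\Gamma_{r}x_0$: the trackable set is a \emph{translate} of the subspace $\mathcal{R}(M_{r})$, not $\mathcal{R}(M_{r})$ itself, so membership must always be checked after subtracting $\Gamma_{r}x_0$. For completeness I could also give the non-constructive version: the hypothesis rules out $\mathcal{R}(M_{r})=\mathbb{R}^{(r-L+1)l}$, because a full-row-rank $M_{r}$ would have column space equal to the whole space and would therefore contain $\mathcal{R}(\Gamma_{r})$; thus $\mathcal{R}(M_{r})$ is a proper subspace, and any translate of a proper subspace is a proper subset. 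I would nonetheless prefer the explicit witness $\Gamma_{r}(v+x_0)$, since it not only settles the claim but also pins down a concrete untrackable reference command that can be reused in the later development.
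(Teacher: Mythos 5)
Your proof is correct. The paper supplies no argument of its own for this fact --- it is recalled without proof from an earlier reference --- and your construction (choose $v$ with $\Gamma_{r}v\notin\mathcal{R}(M_{r})$, then verify via Definition~\ref{TrSSdef} that the candidate $\Gamma_{r}(v+x_{0})$ cannot equal $\Gamma_{r}x_{0}+M_{r}\mathcal{U}_{r-L}$ for any input sequence) is precisely the short corollary of Fact~\ref{fact_4} that the omitted proof would consist of, with the affine offset $\Gamma_{r}x_{0}$ handled correctly; your non-constructive variant (a translate of a proper subspace is a proper subset) is equally sound.
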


\begin{fact}\label{fact3}
Let $\mathfrak{T}_{r,L}(A,B,C,x_0) = \mathbb{R}^{(r-L+1)l}$ for a given $x_0 \in \mathbb{R}^n$. Then, for all $\bar{x}_0 \in \mathbb{R}^n$, $\mathfrak{T}_{r,L} (A,B,C,\bar{x}_0)=\mathfrak{T}_{r,L} (A,B,C,{x}_0)= \mathfrak{T}_{r,L}(A,B,C)=\mathbb{R}^{(r-L+1)l}$ and $\mathcal{R}(\Gamma_{r}) \subseteq \mathcal{R}(M_{r})$.
\end{fact}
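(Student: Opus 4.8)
The plan is to derive this statement as a direct consequence of Facts \ref{fact_4}, \ref{fact_1} and \ref{fact_2}, rather than re-deriving anything from the batch equation \eqref{bigeq}. The hypothesis that the trackable set already equals the entire space $\mathbb{R}^{(r-L+1)l}$ is a strong one, and the strategy is to feed it into the contrapositive of Fact \ref{fact_2} to obtain the range inclusion, and then let Fact \ref{fact_1} propagate this conclusion across all initial conditions.

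First I would read Fact \ref{fact_2} in contrapositive form: it asserts that $\mathcal{R}(\Gamma_{r}) \not\subseteq \mathcal{R}(M_{r})$ forces $\mathfrak{T}_{r,L}(A,B,C,x_0) \neq \mathbb{R}^{(r-L+1)l}$, so the hypothesis $\mathfrak{T}_{r,L}(A,B,C,x_0) = \mathbb{R}^{(r-L+1)l}$ immediately yields $\mathcal{R}(\Gamma_{r}) \subseteq \mathcal{R}(M_{r})$. This already establishes the final assertion of the Fact.

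With the inclusion in hand, Fact \ref{fact_1} applies verbatim: the trackable set is independent of the initial condition and equals $\mathcal{R}(M_{r})$, so in particular $\mathfrak{T}_{r,L}(A,B,C,x_0) = \mathcal{R}(M_{r})$. Comparing this with the hypothesis gives $\mathcal{R}(M_{r}) = \mathbb{R}^{(r-L+1)l}$. Invoking the independence once more, for every $\bar{x}_0 \in \mathbb{R}^n$ we obtain $\mathfrak{T}_{r,L}(A,B,C,\bar{x}_0) = \mathcal{R}(M_{r}) = \mathbb{R}^{(r-L+1)l}$, which is precisely the displayed chain of equalities.

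I do not expect a genuine obstacle here, since the two cited facts do the real work; the only point needing a moment of care is the logical direction, namely using Fact \ref{fact_2} through its contrapositive to extract the inclusion from the full-space hypothesis. As a self-contained alternative that avoids Fact \ref{fact_2} entirely, one may instead start from Fact \ref{fact_4}, which writes the trackable set as the affine set $\mathcal{R}(M_{r}) + \{\Gamma_{r} x_0\}$; since an affine translate of a subspace equals the whole of $\mathbb{R}^{(r-L+1)l}$ only when that subspace is itself the whole space (observe that $0$ lying in the translate forces $\Gamma_{r} x_0 \in \mathcal{R}(M_{r})$, whence the translate collapses to $\mathcal{R}(M_{r})$), the hypothesis again gives $\mathcal{R}(M_{r}) = \mathbb{R}^{(r-L+1)l}$, and the inclusion together with initial-condition independence then follow exactly as before.
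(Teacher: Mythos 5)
Your proof is correct. Note that the paper itself offers no proof of this fact: it is one of the facts ``recalled from \cite{kadam2017revisiting}'' and is stated without argument, so there is no in-paper derivation to compare against. Your first route is logically sound: Fact \ref{fact_2} is stated with $x_0$ free (hence universally quantified), so its contrapositive legitimately converts the full-space hypothesis at a single $x_0$ into the inclusion $\mathcal{R}(\Gamma_{r}) \subseteq \mathcal{R}(M_{r})$, and Fact \ref{fact_1} then yields independence of the initial condition and the chain of equalities, once you observe that $\mathcal{R}(M_{r})$ must itself be $\mathbb{R}^{(r-L+1)l}$. Your alternative route via Fact \ref{fact_4} is arguably the better one to record: it is self-contained, it makes explicit the elementary geometric point doing all the work (an affine translate $\mathcal{R}(M_{r}) + \lbrace \Gamma_{r}x_0 \rbrace$ can exhaust $\mathbb{R}^{(r-L+1)l}$ only if the subspace $\mathcal{R}(M_{r})$ already does, since $0$ in the translate forces $\Gamma_{r}x_0 \in \mathcal{R}(M_{r})$), and it simultaneously delivers the inclusion $\mathcal{R}(\Gamma_{r}) \subseteq \mathcal{R}(M_{r})$ trivially and the independence statement directly, without routing through the contrapositive of Fact \ref{fact_2} --- which is itself most naturally proved by this same translate argument.
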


\begin{fact}
Let $x_0=0$ . Then, $\mathfrak{T}_{r,L}(A,B,C,x_0)=\mathcal{R}(M_{r})$.
\end{fact}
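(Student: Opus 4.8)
The plan is to derive this directly as a specialization of Fact \ref{fact_4}, since that fact already establishes, for every $x_0 \in \mathbb{R}^n$, the identity $\mathfrak{T}_{r,L}(A,B,C,x_0) = \mathcal{R}(M_{r}) + \lbrace \Gamma_{r} x_0 \rbrace$. Thus the only work is to evaluate the right-hand side when $x_0 = 0$. First I would substitute $x_0 = 0$ and observe that $\Gamma_{r} x_0 = \Gamma_{r}\cdot 0 = 0$, so the translation term reduces to the singleton $\lbrace 0 \rbrace$. Then the Minkowski sum $\mathcal{R}(M_{r}) + \lbrace 0 \rbrace$ collapses: because $\mathcal{R}(M_{r})$ is a linear subspace of $\mathbb{R}^{(r-L+1)l}$, it contains $0$ and is unchanged under translation by $0$, giving $\mathcal{R}(M_{r}) + \lbrace 0 \rbrace = \mathcal{R}(M_{r})$. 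This immediately yields $\mathfrak{T}_{r,L}(A,B,C,0) = \mathcal{R}(M_{r})$.

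As an alternative that avoids invoking Fact \ref{fact_4}, I would argue straight from Definition \ref{TrSSdef}. With $x_0 = 0$, the defining membership condition reads: $\mathcal{Y}_{{\rm ref},r,L} \in \mathfrak{T}_{r,L}(A,B,C,0)$ if and only if there exists $\mathcal{U}_{r-L}$ with $\mathcal{Y}_{{\rm ref},r,L} = M_{r}\mathcal{U}_{r-L}$, which is precisely the statement that $\mathcal{Y}_{{\rm ref},r,L} \in \mathcal{R}(M_{r})$. Equality of the two sets then follows by mutual inclusion.

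I do not anticipate any genuine obstacle here: the result is an immediate corollary of Fact \ref{fact_4} (equivalently, of the definition), the only subtlety being the trivial observation that translating a subspace by the zero vector returns the subspace itself. The reason for stating it separately is that it isolates the frequently used zero-initial-condition case, which coincides with the impulse-response setup ($x_0 = 0$) under which the Markov parameters, and hence $M_{r}$, were originally introduced in \eqref{GrMr}.
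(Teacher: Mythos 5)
Your proposal is correct and follows exactly the reasoning the paper implicitly relies on: the paper states this fact without proof (recalling it from earlier work) precisely because it is the immediate specialization of Fact \ref{fact_4} (or, equivalently, of Definition \ref{TrSSdef}) to $x_0=0$, where $\Gamma_{r}x_0=0$ makes the translate $\mathcal{R}(M_{r})+\lbrace 0\rbrace$ collapse to $\mathcal{R}(M_{r})$. Both of your routes — via Fact \ref{fact_4} and directly from the membership condition $\mathcal{Y}_{{\rm ref},r,L}=M_{r}\mathcal{U}_{r-L}$ — are valid and amount to the same one-line observation.
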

\begin{defn}\label{Trdef}
The system (\ref{SS1}), (\ref{SS2}) {with a delay $L$} is \underline{trackable} if $\mathfrak{T}_{r,L}(A,B,C)=\mathbb{R}^{(r-L+1)l}$ for all ${r \geq L}$.
\end{defn}

From Definition \ref{Trdef}, and Facts \ref{fact3} and \ref{fact_r} it should be evident that, trackability is a system property independent of initial conditions and time step $r$. That is, if the system (\ref{SS1}), (\ref{SS2}) is {trackable}, then it is so, regardless of initial conditions and time step $r$. 
%Fact \ref{fact_r} also motivates the need for a condition to determine whether a system is trackable or not.
Further, $\mathfrak{T}_{r,L}(A,B,C)$ is a vector space and can be referred to as trackable space. Note that $\mathfrak{T}_{r,L}(A,B,C)$ is a vector space even in the situation outlined in Fact \ref{fact_1}, even if $\mathfrak{T}_{r,L}(A,B,C)\neq \mathbb{R}^{(r-L+1)l}$. However, when a system is not trackable and $\mathcal{R}(\Gamma_{r}) \not \subseteq \mathcal{R}(M_{r})$, the initial conditions become relevant when discussing the trackable set. $\mathfrak{T}_{r,L}(A,B,C,x_0)$ is not a space (unless $x_0$=0) in such situations. We will invoke the notation $\mathfrak{T}_{r,L}(A,B,C,x_0)$ for referring to trackable set of such systems. In the subsection that follows, we give a simple rank test based on Markov parameters of the system to determine whether or not a system is trackable

\subsection{Rank test for trackability}
The relationship between ranks of Markov parameter matrix $M_{r}$ and the first non-zero Markov parameter {$CA^{L-1}B$} is discussed in Lemma \ref{lemmaA1}, which is used subsequently to determine conditions for trackability. The Moore-Penrose generalized inverse \cite{bernstein2009matrix} of a real matrix $Q$ is denoted by $Q^\dagger$.
{ \begin{lemma} \label{lemmaA1} Let $r \geq L$. Then,
${\rm rank}\left( M_{r}\right)={\rm min} \left((r-L+1)l,(r-L+1)m \right)$, if and only if ${\rm rank}\left( CA^{L-1}B\right)={\rm min} \left(l,m \right)$ for the system \eqref{SS1}, \eqref{SS2} with delay $L$. 
\end{lemma}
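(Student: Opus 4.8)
Write $H \triangleq CA^{L-1}B \in \mathbb{R}^{l\times m}$ for the first nonzero Markov parameter and $N \triangleq r-L+1 \geq 1$ for the number of block rows (equivalently, block columns) of $M_{r}$, so that $M_{r}$ is an $N\times N$ block lower-triangular Toeplitz matrix whose every diagonal block equals $H$ and whose $i$-th subdiagonal block equals $CA^{L-1+i}B$. Since $\operatorname{min}\left((r-L+1)l,(r-L+1)m\right)=N\operatorname{min}(l,m)$, the claim is that $M_{r}$ attains full rank $N\operatorname{min}(l,m)$ if and only if $H$ attains full rank $\operatorname{min}(l,m)$. My plan is to split on the sign of $l-m$: when $m\le l$, full rank of $H$ means full column rank, and I will argue about $\mathcal{N}(M_{r})$; when $l\le m$, full rank of $H$ means full row rank, and I will argue about $\mathcal{N}(M_{r}^{\mathsf T})$. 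The two regimes overlap at $l=m$, where either argument applies (or one may simply invoke that a block-triangular matrix with square diagonal blocks is nonsingular iff each diagonal block is).

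For the ``if'' direction I would use the triangular Toeplitz structure directly. Suppose $m\le l$ and $\operatorname{rank}(H)=m$, and let $v=[v_0;\dots;v_{N-1}]\in\mathcal{N}(M_{r})$ with each $v_i\in\mathbb{R}^m$. The first block equation reads $Hv_0=0$, so $v_0=0$ by injectivity of $H$; the $i$-th block equation then reduces to $Hv_i=0$ once the already-vanished $v_0,\dots,v_{i-1}$ are substituted, giving $v_i=0$ by induction (plain forward substitution). Hence $\mathcal{N}(M_{r})=\{0\}$ and $M_{r}$ has full column rank $Nm$. The case $l\le m$ with $\operatorname{rank}(H)=l$ is the transpose of this: $M_{r}^{\mathsf T}$ is block upper-triangular Toeplitz with diagonal blocks $H^{\mathsf T}$, which now have full column rank $l$, so the same peeling argument run from the last block upward forces any element of $\mathcal{N}(M_{r}^{\mathsf T})$ to vanish, and $M_{r}$ has full row rank $Nl$.

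For the ``only if'' direction I would argue by contraposition, and here lies the one point that needs care: because the subdiagonal Markov parameters $CA^{L-1+i}B$ may be nonzero, a rank deficiency of $H$ cannot in general be exhibited through the \emph{first} block column (or the last block row), whose off-diagonal entries need not vanish. The fix is to use the isolated extremal block. If $m\le l$ and $\operatorname{rank}(H)<m$, pick $0\ne v\in\mathbb{R}^m$ with $Hv=0$; since the last block column of $M_{r}$ has $H$ as its only nonzero (bottom) block, the stacked vector $[0;\dots;0;v]$ lies in $\mathcal{N}(M_{r})$, so $M_{r}$ fails to have full column rank. Dually, if $l\le m$ and $\operatorname{rank}(H)<l$, pick $0\ne w\in\mathbb{R}^l$ with $w^{\mathsf T}H=0$; since the first block row of $M_{r}$ is $[\,H\ 0\ \cdots\ 0\,]$, the covector $[\,w^{\mathsf T}\ 0\ \cdots\ 0\,]$ annihilates $M_{r}$ from the left, so $M_{r}$ fails to have full row rank. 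Combining the two directions yields the stated equivalence.

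The routine content is the forward/backward substitution bookkeeping made possible by the Toeplitz triangular form; the only genuinely substantive observation, and the step I expect to be the main obstacle, is the choice in the reverse direction of the last block column (respectively the first block row) as the unique block that sees $H$ in isolation. This is precisely what prevents the possibly nonzero lower-order Markov parameters from restoring the rank lost by a deficient $CA^{L-1}B$.
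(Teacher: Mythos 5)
Your proof is correct. There is, however, no in-paper argument to compare it with: the paper's entire proof of Lemma \ref{lemmaA1} is the sentence that it ``admits a proof similar to the proof in'' a cited earlier work, so your write-up supplies an argument that the paper leaves implicit. Judged on its own, your reasoning holds up at every step: forward block substitution gives injectivity of $M_{r}$ when $CA^{L-1}B$ has full column rank ($m\leq l$); the transposed backward substitution gives full row rank of $M_{r}$ when $CA^{L-1}B$ has full row rank ($l\leq m$); and in the converse (contrapositive) direction you correctly plant a right null vector of $CA^{L-1}B$ in the last block column, and a left null vector in the first block row --- the only block positions where $CA^{L-1}B$ appears unaccompanied by the higher Markov parameters $CA^{L-1+i}B$, which need not vanish. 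That placement is indeed the crux; an attempt to exhibit the rank deficiency through the first block column (or last block row) would fail for exactly the reason you state. One housekeeping remark: the paper's stated size $M_{r}\in\mathbb{R}^{(r-L+1)l\times rm}$ and some subscripts in the display \eqref{GrMr} are mutually inconsistent for $L>1$; your reading of $M_{r}$ as an $(r-L+1)\times(r-L+1)$ block lower-triangular Toeplitz matrix with constant diagonal block $CA^{L-1}B$ is the structure consistent with the batch equation \eqref{bigeq}, and it is the one on which the lemma and Theorem \ref{TrackThm1} actually rely.
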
}
\begin{proof}
This admits a proof similar to the proof in \cite{kadam2017revisiting}.
\end{proof} 
\begin{fact} \label{fact_r}If there exists an $r \geq L$ for which $\mathfrak{T}_{r,L}(A,B,C)=\mathbb{R}^{(r-L+1)l}$, then it holds for every $r \geq L$. 
\end{fact}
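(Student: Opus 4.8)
The plan is to translate the set-equality hypothesis into a rank condition on $M_r$, push that condition through Lemma \ref{lemmaA1} to obtain a condition on the single matrix $CA^{L-1}B$ that carries no dependence on $r$, and then run the argument in reverse for an arbitrary index. The whole argument hinges on the observation that Lemma \ref{lemmaA1} reduces an $r$-dependent rank statement to an $r$-independent one about the first nonzero Markov parameter.

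First I would note that, under the hypothesis $\mathfrak{T}_{r,L}(A,B,C)=\mathbb{R}^{(r-L+1)l}$, Fact \ref{fact3} yields $\mathcal{R}(\Gamma_{r}) \subseteq \mathcal{R}(M_{r})$, and then Fact \ref{fact_1} identifies the trackable set with $\mathcal{R}(M_{r})$. Hence the hypothesis is equivalent to $\mathcal{R}(M_{r})=\mathbb{R}^{(r-L+1)l}$, i.e. to $M_{r}$ having full row rank, $\operatorname{rank}(M_{r})=(r-L+1)l$. Since $M_{r}$ has $(r-L+1)l$ rows and $(r-L+1)m$ columns, full row rank forces $(r-L+1)l \le (r-L+1)m$, i.e. $l \le m$, and in that regime $\min\left((r-L+1)l,(r-L+1)m\right)=(r-L+1)l$. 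Thus the hypothesis says precisely that $\operatorname{rank}(M_{r})=\min\left((r-L+1)l,(r-L+1)m\right)$ together with $l \le m$.

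Next I would invoke Lemma \ref{lemmaA1}: the equality $\operatorname{rank}(M_{r})=\min\left((r-L+1)l,(r-L+1)m\right)$ holds if and only if $\operatorname{rank}(CA^{L-1}B)=\min(l,m)$. Combined with $l \le m$, this reads $\operatorname{rank}(CA^{L-1}B)=l$, a statement about the fixed first nonzero Markov parameter that is \emph{independent of} $r$. For the converse direction I would then fix an arbitrary $s \ge L$ and retrace the chain: because $l \le m$ and $\operatorname{rank}(CA^{L-1}B)=l=\min(l,m)$ continue to hold, Lemma \ref{lemmaA1} gives $\operatorname{rank}(M_{s})=\min\left((s-L+1)l,(s-L+1)m\right)=(s-L+1)l$, so $M_{s}$ has full row rank and $\mathcal{R}(M_{s})=\mathbb{R}^{(s-L+1)l}$; by Fact \ref{fact_1} this is $\mathfrak{T}_{s,L}(A,B,C)=\mathbb{R}^{(s-L+1)l}$. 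As $s$ was arbitrary, the conclusion follows.

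The only real subtlety — and the step I would treat most carefully — is the bookkeeping around the two $\min$ expressions, namely verifying that full row rank of $M_{r}$ is genuinely equivalent to the pair ``$l \le m$ and $\operatorname{rank}(CA^{L-1}B)=l$'' and not merely implied by it. In particular the case $l > m$ must be handled: there $\operatorname{rank}(M_{r}) \le (r-L+1)m < (r-L+1)l$, so full row rank is impossible for every $r$ and the hypothesis cannot be met, making the statement hold vacuously. Once that dichotomy is pinned down, the $r$-independence of the Markov-parameter rank condition does all the remaining work.
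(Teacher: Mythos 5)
Your proposal is correct and follows essentially the same route as the paper's own proof: both reduce the hypothesis to full row rank of $M_r$, invoke Lemma \ref{lemmaA1} to obtain the $r$-independent condition $\operatorname{rank}(CA^{L-1}B)=l$, and then run Lemma \ref{lemmaA1} in reverse for an arbitrary index. If anything, your bookkeeping around the two $\min$ expressions (in particular, disposing of the $l>m$ case, where the hypothesis is vacuous) is more careful than the paper's, which silently identifies $\min\left((q-L+1)l,(q-L+1)m\right)$ with $(q-L+1)l$.
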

{\begin{proof}
%In Appendix \ref{proofA}.
Let $\mathfrak{T}_{q,L}(A,B,C)=\mathbb{R}^{(q-L+1)l}$ hold true for some $q \geq L$. Then, following Definition \ref{TrSSdef}, for all sequences $\mathcal{Y}_{{\rm ref},q,L} \in \mathbb{R}^{(q-L+1)l}$ and for all $x_0 \in \mathbb{R}^n$, there exists a control input sequence ${\mathcal{U}}_{q-L}$ satisfying \begin{equation}\label{bigeqQs}
\mathcal{Y}_{{\rm ref},q,L}= \Gamma_{q,L} x_0 + M_{q,L} \mathcal{U}_{q-L},
\end{equation}
which implies $\operatorname{rank}(M_{q,L})=(q-L+1)l$ and invoking Lemma \ref{lemmaA1}, it follows that $\operatorname{rank}(CA^{L-1}B)=l$, which further implies that  $\operatorname{rank}(M_{r})=(r-L+1)l$ and $\mathfrak{T}_{r,L}(A,B,C)=\mathbb{R}^{(r-L+1)l}$, for all $r \geq L$, therefore proving the fact.
\end{proof}}

For determining a system's trackability, a rank test is proposed in the following theorem.

{\begin{thrm} \label{TrackThm1} The following statements are equivalent:
\begin{enumerate} [label=\roman*)]
\item \label{TT1}System (\ref{SS1}) and (\ref{SS2}) {with a delay $L$} is trackable.
\item \label{TT2}${\rm rank}\left(M_{r}\right)=(r-L+1)l$,  for all $r \geq L$.
\item \label{TT3}${\rm rank}\left(CA^{L-1}B\right)=l$.
\end{enumerate}  
\end{thrm}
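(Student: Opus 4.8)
The plan is to establish the three-way equivalence through the two links \ref{TT1}$\Leftrightarrow$\ref{TT2} and \ref{TT2}$\Leftrightarrow$\ref{TT3}, since Lemma \ref{lemmaA1} already ties the rank of $M_{r}$ to the rank of the first nonzero Markov parameter $CA^{L-1}B$. Throughout I would keep in mind that both \ref{TT2} and \ref{TT3} silently force $l \leq m$, and that this is precisely what lets me read the symmetric ``$\min$'' in Lemma \ref{lemmaA1} as the full-row-rank condition I actually want.

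For \ref{TT1}$\Leftrightarrow$\ref{TT2}: by Definition \ref{Trdef} the system is trackable exactly when $\mathfrak{T}_{r,L}(A,B,C)=\mathbb{R}^{(r-L+1)l}$ for every $r \geq L$. By Fact \ref{fact_4}, $\mathfrak{T}_{r,L}(A,B,C,x_0)=\mathcal{R}(M_{r})+\lbrace \Gamma_{r} x_0 \rbrace$, which is an affine translate of the subspace $\mathcal{R}(M_{r})$. An affine subspace equals the whole ambient space $\mathbb{R}^{(r-L+1)l}$ if and only if its underlying linear subspace does; hence $\mathfrak{T}_{r,L}=\mathbb{R}^{(r-L+1)l}$ is equivalent to $\mathcal{R}(M_{r})=\mathbb{R}^{(r-L+1)l}$, i.e. to $M_{r}$ having full row rank $(r-L+1)l$. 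In this case $\mathcal{R}(\Gamma_{r})\subseteq\mathcal{R}(M_{r})$ holds trivially, so Fact \ref{fact_1} certifies that the set is genuinely $x_0$-independent and the notation $\mathfrak{T}_{r,L}(A,B,C)$ is legitimate. Quantifying over $r$ gives exactly statement \ref{TT2}, and Fact \ref{fact_r} shows the ``for all $r$'' and ``for some $r$'' versions coincide, so no generality is lost.

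For \ref{TT2}$\Leftrightarrow$\ref{TT3}: I would invoke Lemma \ref{lemmaA1}, which asserts $\operatorname{rank}(M_{r})=\min((r-L+1)l,(r-L+1)m)$ iff $\operatorname{rank}(CA^{L-1}B)=\min(l,m)$. The only work is to reconcile the two ``$\min$'' expressions with the specific full-rank values appearing in \ref{TT2} and \ref{TT3}. If \ref{TT3} holds then $\operatorname{rank}(CA^{L-1}B)=l$, and since $CA^{L-1}B\in\mathbb{R}^{l\times m}$ this forces $l\leq m$, whence $\min(l,m)=l$ and $\min((r-L+1)l,(r-L+1)m)=(r-L+1)l$; Lemma \ref{lemmaA1} then yields \ref{TT2}. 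Conversely, \ref{TT2} says $M_{r}$ attains full row rank, which requires at least as many columns as rows and again forces $l\leq m$; the same identification of the minima, together with the Lemma, returns $\operatorname{rank}(CA^{L-1}B)=l$, i.e. \ref{TT3}.

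The routine parts are the affine-subspace argument and the dimensional bookkeeping. The one place to be careful, and the main obstacle, is the passage between the symmetric ``$\min$'' statement of Lemma \ref{lemmaA1} and the asymmetric full-row-rank conditions of the theorem: one must observe that each of \ref{TT2} and \ref{TT3} independently implies $l\leq m$, for otherwise the minima would collapse to $m$ and the equivalence with the $l$-valued rank would break. Once this observation is secured, the chain \ref{TT1}$\Leftrightarrow$\ref{TT2}$\Leftrightarrow$\ref{TT3} closes and the theorem follows.
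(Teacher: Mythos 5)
Your proof is correct, but it closes the equivalence by a genuinely different route than the paper does. The paper proves a cycle \ref{TT1}$\Rightarrow$\ref{TT2}$\Rightarrow$\ref{TT3}$\Rightarrow$\ref{TT1}: the first two implications are essentially your forward directions (trackability at $x_0=0$ gives $\mathcal{R}(M_r)=\mathbb{R}^{(r-L+1)l}$, then Lemma \ref{lemmaA1}), but the closing step \ref{TT3}$\Rightarrow$\ref{TT1} is \emph{constructive}: from right invertibility of $CA^{L-1}B$ the paper builds the explicit recursive input \eqref{eq:ubar}, $u_k = (CA^{L-1}B)^{\dagger}\bigl(y_{{\rm ref},k+L} - CA^{k+L}x_0 - \sum_{i=1}^{k}CA^{i+L-1}Bu_{k-i}\bigr)$, which tracks any reference from any $x_0$, and this formula is then reused in the paper's discussion of inversion-based control synthesis and in the simulations. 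You instead prove the two biconditionals \ref{TT1}$\Leftrightarrow$\ref{TT2} and \ref{TT2}$\Leftrightarrow$\ref{TT3} abstractly: the affine-translate argument via Fact \ref{fact_4} (an affine translate of a subspace is the whole space iff the subspace is) handles both directions of \ref{TT1}$\Leftrightarrow$\ref{TT2} uniformly in $x_0$, and you use Lemma \ref{lemmaA1} bidirectionally for \ref{TT2}$\Leftrightarrow$\ref{TT3}, which is legitimate since the lemma is stated as an equivalence. Your treatment is in one respect more careful than the paper's: you make explicit that each of \ref{TT2} and \ref{TT3} forces $l\leq m$, so the symmetric $\min(\cdot,\cdot)$ expressions in Lemma \ref{lemmaA1} collapse to the row-rank values in the theorem --- a piece of bookkeeping the paper's ``immediate from Lemma \ref{lemmaA1}'' glosses over; you also justify the $x_0$-independent notation $\mathfrak{T}_{r,L}(A,B,C)$ via Fact \ref{fact_1}. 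What you lose relative to the paper is the constructive content: your argument certifies existence of a tracking input by surjectivity of $M_r$ but exhibits no formula for it, whereas the paper's proof of \ref{TT3}$\Rightarrow$\ref{TT1} doubles as the control law that the rest of the paper builds on.
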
}
\begin{proof}
To prove \ref{TT1} implies \ref{TT2}, we have $\mathfrak{T}_{r,L}(A,B,C)=\mathfrak{T}_{r,L} (A,B,C,0)=\mathbb{R}^{(r-L+1)l}$, and therefore $\mathcal{R}(M_{r})=\mathbb{R}^{(r-L+1)l}$ and consequently ${\rm rank} (M_{r})=(r-L+1)l$. 

The proof of \ref{TT2} implies \ref{TT3} is immediate from Lemma \ref{lemmaA1}.

To prove \ref{TT3}  implies \ref{TT1}, since ${\rm rank}(CA^{L-1}B)=l$, then $CA^{L-1}B$ is right invertible and an input
\begin{equation}\label{eq:ubar}
{u}_{k} \triangleq (CA^{L-1}B)^{\dagger} \left( y_{{\rm ref},k+L} - \left( CA^{k+L} x_0+\sum _{i=1} ^{k} C A^{i+L-1}B {u}_{k-i}\right)  \right),
\end{equation}
 may be constructed for every $x_0 \in \mathbb{R}^{n}$, $y_{{\rm ref},k}\in \mathbb{R}^l$, $L\leq k\leq r$, such that sequences $\bar{\mathcal{U}}_{r-1}$ and ${\mathcal{Y}}_{{\rm ref},r,L}$ satisfy \eqref{bigeq} for all $\mathcal{Y}_{{\rm ref,}r,L} \in \mathbb{R}^{(r-L+1)l}$. Therefore, $\mathfrak{T}_{r,L}(A,B,C,x_0)=\mathfrak{T}_{r,L}(A,B,C)=\mathbb{R}^{(r-L+1)l}$ and by Definition \ref{Trdef}, system (\ref{SS1}), (\ref{SS2}) with a delay $L$ is trackable.
\end{proof}

{While Theorem \ref{TrackThm1} relates closely to the results in earlier literature that discusses notions similar to trackability, the approach taken in this paper is based on simple but rigorous linear algebraic concepts.}  {The rank condition stated in the theorem only requires the knowledge of one particular Markov parameter to establish trackability and is therefore simple, while also being applicable to a general class of systems having an arbitrary $L$ delay. It is emphasized that trackability is a notion analogous to `functional reproducibility' by Brockett and Mesarovi\'c \cite{brockett1965reproducibility}. }
{To give the interested reader a flavor of developments reported in earlier literature, a brief overview of contributions relating to trackability is presented in Appendix \ref{Erl}.}
\subsection{Connections between trackability and methods for inversion-based control synthesis}
It may be noted that an open loop control input 
\begin{equation}\label{eq:ubar}
{u}_{k} \triangleq (CA^{L-1}B)^{\dagger} \left( y_{{\rm ref},k+L} - \left( CA^{k+L} x_0+\sum _{i=1} ^{k} C A^{i+L-1}B {u}_{k-i}\right)  \right),
\end{equation} may be used to track a trajectory specified by  $y_{{\rm ref},k+L}$ for an system with delay $L$ starting at initial condition $x_0$.  However, this control input may not be of much use in practical situations and a closed loop control action similar to \eqref{eq:ubar} should be defined as
{
\begin{equation}\label{ucir}
{u}_{k}\triangleq(CA^{L-1}B)^\dagger \left( y_{{\rm ref},k+L} - CA^Lx_{k} \right),
\end{equation}
to be useful in practical situations.
 Note that, $\left( CA^{k+L} x_0+\sum _{i=1} ^{k} C A^{i+L-1}B \bar{u}_{k-i}\right)=CA^L x_{k-L}$ and estimates of  $x_{k}$ }can be obtained from output measurements by using a suitable state estimator or a filter, if the system is state observable. 
 
 It is worth noting here that the control scheme in \cite{chavan2015command} uses a control input similar to \eqref{ucir}  with an unbiased minimum variance filter for estimating the states for systems with $L=1$. It may be also be noted that \eqref{ucir} is a special case of the input 
 \begin{equation}\label{uffd}
 u_k =\begin{bmatrix}
c_1 A^{\rho_1}B\\
c_2 A^{\rho_2}B\\
\vdots\\
c_l A^{\rho_l}B
\end{bmatrix}^{-1} \left[\begin{bmatrix}
y_{{\rm ref}_{1,k+\rho_1}}\\
y_{{\rm ref}_{2,k+\rho_2}}\\
\vdots\\
y_{{\rm ref}_{l,k+\rho_l}}
\end{bmatrix} - \begin{bmatrix}
c_1 A^{\rho_1}\\
c_2 A^{\rho_2}\\
\vdots\\
c_l A^{\rho_l}
\end{bmatrix} x_k\right],
 \end{equation}
 given in \cite{george1999stable},  which is a discrete-time analogue of  the control input
\begin{equation} \label{uff}
u_{\rm ff} (t) =\begin{bmatrix}
c_1 A^{\rho_1}B\\
c_2 A^{\rho_2}B\\
\vdots\\
c_l A^{\rho_l}B
\end{bmatrix}^{-1} \left[y_d ^{(\rho)} (t) - \begin{bmatrix}
c_1 A^{\rho_1}\\
c_2 A^{\rho_2}\\
\vdots\\
c_l A^{\rho_l}
\end{bmatrix} x(t)\right],
\end{equation}  for continuous-time systems developed in \cite{zou1999preview}. Note that, $y^{(\rho)}(t) \triangleq \begin{bmatrix}
\frac{{\rm d}^{(\rho_1)}y_1}{{\rm d}t^{(\rho_1)}} & \frac{{\rm d}^{(\rho_2)}y_2}{{\rm d}t^{\rho_2}}& \cdots &\frac{{\rm d}^{(\rho_l)}y_1}{{\rm d}t^{\rho_l}}
\end{bmatrix}
$ and $\rho \triangleq \begin{bmatrix}
{\rho_1} & {\rho_2}& \cdots & {\rho_l}
\end{bmatrix}^{\top} $ is the relative degree. Also, $c_i$ represents the $i^{\rm th}$ row of the $C$ matrix.

It is readily seen that the control inputs \eqref{ucir}, \eqref{uffd}, and \eqref{uff} from the respective inversion-based control synthesis methods are similar to each other, which in turn are similar to \eqref{eq:ubar}. It is also interesting to note that each of these assume right invertibility of a matrix consisting of Markov parameters, which translates to it having full row rank. Theorem \ref{TrackThm1} precisely captures this aspect. Furthermore, this commonality not only establishes a link between inversion-based control synthesis and trackability, but also establishes the adoptability of the idea of trackability across continuous-time and discrete-time systems owing to \eqref{uffd} and \eqref{uff}.
\section{Untrackable Systems and Untrackable Reference Commands}\label{sec:untrk}
In this section, we discuss systems with {${\rm rank}(CA^{L-1}B)< l$}.  These systems can only follow reference command sequences that are in the trackable set $\mathfrak{T}_{r,L}(A,B,C,x_0)$ of the system.  Let us, for simplicity, assume that $x_0 =0$. Then, $\mathfrak{T}_{r,L}(A,B,C,x_0)=\mathcal{R}(M_{r})$ and there exist control inputs for tracking reference commands  $\mathcal{Y}_{{\rm ref,}r,L} \in \mathcal{R}(M_{r})$.
To discuss tracking of reference command sequences that are not in $\mathcal{R}(M_{r})$, we first define the orthogonal projection of $\mathcal{Y}_{{\rm ref,}r,L}$ on $\mathcal{R}(M_{r})$ as
\begin{equation}\label{projeq}
 \mathcal{Y}_{{\rm ref},r,L} \Pi_{\mathcal{R}({M_{r}})} \triangleq M_{r}(M_{r}^{{\top}} M_{r})^{\dagger}M_{r}^{{\top}} \mathcal{Y}_{{\rm ref,}r,L}.
\end{equation}
 A sequence projected on $\mathcal{R}(M_{r})$ lies in $\mathfrak{T}_{r,L} (A,B,C,0)=\mathcal{R}(M_{r})$ and if the system is initialized at $x_0=0$, a control input for exactly tracking this projected sequence exists. We now state some facts about the projected sequence defined in (\ref{projeq}).
\begin{fact} The following statements are true:
\begin{enumerate}
\item $\mathcal{Y}_{{\rm ref,}r,L}\Pi_{\mathcal{R}(M_{r})}$ is orthogonal to $\mathcal{N}(M_{r} ^{\top})$.
\item For any $\mathcal{Y}_{{\rm ref},r,L}$, $\lVert \mathcal{Y}_{{\rm ref},r,L} \rVert_2 \geqslant \lVert \mathcal{Y}_{{\rm ref},r,L} \Pi_{\mathcal{R}(M_{r})} \rVert_2$.
\item $\mathcal{Y}_{{\rm ref,}r,L}-\mathcal{Y}_{{\rm ref,}r,L}\Pi_{\mathcal{R}(M_{r})} \in \mathcal{N}(M_{r} ^{\top})$.
\item $\mathcal{Y}_{{\rm ref,}r,L}\Pi_{\mathcal{R}(M_{r})}$ lies in $\mathfrak{T}_{r,L} (A,B,C,0)=\mathcal{R}(M_{r})$ and has smallest 2-norm distance from $\mathcal{Y}_{{\rm ref,}r,L}$ amongst all trajectories in $\mathfrak{T}_{r,L} (A,B,C,0)=\mathcal{R}(M_{r})$.
\end{enumerate} 
\end{fact}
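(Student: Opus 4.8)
The plan is to recognize the matrix $P \triangleq M_{r}(M_{r}^{\top}M_{r})^{\dagger}M_{r}^{\top}$ appearing in \eqref{projeq} as the \emph{orthogonal projector} onto $\mathcal{R}(M_{r})$, and then to read off all four statements from the elementary geometry of orthogonal projection. First I would verify the three defining properties of such a projector: that $P$ is symmetric ($P^{\top}=P$), idempotent ($P^{2}=P$), and satisfies $\mathcal{R}(P)=\mathcal{R}(M_{r})$ with $\mathcal{N}(P)=\mathcal{N}(M_{r}^{\top})$. Symmetry is immediate since $(M_{r}^{\top}M_{r})^{\dagger}$ is symmetric, while idempotence and the range identity follow from the Moore--Penrose relation $M_{r}^{\top}M_{r}(M_{r}^{\top}M_{r})^{\dagger}M_{r}^{\top}=M_{r}^{\top}$, which also yields the equivalent compact form $P=M_{r}M_{r}^{\dagger}$. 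Once $P$ is identified as the orthogonal projector, the complementary operator $I-P$ is the orthogonal projector onto $\mathcal{R}(M_{r})^{\perp}=\mathcal{N}(M_{r}^{\top})$, the last equality being the fundamental relation between the column space of $M_{r}$ and the null space of $M_{r}^{\top}$.

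With $P$ in hand the four items are short. For item~1, $P\mathcal{Y}_{{\rm ref},r,L}\in\mathcal{R}(M_{r})$, and since $\mathcal{R}(M_{r})=\mathcal{N}(M_{r}^{\top})^{\perp}$, this vector is orthogonal to every element of $\mathcal{N}(M_{r}^{\top})$. Item~3 asserts that $\mathcal{Y}_{{\rm ref},r,L}-P\mathcal{Y}_{{\rm ref},r,L}=(I-P)\mathcal{Y}_{{\rm ref},r,L}$ lies in $\mathcal{N}(M_{r}^{\top})$, which I would confirm directly by left-multiplying with $M_{r}^{\top}$ and invoking the same Moore--Penrose relation to obtain $M_{r}^{\top}(I-P)=0$. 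Items~1 and~3 together decompose $\mathcal{Y}_{{\rm ref},r,L}$ into two orthogonal pieces, one in $\mathcal{R}(M_{r})$ and one in $\mathcal{N}(M_{r}^{\top})$; item~2 is then merely the Pythagorean identity $\lVert\mathcal{Y}_{{\rm ref},r,L}\rVert_{2}^{2}=\lVert P\mathcal{Y}_{{\rm ref},r,L}\rVert_{2}^{2}+\lVert(I-P)\mathcal{Y}_{{\rm ref},r,L}\rVert_{2}^{2}$, from which the inequality follows by discarding the nonnegative second term. For item~4, the membership $P\mathcal{Y}_{{\rm ref},r,L}\in\mathcal{R}(M_{r})=\mathfrak{T}_{r,L}(A,B,C,0)$ is already established (and is consistent with Fact~\ref{fact_4}); to prove optimality I would take an arbitrary $z\in\mathcal{R}(M_{r})$, write $\mathcal{Y}_{{\rm ref},r,L}-z=(\mathcal{Y}_{{\rm ref},r,L}-P\mathcal{Y}_{{\rm ref},r,L})+(P\mathcal{Y}_{{\rm ref},r,L}-z)$, note that the first summand lies in $\mathcal{N}(M_{r}^{\top})$ by item~3 while the second lies in $\mathcal{R}(M_{r})$, and apply Pythagoras once more to conclude $\lVert\mathcal{Y}_{{\rm ref},r,L}-z\rVert_{2}^{2}\geq\lVert\mathcal{Y}_{{\rm ref},r,L}-P\mathcal{Y}_{{\rm ref},r,L}\rVert_{2}^{2}$, with equality precisely when $z=P\mathcal{Y}_{{\rm ref},r,L}$.

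I expect the only delicate point to be this very first step. Because we are in the untrackable regime ${\rm rank}(M_{r})<(r-L+1)l$, the matrix $M_{r}$ need not have full column rank, so $M_{r}^{\top}M_{r}$ may be singular and one cannot replace $(M_{r}^{\top}M_{r})^{\dagger}$ by an ordinary inverse. The argument must therefore rest on the Moore--Penrose identities rather than on the full-rank normal-equations formula; keeping this bookkeeping correct is the main obstacle, and it is exactly why the pseudoinverse appears in \eqref{projeq}. Everything downstream is then routine inner-product-space geometry that holds verbatim irrespective of the rank of $M_{r}$.
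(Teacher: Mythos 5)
Your proof is correct and takes essentially the same approach as the paper's: both rest on the orthogonal decomposition of $\mathcal{Y}_{{\rm ref},r,L}$ into its components along $\mathcal{R}(M_{r})$ and $\mathcal{N}(M_{r}^{\top})$, after which all four items follow from routine Pythagorean geometry. Your only addition is the explicit Moore--Penrose verification that the matrix in \eqref{projeq} is indeed the orthogonal projector even when $M_{r}$ lacks full column rank, a point the paper's proof treats as understood.
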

\begin{proof}
The proofs are trivial once it is noted that $\mathcal{Y}_{{\rm ref},r,L}=\mathcal{Y}_{{\rm ref},r,L} \Pi_{{\mathcal{R}(M_{r})}}+\mathcal{Y}_{{\rm ref},r,L} \Pi_{{\mathcal{N}(M_{r} ^{\top})}}$ and $\mathcal{Y}_{{\rm ref},r,L} \Pi_{{\mathcal{R}(M_{r})}} \perp \mathcal{Y}_{{\rm ref},r,L} \Pi_{{\mathcal{N}(M_{r} ^{\top})}}$.
\end{proof}

\begin{fact}
For the system (\ref{SS1}), (\ref{SS2}), let ${\rm rank}(M_{r})=(r-L+1)l$. Then, $\mathcal{Y}_{{\rm ref,}r,L}\Pi_{\mathcal{R}(M_{r})}=\mathcal{Y}_{{\rm ref,}r,L}$, for all $\mathcal{Y}_{{\rm ref,}r,L}$.
\end{fact}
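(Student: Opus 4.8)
The plan is to observe that the full-row-rank hypothesis forces the column space of $M_r$ to coincide with the entire ambient output space, after which the orthogonal projection acts as the identity. First I would recall that $M_r \in \mathbb{R}^{(r-L+1)l \times rm}$, so its column space $\mathcal{R}(M_r)$ is a subspace of $\mathbb{R}^{(r-L+1)l}$, and that $\dim \mathcal{R}(M_r) = {\rm rank}(M_r)$. The hypothesis ${\rm rank}(M_r) = (r-L+1)l$ then says that $\mathcal{R}(M_r)$ is a subspace of $\mathbb{R}^{(r-L+1)l}$ whose dimension equals that of the whole space, and hence $\mathcal{R}(M_r) = \mathbb{R}^{(r-L+1)l}$.

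Given this, two equivalent closing arguments are available. The first is simply that the orthogonal projection onto the entire space $\mathbb{R}^{(r-L+1)l}$ fixes every vector, so $\mathcal{Y}_{{\rm ref},r,L}\Pi_{\mathcal{R}(M_r)} = \mathcal{Y}_{{\rm ref},r,L}$ for every reference sequence. The second, which I would favor because it reuses machinery already in place, is to invoke the preceding fact, namely that the residual $\mathcal{Y}_{{\rm ref},r,L} - \mathcal{Y}_{{\rm ref},r,L}\Pi_{\mathcal{R}(M_r)}$ lies in $\mathcal{N}(M_r^{\top})$. Since ${\rm rank}(M_r^{\top}) = {\rm rank}(M_r) = (r-L+1)l$, the rank--nullity theorem gives $\dim \mathcal{N}(M_r^{\top}) = (r-L+1)l - (r-L+1)l = 0$, so $\mathcal{N}(M_r^{\top}) = \{0\}$ and the residual vanishes identically, yielding the claim.

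There is no genuine obstacle here; the statement is essentially a restatement of the full-row-rank condition in the language of projections. The only point requiring a moment's care is keeping track of which dimension is the ambient one: the projector $\Pi_{\mathcal{R}(M_r)}$ projects onto a subspace of the $(r-L+1)l$-dimensional output space, and it is precisely this output-space dimension that the rank saturates, as opposed to the generally larger input-space dimension $rm$. Once that bookkeeping is settled, the conclusion is immediate.
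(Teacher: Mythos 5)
Your proof is correct. The paper states this fact without any proof, treating it as immediate, and your argument --- full row rank forces $\mathcal{R}(M_{r})=\mathbb{R}^{(r-L+1)l}$, so the orthogonal projector in (\ref{projeq}) acts as the identity (equivalently, ${\rm rank}(M_{r}^{\top})=(r-L+1)l$ gives $\mathcal{N}(M_{r}^{\top})=\lbrace 0 \rbrace$ by rank--nullity, so the residual vanishes) --- is exactly the standard justification, and your second closing argument reuses the same orthogonal-decomposition machinery the paper employs in the proof of the preceding fact.
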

Note that, the projection defined in (\ref{projeq}) is one projection out of the many possible projections on ${\mathcal{R}(M_{r})}$. 

%We now discuss tracking of reference commands for untrackable systems, with the help of an example. 
\begin{proposition}\label{proput}
For every {$\mathcal{Y}_{{\rm ref},r,L} \in \mathbb{R}^{(r-L+1)l}$,} system (\ref{SS1}), (\ref{SS2}) can track the sequence $\mathcal{Y}_{{\rm ref},r,L}\Pi_{\mathcal{R}(M_{r})} \in \mathcal{R}(M_{r})=\mathfrak{T}(A,B,C,0)$ exactly if the system is initialized at $x_0=0$. 
\end{proposition}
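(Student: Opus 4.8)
The plan is to reduce the claim to two elementary observations: first, that the projected command $\mathcal{Y}_{{\rm ref},r,L}\Pi_{\mathcal{R}(M_{r})}$ automatically lies in $\mathcal{R}(M_{r})$, and second, that membership in $\mathcal{R}(M_{r})$ is, when $x_0=0$, exactly the condition for exact tracking through the batch equation \eqref{bigeq}. Nothing beyond the definitions and the facts already recorded in this subsection is required.

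First I would inspect the defining formula \eqref{projeq}. Setting $v \triangleq (M_{r}^{\top} M_{r})^{\dagger} M_{r}^{\top} \mathcal{Y}_{{\rm ref},r,L}$, the projected sequence is $\mathcal{Y}_{{\rm ref},r,L}\Pi_{\mathcal{R}(M_{r})} = M_{r} v$, i.e.\ a linear combination of the columns of $M_{r}$; hence it belongs to $\mathcal{R}(M_{r})$ by inspection, with no further argument needed.

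Next I would invoke the fact recorded earlier that, when $x_0 = 0$, the trackable set collapses to $\mathfrak{T}_{r,L}(A,B,C,0) = \mathcal{R}(M_{r})$ (equivalently, Fact \ref{fact_4} with $\Gamma_{r} x_0 = 0$). By Definition \ref{TrSSdef}, an element lies in $\mathfrak{T}_{r,L}(A,B,C,0)$ precisely when there exists a control sequence $\mathcal{U}_{r-L}$ satisfying \eqref{bigeq} with $x_0 = 0$, that is $\mathcal{Y}_{{\rm ref},r,L}\Pi_{\mathcal{R}(M_{r})} = M_{r}\mathcal{U}_{r-L}$. Since the projected command lies in $\mathcal{R}(M_{r})$, such a $\mathcal{U}_{r-L}$ exists and exact tracking follows; one may even exhibit it explicitly by taking $\mathcal{U}_{r-L} = v$, for which $M_{r}\mathcal{U}_{r-L}$ reproduces the projected sequence by construction.

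The argument has essentially no hard step: it is immediate from the algebraic form of \eqref{projeq} together with the already-established description of the trackable set at $x_0=0$. The only point deserving a line of care is confirming that the operator in \eqref{projeq} maps \emph{into} $\mathcal{R}(M_{r})$; but since it is literally $M_{r}$ post-multiplied by a vector, the range membership is manifest. The true ``obstacle'' here is therefore purely bookkeeping, namely keeping the delayed indexing $(r,L)$ and the postfix projection symbol consistent with Definition \ref{TrSSdef} and equation \eqref{bigeq} throughout.
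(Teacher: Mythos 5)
Your proposal is correct and follows essentially the same route as the paper, which presents the proposition as an immediate consequence of the preceding facts: the projected sequence lies in $\mathcal{R}(M_{r})=\mathfrak{T}_{r,L}(A,B,C,0)$ when $x_0=0$, and by Definition \ref{TrSSdef} membership in that set is precisely the existence of a control sequence $\mathcal{U}_{r-L}$ satisfying \eqref{bigeq}. Your explicit witness $\mathcal{U}_{r-L}=(M_{r}^{\top}M_{r})^{\dagger}M_{r}^{\top}\mathcal{Y}_{{\rm ref},r,L}$ merely makes concrete what the paper leaves implicit.
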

Proposition \ref{proput} can elaborated with the help of an example. 
\begin{example} \label{exampleP1}
Consider a system \eqref{SS1}, \eqref{SS2} with matrices \[ 
\begin{split} A=\begin{bmatrix}
         0    &1       & 0       & 0\\
         0         &0    &1         &0\\
         0         &0         &0    &1\\
0.05 &0.1 &0.15 &0.2
\end{bmatrix} ,~B=\begin{bmatrix}
     1     &0\\
     0     &1\\
     0     &8\\
     1     &8
\end{bmatrix}\\ \text{~and~}C=\begin{bmatrix}
     1     &0     &0     &0\\
     0    &0     &1     &0\\
     0    &0     &0     &1
\end{bmatrix} \end{split}\]
having $L=1$.
The system is state controllable as well as output controllable. However, since ${\rm rank}(CA^{L-1}B)=2< l$, the system is untrackable and cannot follow arbitrary reference commands $\mathcal{Y}_{{\rm ref,}r,L} \notin \mathfrak{T}_{r,L}(A,B,C,x_0)$. It can however follow reference commands projected on $\mathcal{R}(M_{r})$ exactly as shown in Fig. \ref{projscl} when initialized at zero initial conditions. In case of non-zero initial conditions, exact tracking cannot be guaranteed, however, in asymptotically stable systems asymptotic tracking is possible.
\end{example}
\begin{proposition}\label{proplCA^{L-1}B}
 Let ${\rm rank}(CA^{L-1}B)=\tilde{l} < l$ and $\tilde{C} \in \mathbb{R}^{\tilde{l} \times n}$ be a matrix containing $\tilde{l}$ rows of $C$ such that ${\rm rank}(\tilde{C}A^{L-1}B)=\tilde{l}$. Then, the system with matrices $A$, $B$, $\tilde{C}$ is trackable.
\end{proposition}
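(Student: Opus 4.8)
The plan is to deduce the result directly from the rank test in Theorem \ref{TrackThm1}. That theorem characterises trackability of a triple by a full-row-rank condition on its first nonzero Markov parameter, so it suffices to check that the reduced triple $(A,B,\tilde C)$ satisfies hypothesis \ref{TT3}. Since $\tilde C\in\mathbb{R}^{\tilde l\times n}$ has $\tilde l$ outputs, trackability of $(A,B,\tilde C)$ is equivalent to ${\rm rank}(\tilde C A^{\tilde L-1}B)=\tilde l$, where $\tilde L$ denotes the delay of the reduced system. The given hypothesis furnishes exactly this rank equality, but at the exponent $L-1$; hence the one thing that must be established before Theorem \ref{TrackThm1} can be invoked is that the reduced delay $\tilde L$ coincides with the original delay $L$.

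First I would verify that $\tilde L = L$. Because $\tilde C$ is built from a selection of $\tilde l$ rows of $C$, for every $i$ the matrix $\tilde C A^{i-1}B$ is precisely the corresponding row-selection of $CA^{i-1}B$; that is, row selection commutes with the products defining the Markov parameters. By the Fact recording the delay structure of the original system, $CA^{i-1}B=0$ for $1\leq i\leq L-1$, so the selected rows vanish as well and $\tilde C A^{i-1}B=0$ for $1\leq i\leq L-1$. On the other hand $\tilde l\geq 1$ (a rank of $0$ would force $CA^{L-1}B=0$, contradicting its defining property), and the hypothesis ${\rm rank}(\tilde C A^{L-1}B)=\tilde l$ then forces $\tilde C A^{L-1}B\neq 0$. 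Consequently the smallest index at which the reduced Markov parameter is nonzero is exactly $L$, i.e.\ $\tilde L=L$.

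With $\tilde L=L$ in hand, the hypothesis ${\rm rank}(\tilde C A^{L-1}B)=\tilde l$ is precisely statement \ref{TT3} of Theorem \ref{TrackThm1} written for the triple $(A,B,\tilde C)$, whose output dimension is $\tilde l$. The implication \ref{TT3} $\Rightarrow$ \ref{TT1} then yields that $(A,B,\tilde C)$ is trackable, completing the argument. The only non-automatic step is the verification $\tilde L=L$; everything else is a direct citation of the rank test. I therefore expect the delay-preservation observation to be the sole point requiring care, and it reduces to the elementary fact that passing to a subset of output rows can neither create an earlier nonzero Markov parameter nor destroy the existing one at index $L$.
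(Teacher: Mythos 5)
Your proof is correct. Note that the paper itself states this proposition without any proof, following it only with the illustrative example in which the third row of $C$ is deleted; your argument supplies exactly the justification the authors leave implicit, namely reduction to the rank test of Theorem \ref{TrackThm1}. You are also right to flag the delay-preservation step as the only point requiring care: since $\tilde{C}A^{i-1}B$ is a row selection of $CA^{i-1}B$, the vanishing of $CA^{i-1}B$ for $1\leq i\leq L-1$ forces $\tilde{C}A^{i-1}B=0$ on that range, while ${\rm rank}(\tilde{C}A^{L-1}B)=\tilde{l}\geq 1$ guarantees $\tilde{C}A^{L-1}B\neq 0$, so the reduced triple has the same delay $L$ and statement \ref{TT3} of the theorem applies verbatim with output dimension $\tilde{l}$.
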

Deleting the third row of $C$ matrix in Example \ref{exampleP1}, we obtain $\tilde{C}=\begin{bmatrix}
     1     &0     &0     &0\\
     0    &0     &1     &0
\end{bmatrix}$ and with $\tilde{C}B=\begin{bmatrix}
     1     &0 \\
     0    &8  
\end{bmatrix}.$ This system is trackable since ${\rm rank}(\tilde{C}B)=\tilde{l}$ and can follow arbitrary reference commands on the chosen $\tilde{l}$ outputs  as shown in Fig. \ref{projscl2}. For the simulation results shown in Fig. \ref{projscl} and Fig. \ref{projscl2}, control input in (\ref{eq:ubar}) was used.
\begin{figure}[!htbp]
\centering
\includegraphics[width=0.8\linewidth]{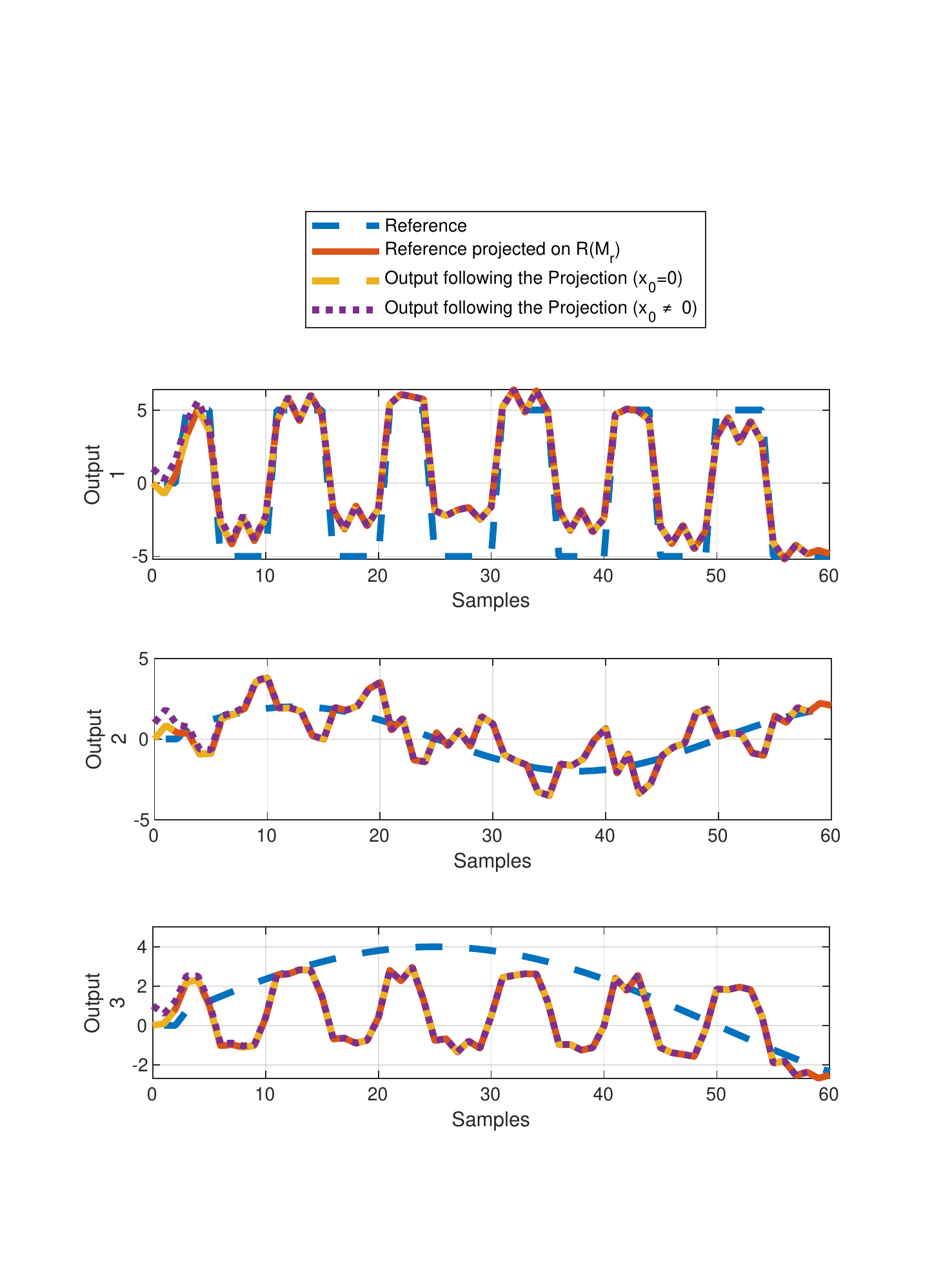}
\caption{Plots showing tracking of references projected on $\mathcal{R}(M_{r})$ when $\mathcal{Y}_{{\rm ref},r,L} \notin \mathcal{R}(M_{r})$. }
\label{projscl}
\end{figure}

\begin{figure}[!htbp]
\centering
\includegraphics[width=0.8\linewidth]{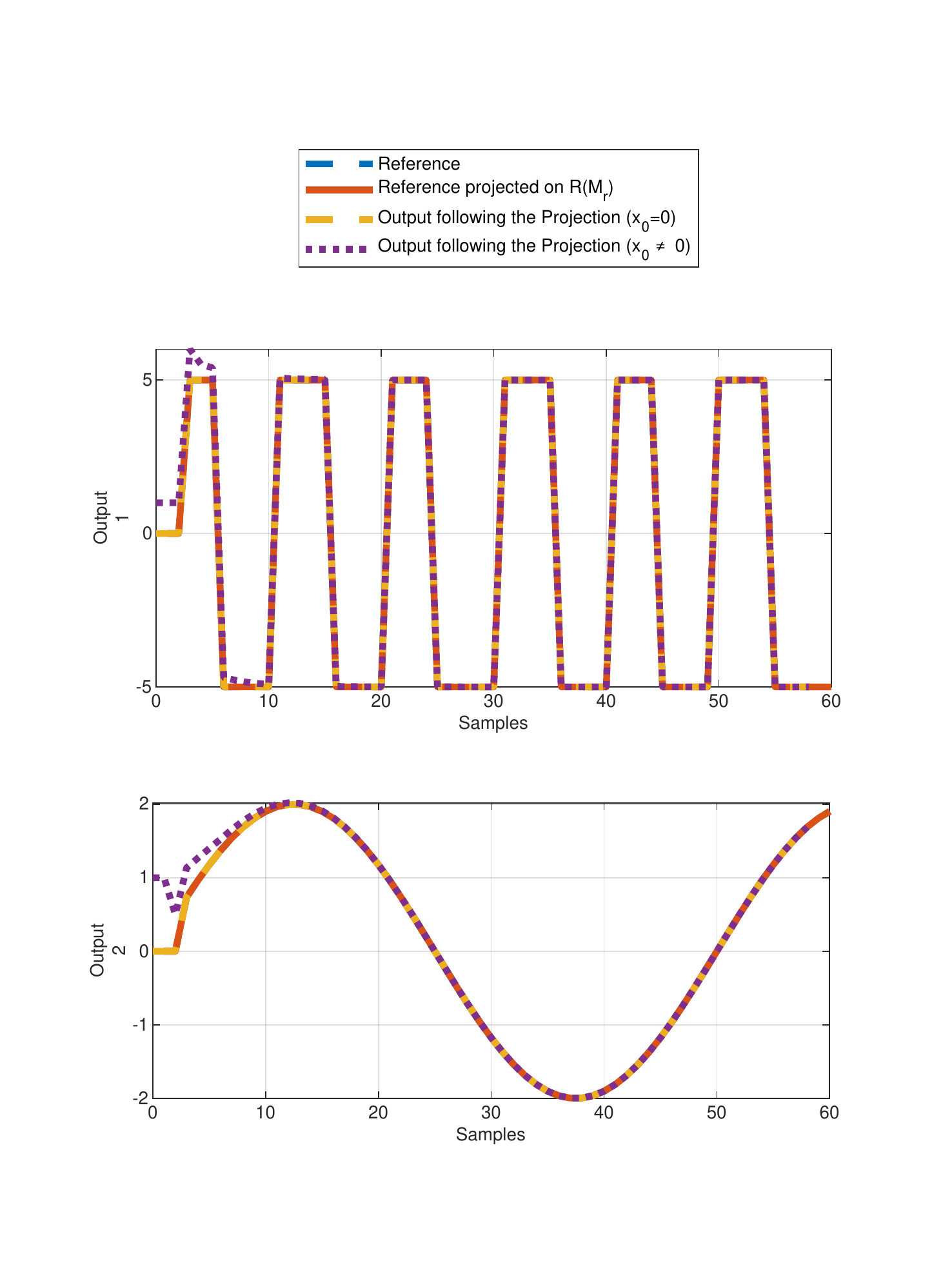}
\caption{Plots showing tracking of references projected on $\mathcal{R}(M_{r})$ when $\mathcal{Y}_{{\rm ref},r,L} \in \mathcal{R}(M_{r})$. }
\label{projscl2}
\end{figure}

\section{Trackability Indices}\label{sec:indices}
\subsection{Reference Command Trackability Index}
For untrackable systems, since not all reference commands can be tracked, it would be helpful to determine \textit{a priori}, whether a reference command sequence could be tracked exactly by the system or not. Furthermore, it will be helpful to know how close is the achievable tracking performance to a given reference command, if it is not in the trackable set of an untrackable system. To achieve this, we define an index
\begin{equation}
\theta({\mathcal{Y}_{{\rm ref},r,L}}) \triangleq 
\frac{\lVert \mathcal{Y}_{{\rm ref},r,L} \Pi_{\mathcal{R}_{(M_{r})}} \rVert _2}{\lVert \mathcal{Y}_{{\rm ref},r,L} \rVert _2},
\end{equation} called the \underline{reference command trackability index}, which indicates the achievable tracking performance of the system. We assume $x_0=0$ for simplicity.  We could have the following cases for  $\theta({\mathcal{Y}_{{\rm ref},r,L}})$:
\begin{itemize}
\item $\theta({\mathcal{Y}_{{\rm ref},r,L}})=1$: When the index $\theta({\mathcal{Y}_{{\rm ref},r,L}})$ for a certain reference command is 1, the reference command lies completely in the trackable set/space of the system, and theoretically, a zero tracking error can be achieved with an appropriate control input.

\item $\theta({\mathcal{Y}_{{\rm ref},r,L}})<1$:  When $\theta({\mathcal{Y}_{{\rm ref},r,L}})<  1$ for a given reference command, $\mathcal{Y}_{{\rm ref},r,L}$ is not in the trackable space/set and exact tracking of all the $l$ components of the reference command cannot be achieved, irrespective of control input used. In other words, minimum achievable tracking error can never be zero.
\item $\theta({\mathcal{Y}_{{\rm ref},r,L}})=0$: When the reference commands lie completely in $\mathcal{N}(M_{r} ^{\top}) = \mathcal{R}(M_{r})^\perp$, which is perpendicular to the trackable space $\mathcal{R}(M_{r})$, the reference command trackability index is zero. The smallest achievable normalized tracking error in this case is larger than the smallest achievable normalized tracking error for all other reference commands in the neighborhood of this reference command, that do not lie exactly in the right null space of $M_{r}$.
\end{itemize}

It may be misconstrued in case of $\theta({\mathcal{Y}_{{\rm ref},r,L}})=0$ that exact tracking of reference commands is not at all possible since the index is zero. However,  exact tracking of some components of the reference commands in the right null space of $M_{r}$ is still possible when using appropriate control inputs, while still having larger normalized error than other reference commands in neighborhood. Consider for example the system \eqref{SS1}, \eqref{SS2} with matrices  \[ 
\begin{split} A=\begin{bmatrix}
         0    &1       & 0       & 0\\
         0         &0    &1         &0\\
         0         &0         &0    &1\\
0.05 &0.1 &0.15 &0.2
\end{bmatrix} ,~B=\begin{bmatrix}
     1     &0\\
     0     &0.8\\
     0     &0.3\\
    0.2     &0.1
\end{bmatrix}\\ \text{~and~}C=\begin{bmatrix}
     1     &0     &0     &0\\
     0    &0     &1     &0\\
     0    &0     &0     &1
\end{bmatrix}. \end{split}\]
The system is not trackable, since $l=3$, and ${\rm rank}(CA^{L-1}B)=2<l$. Hence, $\mathcal{N}(M_{r} ^{\top})$ is not empty. Fig. \ref{P3} shows the outputs of the system for tracking the reference commands in $\mathcal{N}(M_{r} ^{\top})$ making use of control input \eqref{eq:ubar}, and it can be seen that components 2 and 3 of the reference command are tracked exactly while having a large deviation in component 1.%  Note that, $x_0=0$. 
\begin{figure}[!htbp]
\centering
\includegraphics[width=\linewidth]{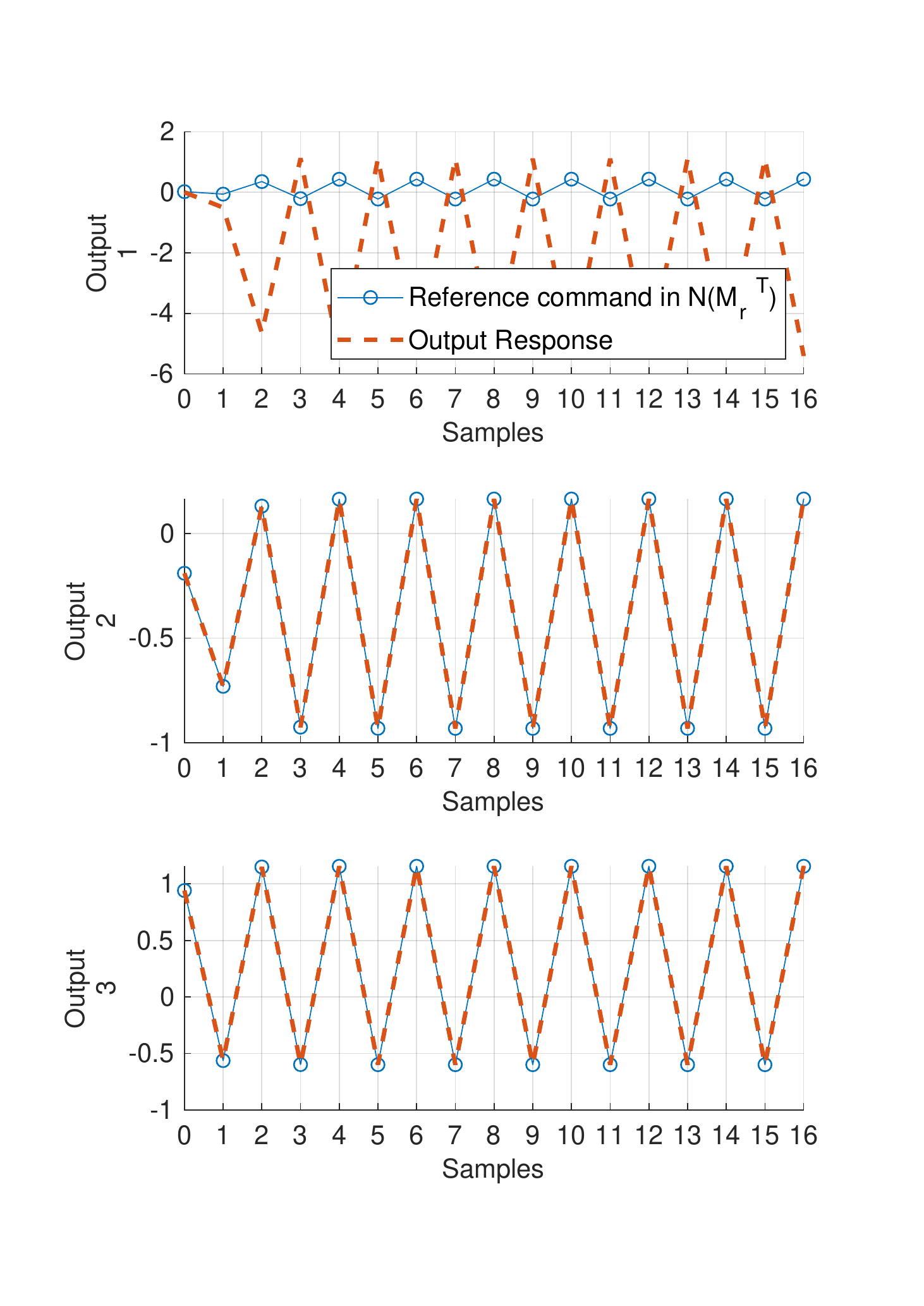}
\caption{Plots showing tracking performances  for $\mathcal{Y}_{{\rm ref},r,L} \in \mathcal{N}(M_{r} ^{\top})$. }
\label{P3}
\end{figure}

We now state the following fact about reference commands in the right null space of $M_{r}$. 
\begin{fact} \label{Yref0}
Let $\mathcal{Y}_{{\rm ref},r,L} \in \mathcal{N}(M_{r} ^{\top})$ and $x_0=0$. Then, for all $\mathcal{U}_{r-L} \in \mathbb{R}^{rm}$,
\[ \lVert \mathcal{E}_{r,L} \rVert_2 \geq \lVert \mathcal{Y}_{{\rm ref},r,L} \rVert_2, \]
where $ \mathcal{E}_{r,L} =  \mathcal{Y}_{{\rm ref},r,L} -  M_{r} \mathcal{U}_{r-L} $.
\end{fact}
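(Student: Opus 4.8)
The plan is to treat this as an orthogonality argument built directly on the fundamental-subspace relation for $M_r$, followed by the Pythagorean identity. First I would record that, because $x_0=0$, the batch equation \eqref{bigeq} collapses to $\mathcal{Y}_{r,L}=M_r\mathcal{U}_{r-L}$, so every achievable output of the system lies in the column space $\mathcal{R}(M_r)$, and the error vector under an arbitrary control input is $\mathcal{E}_{r,L}=\mathcal{Y}_{{\rm ref},r,L}-M_r\mathcal{U}_{r-L}$. The entire content of the fact is then that the unreachable component $\mathcal{Y}_{{\rm ref},r,L}$, being in $\mathcal{N}(M_r^{\top})$, cannot be reduced by any choice lying in $\mathcal{R}(M_r)$.

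The crux of the argument is verifying orthogonality. Since $\mathcal{Y}_{{\rm ref},r,L}\in\mathcal{N}(M_r^{\top})$, we have $M_r^{\top}\mathcal{Y}_{{\rm ref},r,L}=0$. For any admissible $\mathcal{U}_{r-L}\in\mathbb{R}^{rm}$, the achievable output $M_r\mathcal{U}_{r-L}$ lies in $\mathcal{R}(M_r)$, and the inner product between the two is
\[
\mathcal{Y}_{{\rm ref},r,L}^{\top}\,(M_r\mathcal{U}_{r-L})=\bigl(M_r^{\top}\mathcal{Y}_{{\rm ref},r,L}\bigr)^{\top}\mathcal{U}_{r-L}=0,
\]
so $\mathcal{Y}_{{\rm ref},r,L}$ and $M_r\mathcal{U}_{r-L}$ are orthogonal. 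This is exactly the standard fundamental-theorem-of-linear-algebra relation $\mathcal{N}(M_r^{\top})=\mathcal{R}(M_r)^{\perp}$, and it holds uniformly over all $\mathcal{U}_{r-L}$, which is what makes the bound control-input independent.

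With orthogonality in hand I would invoke the Pythagorean theorem on $\mathcal{E}_{r,L}=\mathcal{Y}_{{\rm ref},r,L}-M_r\mathcal{U}_{r-L}$, writing
\[
\lVert \mathcal{E}_{r,L}\rVert_2^2=\lVert \mathcal{Y}_{{\rm ref},r,L}\rVert_2^2+\lVert M_r\mathcal{U}_{r-L}\rVert_2^2\geq\lVert \mathcal{Y}_{{\rm ref},r,L}\rVert_2^2,
\]
where the inequality follows from nonnegativity of $\lVert M_r\mathcal{U}_{r-L}\rVert_2^2$. Taking square roots yields the claim, and one sees that equality is attained precisely when $M_r\mathcal{U}_{r-L}=0$, for instance at $\mathcal{U}_{r-L}=0$. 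There is no genuine obstacle here: the argument is a two-line orthogonal decomposition, and the only step requiring care is the bookkeeping that establishes $\mathcal{Y}_{{\rm ref},r,L}\perp M_r\mathcal{U}_{r-L}$ for \emph{every} $\mathcal{U}_{r-L}$, which is immediate once the membership $\mathcal{Y}_{{\rm ref},r,L}\in\mathcal{N}(M_r^{\top})$ is used.
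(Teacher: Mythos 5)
Your proof is correct and follows essentially the same route as the paper's: orthogonality of $\mathcal{Y}_{{\rm ref},r,L}$ to $M_r\mathcal{U}_{r-L}$, the Pythagorean identity, and nonnegativity of $\lVert M_r\mathcal{U}_{r-L}\rVert_2^2$. The only difference is that you explicitly verify the orthogonality via $M_r^{\top}\mathcal{Y}_{{\rm ref},r,L}=0$, which the paper simply asserts.
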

\begin{proof}
%From reverse triangular inequality, $\lVert  \mathcal{Y}_{{\rm ref},r,L} -  M_{r} \mathcal{U}_{r-L}  \rVert_2 \geq  | \lVert  \mathcal{Y}_{{\rm ref},r,L} \rVert_2 - \lVert   M_{r} \mathcal{U}_{r-L} \rVert_2 | $.  
Since $\mathcal{Y}_{{\rm ref},r,L} \perp M_{r} \mathcal{U}_{r-L}$ , we may write
\begin{equation*}
\begin{split}
\lVert  \mathcal{Y}_{{\rm ref},r,L} -  M_{r} \mathcal{U}_{r-L}  \rVert_2 ^2  &=  \lVert  \mathcal{Y}_{{\rm ref},r,L} \rVert_2 ^2  + \lVert M_{r} \mathcal{U}_{r-L} \rVert_2 ^2 ,\\
\lVert \mathcal{E}_{r,L}  \rVert_2 ^2 &=  \lVert  \mathcal{Y}_{{\rm ref},r,L} \rVert_2 ^2 + \lVert M_{r} \mathcal{U}_{r-L} \rVert_2 ^2,\\
\lVert \mathcal{E}_{r,L}  \rVert_2 ^2 & \geq  \lVert  \mathcal{Y}_{{\rm ref},r,L} \rVert_2 ^2,\\
\text{and}\\
\lVert \mathcal{E}_{r,L}  \rVert_2 & \geq  \lVert  \mathcal{Y}_{{\rm ref},r,L} \rVert_2,\\
\end{split}
\end{equation*}
for all  $\mathcal{U}_{r-L} \in \mathbb{R}^{rm}$.
\end{proof}
$ \mathcal{E}_{r,L} $ represents the tracking error and it is evident from Fact \ref{Yref0} that, irrespective of control input $\mathcal{U}_{r-L}$ used for tracking a reference command $\mathcal{Y}_{{\rm ref},r,L} \in \mathcal{N}(M_{r} ^{\top})$, the lowest achievable tracking error will have a norm  equal to $\lVert \mathcal{Y}_{{\rm ref},r,L} \rVert_2$.
{We now present a fact that gives an upper bound on the lowest achievable tracking error.
\begin{fact}\label{factTI}
For all $\mathcal{Y}_{{\rm ref},r,L} \in \mathbb{R}^{(r-L+1)l}$, $\|\mathcal{E}_{r,L}\|_2 ^2 = (1-\theta(\mathcal{Y}_{\rm ref,r}))\|\mathcal{Y}_{\rm ref,r} \|_2^2.$
\end{fact}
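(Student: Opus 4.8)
The plan is to recognize that the right-hand side is the \emph{minimum} achievable tracking error over all admissible control sequences, attained exactly when the control drives $M_{r}\mathcal{U}_{r-L}$ to the orthogonal projection of the reference onto $\mathcal{R}(M_{r})$, and then to combine the Pythagorean theorem with the definition of the index $\theta$.

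First I would fix $x_0=0$, as assumed throughout this subsection, so that $\mathcal{E}_{r,L}=\mathcal{Y}_{{\rm ref},r,L}-M_{r}\mathcal{U}_{r-L}$, and observe that as $\mathcal{U}_{r-L}$ ranges over $\mathbb{R}^{rm}$ the vector $M_{r}\mathcal{U}_{r-L}$ ranges over all of $\mathcal{R}(M_{r})$. Minimizing $\|\mathcal{E}_{r,L}\|_2$ is therefore the classical least-squares problem of locating the closest point of the subspace $\mathcal{R}(M_{r})$ to $\mathcal{Y}_{{\rm ref},r,L}$, whose solution is $M_{r}\mathcal{U}_{r-L}=\mathcal{Y}_{{\rm ref},r,L}\Pi_{\mathcal{R}(M_{r})}$ from \eqref{projeq}. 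This is precisely the content of the earlier fact stating that the projected sequence has the smallest $2$-norm distance from $\mathcal{Y}_{{\rm ref},r,L}$ among all trajectories in $\mathcal{R}(M_{r})$.

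Next I would evaluate the residual at this minimizer. Using the orthogonal decomposition $\mathcal{Y}_{{\rm ref},r,L}=\mathcal{Y}_{{\rm ref},r,L}\Pi_{\mathcal{R}(M_{r})}+\mathcal{Y}_{{\rm ref},r,L}\Pi_{\mathcal{N}(M_{r}^{\top})}$, the minimal residual equals $\mathcal{E}_{r,L}=\mathcal{Y}_{{\rm ref},r,L}\Pi_{\mathcal{N}(M_{r}^{\top})}$, and since the two projection components are mutually orthogonal, the Pythagorean identity gives $\|\mathcal{Y}_{{\rm ref},r,L}\|_2^2=\|\mathcal{Y}_{{\rm ref},r,L}\Pi_{\mathcal{R}(M_{r})}\|_2^2+\|\mathcal{E}_{r,L}\|_2^2$. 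Rearranging yields $\|\mathcal{E}_{r,L}\|_2^2=\|\mathcal{Y}_{{\rm ref},r,L}\|_2^2-\|\mathcal{Y}_{{\rm ref},r,L}\Pi_{\mathcal{R}(M_{r})}\|_2^2$.

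Finally I would substitute the definition of the index. Since $\theta(\mathcal{Y}_{{\rm ref},r,L})^2=\|\mathcal{Y}_{{\rm ref},r,L}\Pi_{\mathcal{R}(M_{r})}\|_2^2/\|\mathcal{Y}_{{\rm ref},r,L}\|_2^2$, we have $\|\mathcal{Y}_{{\rm ref},r,L}\Pi_{\mathcal{R}(M_{r})}\|_2^2=\theta^2\,\|\mathcal{Y}_{{\rm ref},r,L}\|_2^2$, whence $\|\mathcal{E}_{r,L}\|_2^2=(1-\theta^2)\,\|\mathcal{Y}_{{\rm ref},r,L}\|_2^2$. The argument itself is short and presents no genuine obstacle, being essentially one application of orthogonality; the only point requiring care is the power of $\theta$. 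Because $\theta$ is defined as a ratio of norms rather than of squared norms, the factor that naturally emerges is $1-\theta^2$, so I would flag that the displayed $1-\theta$ should read $1-\theta^2$ (equivalently, the identity is the Pythagorean relation $\|\mathcal{E}_{r,L}\|_2^2+\theta^2\|\mathcal{Y}_{{\rm ref},r,L}\|_2^2=\|\mathcal{Y}_{{\rm ref},r,L}\|_2^2$).
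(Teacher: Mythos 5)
Your proof is correct and follows essentially the same route as the paper's: decompose $\mathcal{Y}_{{\rm ref},r,L}$ into its components on $\mathcal{R}(M_{r})$ and $\mathcal{N}(M_{r}^{\top})$, take $\mathcal{E}_{r,L}$ to be the residual at the optimal control input (the one achieving $M_{r}\mathcal{U}_{r-L}=\mathcal{Y}_{{\rm ref},r,L}\Pi_{\mathcal{R}(M_{r})}$), and use orthogonality; the paper expands the square explicitly where you invoke the Pythagorean identity, which is the same computation.

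The important point is that your flag on the power of $\theta$ is right, and it identifies a genuine error in the stated fact. Since $\theta(\mathcal{Y}_{{\rm ref},r,L})$ is defined as the ratio of norms $\lVert \mathcal{Y}_{{\rm ref},r,L}\Pi_{\mathcal{R}(M_{r})}\rVert_2 / \lVert \mathcal{Y}_{{\rm ref},r,L}\rVert_2$, orthogonality gives $\lVert\mathcal{E}_{r,L}\rVert_2^2=\lVert\mathcal{Y}_{{\rm ref},r,L}\rVert_2^2-\lVert\mathcal{Y}_{{\rm ref},r,L}\Pi_{\mathcal{R}(M_{r})}\rVert_2^2=\bigl(1-\theta(\mathcal{Y}_{{\rm ref},r,L})^2\bigr)\lVert\mathcal{Y}_{{\rm ref},r,L}\rVert_2^2$, so the factor must be $1-\theta^2$, not $1-\theta$ (the two agree only in the degenerate cases). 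The paper's own proof does not rescue the statement: its final display asserts that the squared error equals $(1-\theta)\lVert\mathcal{Y}_{{\rm ref},r,L}\Pi_{\mathcal{R}(M_{r})}\rVert_2^2$, i.e.\ the factor $1-\theta$ multiplying the squared norm of the \emph{projection} rather than of the reference, which agrees neither with the fact as stated nor with the correct Pythagorean identity; it is an internally inconsistent chain (with an unbalanced parenthesis), whereas your derivation is the correct repair. You also made explicit a hypothesis the statement leaves implicit, namely that $\mathcal{E}_{r,L}$ must be read as the minimal achievable error rather than the error for an arbitrary $\mathcal{U}_{r-L}$, for which no such equality could hold.
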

\begin{proof}
Since $\mathcal{Y}_{\rm ref,r}=\mathcal{Y}_{\rm ref,r}\Pi_{\mathcal{R}({M_{r}})} + \mathcal{Y}_{\rm ref,r}\Pi_{\mathcal{N}({M_{r}}^{\top})}$ and  $\mathcal{Y}_{\rm ref,r}\Pi_{\mathcal{N}({M_{r}}^{\top})}^{\top}\mathcal{Y}_{\rm ref,r}\Pi_{\mathcal{R}({M_{r}})} =0,$ we may write
$\mathcal{Y}_{\rm ref,r}^{\top} \mathcal{Y}_{\rm ref,r}\Pi_{\mathcal{R}({M_{r}})} = \|\mathcal{Y}_{\rm ref,r}\Pi_{\mathcal{R}({M_{r}})}\|_2^2. $ Also, since
$\|\mathcal{Y}_{\rm ref,r}-\mathcal{Y}_{\rm ref,r}\Pi_{\mathcal{R}({M_{r}})}\|_2^2 = \|\mathcal{Y}_{\rm ref,r}\|_2^2 - 2 \mathcal{Y}_{\rm ref,r}^{\top} \mathcal{Y}_{\rm ref,r}\Pi_{\mathcal{R}({M_{r}})}+ \|\mathcal{Y}_{\rm ref,r} \Pi_{\mathcal{R}({M_{r}})}\|_2^2$, we have, 
$\|\mathcal{Y}_{\rm ref,r}-\mathcal{Y}_{\rm ref,r}\Pi_{\mathcal{R}({M_{r}})}\|_2^2 = (1-\theta(\mathcal{Y}_{\rm ref,r})\|\mathcal{Y}_{\rm ref,r}\Pi_{\mathcal{R}({M_{r}})}\|_2^2=\|\mathcal{Y}_{\rm ref,r}-{M_{r}}\mathcal{U}_{r-L}\|_2^2 =\|\mathcal{E}_{r,L}\|_2 ^2$.
%The above equation indicates there exists an input which makes the tracking error exactly equal to $(1-\theta(\mathcal{Y}_{\rm ref,r})\|\mathcal{Y}_{\rm ref,r}\Pi_{\mathcal{R}({M_{r}})}\|^2.$ 
%Hence, this is an upper bound of lowest achievable tracking error. 
\end{proof}
}
We present the properties of the reference command trackability index more rigorously in the following.
\begin{fact}\label{factTI}
For all $\mathcal{Y}_{{\rm ref},r,L}\in \mathbb{R}^{(r-L+1)l}$, $0 \leqslant \theta({\mathcal{Y}_{{\rm ref},r,L}}) \leqslant 1$.
\end{fact}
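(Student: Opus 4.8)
The plan is to establish both inequalities directly from elementary properties of the Euclidean norm, since $\theta(\mathcal{Y}_{{\rm ref},r,L})$ is simply a ratio of two $2$-norms. First I would observe that both the numerator $\lVert \mathcal{Y}_{{\rm ref},r,L} \Pi_{\mathcal{R}(M_{r})} \rVert_2$ and the denominator $\lVert \mathcal{Y}_{{\rm ref},r,L} \rVert_2$ are nonnegative, being norms of real vectors. Assuming $\mathcal{Y}_{{\rm ref},r,L} \neq 0$ so that the ratio is well defined, the lower bound $\theta(\mathcal{Y}_{{\rm ref},r,L}) \geqslant 0$ is then immediate.

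For the upper bound, I would invoke the orthogonal decomposition already used earlier in this section, namely
\[
\mathcal{Y}_{{\rm ref},r,L} = \mathcal{Y}_{{\rm ref},r,L}\Pi_{\mathcal{R}(M_{r})} + \mathcal{Y}_{{\rm ref},r,L}\Pi_{\mathcal{N}(M_{r}^{\top})},
\]
in which the two summands are mutually orthogonal because $\mathcal{R}(M_{r})$ and $\mathcal{N}(M_{r}^{\top})$ are orthogonal complements. The Pythagorean identity then gives
\[
\lVert \mathcal{Y}_{{\rm ref},r,L} \rVert_2^2 = \lVert \mathcal{Y}_{{\rm ref},r,L}\Pi_{\mathcal{R}(M_{r})} \rVert_2^2 + \lVert \mathcal{Y}_{{\rm ref},r,L}\Pi_{\mathcal{N}(M_{r}^{\top})} \rVert_2^2 \geqslant \lVert \mathcal{Y}_{{\rm ref},r,L}\Pi_{\mathcal{R}(M_{r})} \rVert_2^2,
\]
which is precisely statement $2$ of the earlier Fact on the projected sequence. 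Taking square roots and dividing through by the positive quantity $\lVert \mathcal{Y}_{{\rm ref},r,L} \rVert_2$ yields $\theta(\mathcal{Y}_{{\rm ref},r,L}) \leqslant 1$, and combining the two bounds completes the argument.

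There is no genuine obstacle here; the result is a one-line consequence of norm nonnegativity together with the range/left-null-space orthogonality already available in the excerpt. The only subtlety I would flag is the degenerate case $\mathcal{Y}_{{\rm ref},r,L} = 0$, for which the stated formula for $\theta$ is undefined. I would either exclude this case explicitly or adopt the convention $\theta(0) \triangleq 1$, which is consistent with the trackable-set interpretation since the zero reference trivially lies in $\mathcal{R}(M_{r})$.
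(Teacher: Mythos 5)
Your proposal is correct and follows exactly the route the paper intends: the paper states this fact without an explicit proof, treating it as an immediate consequence of the earlier fact on the projected sequence, whose proof rests on the same orthogonal decomposition $\mathcal{Y}_{{\rm ref},r,L}=\mathcal{Y}_{{\rm ref},r,L}\Pi_{\mathcal{R}(M_{r})}+\mathcal{Y}_{{\rm ref},r,L}\Pi_{\mathcal{N}(M_{r}^{\top})}$ and the resulting norm inequality (statement 2 of that fact) that you invoke. Your flagging of the degenerate case $\mathcal{Y}_{{\rm ref},r,L}=0$, which the paper silently ignores, is a reasonable refinement rather than a deviation.
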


\begin{remark}
The following statements are true with regard to the trackability index $\theta({\mathcal{Y}_{{\rm ref},r,L}})$:
\begin{enumerate}
\item If $\mathcal{Y}_{{\rm ref},r,L} = \mathcal{Y}_{{\rm ref},r,L} \Pi_{\mathcal{R}({M_{r}})}$, then $\mathcal{Y}_{{\rm ref},r,L}\in \mathfrak{T}_r(A,B,C,0)$ and $\theta({\mathcal{Y}_{{\rm ref},r,L}})=1$.
\item For trackable system, $ \theta({\mathcal{Y}_{{\rm ref},r,L}}) =1$ for every $\mathcal{Y}_{{\rm ref},r,L} \in \mathbb{R}^{(r-L+1)l}$.
\item If $\theta({\mathcal{Y}_{{\rm ref},r,L}})<1$ for some $\mathcal{Y}_{{\rm ref},r,L}$, then $\mathcal{Y}_{{\rm ref},r,L}\notin \mathfrak{T}_r(A,B,C,0)$.
\item If $\mathcal{Y}_{{\rm ref},r,L} \Pi_{\mathcal{R}({M_{r}})}=0$, then $\theta({\mathcal{Y}_{{\rm ref},r,L}})=0$. In other words, if $\mathcal{Y}_{{\rm ref},r,L} \in \mathcal{N}(M_{r} ^{\top})$ then, $\theta({\mathcal{Y}_{{\rm ref},r,L}})=0$.
\item $\theta({\mathcal{Y}_{{\rm ref},r,L}})= \theta (\alpha {\mathcal{Y}_{{\rm ref},r,L}})$ for all $\alpha \in \mathbb{R} \backslash \lbrace 0 \rbrace$. That is, $\theta({\mathcal{Y}_{{\rm ref},r,L}})$ is scale invariant.
\end{enumerate}
\end{remark}

It is to be noted that, the trackability index indicates the best possible output behaviour that can be achieved theoretically. This would mean that a certain value of trackability index does not guarantee that the outputs of the system will be as close as determined by the trackability index, since the tracking performance of the system depends on various factors including the controller, modeling errors, disturbances and initial conditions.

\subsection{Component-wise System Trackability Index}
The trackability index $\theta({\mathcal{Y}_{{\rm ref},r,L}})$ defined earlier does not provide an insight about how closely each component of the reference command can be followed, even though it indicates closest possible output behaviour of all the output components combined. Moreover, $\theta({\mathcal{Y}_{{\rm ref},r,L}})$ depends on the reference command sequence $\mathcal{Y}_{{\rm ref},r,L}$. Hence, there is a need to define an index,  which indicates achievable tracking performance for each component of the output vector while also being independent of reference commands.

For a matrix $G \in \mathbb{R}^{l \times m}$, define the unary operation \begin{equation}
{\rm rga}(G) \triangleq G \circ (G ^\dagger)^{\top},
\end{equation} where $\circ$ indicates the Schur-Hadamard or the element-wise product. The matrix ${\rm rga}(G)$ is called the relative gain array (RGA) of the matrix $G$ and was originally defined in \cite{bristol1966rga}. More details about RGA and its properties can be found in \cite{liptak2013process,skogestad2005multivariable}. We now define a vector index {$\vartheta(CA^{L-1}B) \in \mathbb{R}^l$} called \underline{component-wise} \underline{system} \underline{trackability index}, based on RGA of first non-zero Markov parameter {$CA^{L-1}B$}, which indicates how closely each component of output vector can track the respective reference commands. Let $\mathbf{1}_m=\begin{bmatrix}1 &1 &\cdots &1
\end{bmatrix} ^{\top} \in \mathbb{R}^{m}$. Then,
{\begin{equation}
\vartheta(CA^{L-1}B)= \begin{bmatrix}
\vartheta(CA^{L-1}B) ^{(1)}\\
\vartheta(CA^{L-1}B) ^{(2)}\\
\vdots \\
\vartheta(CA^{L-1}B) ^{(l)}
\end{bmatrix} \triangleq {\rm rga}(CA^{L-1}B) \mathbf{1}_m.
\end{equation}}
{$\vartheta(CA^{L-1}B)$} represents component-wise achievable tracking performance for reference commands that do not belong to the trackable space/set.

We now state the following facts about $\vartheta(CA^{L-1}B)$:
\begin{fact}
 $0 \leq \vartheta(CA^{L-1}B) ^{(i)} \leq 1$, for all $i=1,~2,\cdots,~l$.
\end{fact}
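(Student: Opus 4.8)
The plan is to recognize that the $i$-th entry of $\vartheta(CA^{L-1}B)$ is precisely a diagonal entry of the orthogonal projector $CA^{L-1}B\,(CA^{L-1}B)^\dagger$, and then to exploit the elementary fact that the diagonal entries of any orthogonal projection matrix lie in the interval $[0,1]$.

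First, writing $G \triangleq CA^{L-1}B \in \mathbb{R}^{l\times m}$ for brevity, I would unwind the definitions. The $i$-th component of $\vartheta(G) = {\rm rga}(G)\,\mathbf{1}_m$ is simply the $i$-th row sum of ${\rm rga}(G) = G \circ (G^\dagger)^{\top}$, that is, $\vartheta(G)^{(i)} = \sum_{j=1}^m G_{ij}\,(G^\dagger)_{ji}$, since the $(i,j)$ entry of $(G^\dagger)^{\top}$ is $(G^\dagger)_{ji}$. The key observation is that this row sum is exactly the $(i,i)$ entry of the matrix product $GG^\dagger$, because $(GG^\dagger)_{ii} = \sum_{j} G_{ij}\,(G^\dagger)_{ji}$. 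Hence $\vartheta(G)^{(i)} = (GG^\dagger)_{ii}$, which reduces the claim to a statement about the diagonal of $GG^\dagger$.

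Second, I would invoke the standard property of the Moore-Penrose inverse that $P \triangleq GG^\dagger$ is the orthogonal projector onto $\mathcal{R}(G)$; in particular $P$ is symmetric ($P^{\top}=P$) and idempotent ($P^2=P$). These hold for \emph{any} $G$, regardless of its rank, so no full-rank hypothesis on $CA^{L-1}B$ is needed, which is essential given the untrackable setting ${\rm rank}(CA^{L-1}B)<l$ treated here. Combining symmetry and idempotence gives $P = P^{\top}P$, and therefore $P_{ii} = (P^{\top}P)_{ii} = \sum_{k} P_{ki}^2$. This immediately yields $P_{ii}\geq 0$, and since the sum on the right contains the term $P_{ii}^2$, it also yields $P_{ii} \geq P_{ii}^2$. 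Together these give $0 \leq P_{ii} \leq 1$, i.e. $0 \leq \vartheta(CA^{L-1}B)^{(i)} \leq 1$, as claimed.

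There is no genuine obstacle in this argument; the only step requiring care is the first one, namely identifying the row sum of the RGA with a diagonal entry of $GG^\dagger$, after which the bound follows purely from the symmetry and idempotence of the projector. I would further remark that the upper bound is generically strict rather than an equality to $1$ precisely because $G$ may be rank-deficient or non-square (so $P$ need not equal the identity), which is exactly why this component-wise index carries nontrivial information for untrackable systems.
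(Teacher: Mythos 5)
Your proof is correct, and it takes a genuinely different route from the paper. The paper disposes of this fact in one line by citing the known property of relative gain arrays (from the Skogestad--Postlethwaite reference) that row sums of an RGA lie in $[0,1]$, whereas you prove the bound from scratch: you observe that the $i$-th row sum of ${\rm rga}(G)$ is exactly $\sum_{j} G_{ij}(G^\dagger)_{ji} = (GG^\dagger)_{ii}$, and then use symmetry and idempotence of the orthogonal projector $P = GG^\dagger$ to get $P_{ii} = \sum_k P_{ki}^2 \in [0,1]$. What the paper's approach buys is brevity and a link to the established RGA literature; what yours buys is self-containment and generality---your argument makes explicit that the bound needs no full-rank or square assumption on $CA^{L-1}B$, which matters precisely in the untrackable case ${\rm rank}(CA^{L-1}B) < l$ where this index is informative, and it removes any ambiguity about whether the cited RGA property (classically stated for square nonsingular matrices, where row sums equal $1$ exactly) carries over to the pseudoinverse-based generalized RGA. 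As a bonus, the identity $\vartheta(G)^{(i)} = (GG^\dagger)_{ii}$ instantly yields the paper's subsequent fact as well, since $\sum_{i=1}^{l}\vartheta(G)^{(i)} = \operatorname{tr}\bigl(GG^\dagger\bigr) = {\rm rank}(G)$, the trace of an orthogonal projector being its rank; so your lemma unifies two of the paper's cited facts under one elementary computation.
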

\begin{proof}
Given a relative gain array of a matrix, the sum of row or column elements lie in the interval $[0 ~1]$  \cite{skogestad2005multivariable}. Since $\vartheta(CA^{L-1}B) ^{(i)}$ represents the sum of all the elements in the $i^{\rm th}$ row of ${\rm rga}(CA^{L-1}B)$, $0 \leq \vartheta(CA^{L-1}B) ^{(i)} \leq 1$, for all $i=1,~2,\cdots,~l$.
\end{proof}
\begin{remark}
If $ \vartheta(CA^{L-1}B) ^{(i)} =1$, then the $i^{\rm th}$ component of output can exactly track arbitrary reference commands, independently of other components, given that initial conditions are known.
\end{remark}
For reference commands not in the trackable set, a value of $\vartheta(CA^{L-1}B) ^{(i)}$ deviating from 1 indicates the possible deviation of the $i ^{\rm th}$ component of the output from the corresponding component of the reference command. However, reference commands in the trackable subspace can be tracked exactly, irrespective of $\vartheta(CA^{L-1}B)$.

\begin{fact} $\sum_{i=1} ^{l} \vartheta(CA^{L-1}B) ^{(i)} = {\rm rank}(CA^{L-1}B)$.
\end{fact}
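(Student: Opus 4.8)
The plan is to recognize that $\sum_{i=1}^{l}\vartheta(CA^{L-1}B)^{(i)}$ is nothing but the sum of \emph{all} entries of the matrix ${\rm rga}(CA^{L-1}B)$, and then to identify this total with the trace of an orthogonal projector. Writing $G \triangleq CA^{L-1}B \in \mathbb{R}^{l\times m}$ for brevity, each component $\vartheta(G)^{(i)}$ is by definition the sum of the $i^{\rm th}$ row of ${\rm rga}(G)$, since it is the $i^{\rm th}$ entry of ${\rm rga}(G)\mathbf{1}_m$. Hence $\sum_{i=1}^{l}\vartheta(G)^{(i)} = \sum_{i=1}^{l}\sum_{j=1}^{m}{\rm rga}(G)_{ij}$, and the whole problem reduces to evaluating the grand sum of the RGA entries.

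First I would write out the $(i,j)$ entry of the RGA directly from its definition. Because $\circ$ is the element-wise product and the second factor is $(G^\dagger)^{\top}$, the $(i,j)$ entry of $(G^\dagger)^{\top}$ is $(G^\dagger)_{ji}$, so that ${\rm rga}(G)_{ij} = G_{ij}\,(G^\dagger)_{ji}$. Summing first over $j$ and recognizing the inner sum as a diagonal entry of a matrix product, $\sum_{j=1}^{m} G_{ij}(G^\dagger)_{ji} = (G G^\dagger)_{ii}$, the full double sum collapses to $\sum_{i=1}^{l}(G G^\dagger)_{ii} = {\rm tr}(G G^\dagger)$. The only point demanding care in this step is the index transposition introduced by $(G^\dagger)^{\top}$, which must be tracked correctly so that the inner sum genuinely assembles $(G G^\dagger)_{ii}$ rather than some unrelated product.

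The key step is then to evaluate ${\rm tr}(G G^\dagger)$. By the defining Penrose conditions, $G G^\dagger$ is symmetric and idempotent, and therefore equals the orthogonal projector onto $\mathcal{R}(G)$. The trace of such a projector equals the dimension of the subspace onto which it projects, which is ${\rm rank}(G G^\dagger) = {\rm rank}(G)$; this is precisely where the rank enters. Equivalently, the eigenvalues of a symmetric idempotent matrix are all $0$ or $1$, so its trace counts the multiplicity of the eigenvalue $1$, namely the rank. This standard fact about the Moore-Penrose inverse may be cited from \cite{bernstein2009matrix} rather than reproved.

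Combining the three steps yields $\sum_{i=1}^{l}\vartheta(CA^{L-1}B)^{(i)} = {\rm tr}(G G^\dagger) = {\rm rank}(G) = {\rm rank}(CA^{L-1}B)$, as claimed. I do not anticipate a serious obstacle, as the argument is essentially bookkeeping once the element-wise product is translated into the trace of $G G^\dagger$; the substantive content is entirely carried by the identity ${\rm tr}(G G^\dagger) = {\rm rank}(G)$, and everything else is routine index manipulation.
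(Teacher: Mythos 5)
Your proof is correct, and it takes a genuinely more self-contained route than the paper's. The paper's own proof is a one-liner: it observes, exactly as you do, that $\sum_{i=1}^{l}\vartheta(CA^{L-1}B)^{(i)}$ equals the sum of all entries of ${\rm rga}(CA^{L-1}B)$, and then simply \emph{cites} the known RGA property that this grand sum equals the rank of the matrix (referencing the Skogestad--Postlethwaite textbook), just as it does for the earlier fact bounding the row sums. You instead derive that property from scratch: the index bookkeeping ${\rm rga}(G)_{ij}=G_{ij}(G^\dagger)_{ji}$ collapses the grand sum to ${\rm tr}(GG^\dagger)$, and the Penrose conditions make $GG^\dagger$ a symmetric idempotent matrix, i.e.\ the orthogonal projector onto $\mathcal{R}(G)$, whose trace is ${\rm rank}(G)$. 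Both arguments are sound; the paper's buys brevity at the cost of an external citation, while yours replaces the citation with a short rigorous derivation that also explains \emph{why} the RGA grand-sum property holds, which fits the paper's stated emphasis on elementary linear-algebraic rigor. One minor point worth making explicit if you write this out in full: the identification ${\rm tr}(GG^\dagger)={\rm rank}(G)$ uses $\mathcal{R}(GG^\dagger)=\mathcal{R}(G)$, which follows immediately from the Penrose identity $GG^\dagger G=G$.
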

\begin{proof}
Noting that $\sum_{i=1} ^{l} \vartheta(CA^{L-1}B) ^{(i)} $ is equal to sum of all the elements in RGA of a matrix and since sum of all the elements in a RGA equals the rank of the matrix \cite{skogestad2005multivariable}, $\sum_{i=1} ^{l} \vartheta(CA^{L-1}B) ^{(i)} = {\rm rank}(CA^{L-1}B)$.
\end{proof}
\begin{fact}
System (\ref{SS1}), (\ref{SS2}) is trackable if and only if $$\vartheta(CA^{L-1}B) ^{(1)}=\vartheta(CA^{L-1}B) ^{(2)}=~\cdots=~ \vartheta(CA^{L-1}B) ^{(l)}=1.$$ Alternatively, $\sum_{i=1} ^{l} \vartheta(CA^{L-1}B) ^{(i)}=l$ if and only if ${\rm rank}(CA^{L-1}B)=l$.
\end{fact}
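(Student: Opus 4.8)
The plan is to reduce everything to Theorem \ref{TrackThm1}, which already establishes that the system (\ref{SS1}), (\ref{SS2}) is trackable if and only if ${\rm rank}(CA^{L-1}B)=l$, and then to translate the rank condition into a statement about the entries of $\vartheta(CA^{L-1}B)$ using the two facts established immediately above, namely that $0 \leq \vartheta(CA^{L-1}B)^{(i)} \leq 1$ for each $i$, and that $\sum_{i=1}^{l} \vartheta(CA^{L-1}B)^{(i)} = {\rm rank}(CA^{L-1}B)$. Since these three ingredients are already in hand, the task is purely to chain them together correctly, and no substantial new computation is required.

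First I would dispatch the second, ``alternatively'' phrased equivalence, because it is the most direct: the identity $\sum_{i=1}^{l} \vartheta(CA^{L-1}B)^{(i)} = {\rm rank}(CA^{L-1}B)$ shows at once that $\sum_{i=1}^{l} \vartheta(CA^{L-1}B)^{(i)} = l$ holds if and only if ${\rm rank}(CA^{L-1}B) = l$, and the latter is exactly trackability by Theorem \ref{TrackThm1}. I would then build the componentwise characterization on top of this. For the forward direction, suppose the system is trackable; then ${\rm rank}(CA^{L-1}B) = l$, so $\sum_{i=1}^{l}\vartheta(CA^{L-1}B)^{(i)} = l$. Combining this with the pointwise bound $\vartheta(CA^{L-1}B)^{(i)} \leq 1$ forces each summand to attain its upper bound, giving $\vartheta(CA^{L-1}B)^{(i)} = 1$ for every $i$. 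For the converse, if $\vartheta(CA^{L-1}B)^{(i)} = 1$ for all $i$, then the sum equals $l$, hence ${\rm rank}(CA^{L-1}B) = l$, and trackability follows again from Theorem \ref{TrackThm1}.

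The only step carrying any real content is the extremal (pigeonhole-type) argument in the forward direction: a collection of $l$ numbers, each at most $1$, whose total equals $l$ must consist entirely of ones, since if any one of them were strictly below $1$ the sum would fall strictly below $l$. I would state this explicitly, e.g. by noting $0 \leq 1 - \vartheta(CA^{L-1}B)^{(i)}$ for each $i$ and $\sum_{i=1}^{l}\bigl(1 - \vartheta(CA^{L-1}B)^{(i)}\bigr) = l - {\rm rank}(CA^{L-1}B) = 0$, so that a sum of nonnegative terms vanishes only when every term vanishes. I expect this to be the main (and essentially the sole) obstacle, and it is a minor one; the remainder is bookkeeping that rests entirely on the previously established facts and on Theorem \ref{TrackThm1}.
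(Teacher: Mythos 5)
Your proof is correct and follows exactly the route the paper intends: the paper states this fact without an explicit proof, but the two facts it establishes immediately beforehand (the bounds $0 \leq \vartheta(CA^{L-1}B)^{(i)} \leq 1$ and the identity $\sum_{i=1}^{l}\vartheta(CA^{L-1}B)^{(i)} = {\rm rank}(CA^{L-1}B)$) together with Theorem \ref{TrackThm1} are precisely the ingredients you chain together. Your explicit extremal step (a sum of $l$ terms, each at most $1$, equalling $l$ forces every term to equal $1$) is the only substantive point, and you handle it correctly.
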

\subsection{System Trackability Index}
{Based on the vector index of $\vartheta(CA^{L-1}B)$, we may define a scalar index $\Theta$  called the \underline{system trackability index} as
{\begin{equation}
\Theta \triangleq \frac{\vartheta(CA^{L-1}B)^{\top} \mathbf{1}_l}{l} =\frac{{\mathbf{1}_l}^{\top} \operatorname{rga}(CA^{L-1}B) \mathbf{1}_m}{l}
\end{equation}} which depends only on the $CA^{L-1}B$ matrix and its dimensions unlike $\theta({\mathcal{Y}_{{\rm ref},r,L}})$, which depends on all the Markov parameters and the reference command sequence. }
We may now state the following facts for the system trackability index.
\begin{fact}\label{Theta1} $0 \leq \Theta \leq 1 .$ 
\end{fact}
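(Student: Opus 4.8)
The plan is to bound the system trackability index $\Theta$ by establishing the two inequalities $\Theta \geq 0$ and $\Theta \leq 1$ separately, leaning on the facts already proved about the component-wise index $\vartheta(CA^{L-1}B)$. By definition, $\Theta = \frac{\vartheta(CA^{L-1}B)^{\top}\mathbf{1}_l}{l}$, which is simply the arithmetic mean of the $l$ components $\vartheta(CA^{L-1}B)^{(i)}$. I would open by rewriting $\Theta = \frac{1}{l}\sum_{i=1}^{l}\vartheta(CA^{L-1}B)^{(i)}$, since $\vartheta(CA^{L-1}B)^{\top}\mathbf{1}_l$ is exactly the sum of the $l$ entries of the vector.

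For the lower bound, I would invoke the fact already established that $0 \leq \vartheta(CA^{L-1}B)^{(i)} \leq 1$ for every $i = 1, 2, \ldots, l$. Summing the left inequality over $i$ gives $\sum_{i=1}^{l}\vartheta(CA^{L-1}B)^{(i)} \geq 0$, and dividing by the positive integer $l$ preserves the inequality, yielding $\Theta \geq 0$.

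For the upper bound, there are two natural routes, and I would take whichever reads most cleanly. The direct route sums the right-hand inequality $\vartheta(CA^{L-1}B)^{(i)} \leq 1$ over all $i$ to obtain $\sum_{i=1}^{l}\vartheta(CA^{L-1}B)^{(i)} \leq l$, so that $\Theta \leq 1$ after dividing by $l$. Alternatively, and perhaps more informatively, I would use the earlier fact that $\sum_{i=1}^{l}\vartheta(CA^{L-1}B)^{(i)} = \operatorname{rank}(CA^{L-1}B)$; combining this with the standing assumption $m, l \leq n$ and the elementary bound $\operatorname{rank}(CA^{L-1}B) \leq \min(l,m) \leq l$, we again get $\Theta = \frac{\operatorname{rank}(CA^{L-1}B)}{l} \leq 1$. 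This second route has the virtue of exhibiting $\Theta$ as the ratio $\operatorname{rank}(CA^{L-1}B)/l$, which ties directly back to the rank test of Theorem \ref{TrackThm1}.

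I do not anticipate a genuine obstacle here, since every ingredient is a routine consequence of facts proved immediately above; the statement is essentially a normalization corollary. The only point requiring mild care is ensuring the division by $l$ is legitimate, which holds because $l$ is a positive integer (the number of outputs), so the inequalities are preserved rather than reversed. I would close by noting that equality $\Theta = 1$ occurs precisely when $\operatorname{rank}(CA^{L-1}B) = l$, i.e. exactly when the system is trackable, which is the natural sanity check connecting this index to the earlier characterization.
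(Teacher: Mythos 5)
Your proof is correct and is essentially the argument the paper intends: Fact \ref{Theta1} is stated there without proof, as an immediate consequence of the component-wise bounds $0 \leq \vartheta(CA^{L-1}B)^{(i)} \leq 1$ established just before, which is exactly what you average over $i=1,\ldots,l$. Your alternative route via $\sum_{i=1}^{l}\vartheta(CA^{L-1}B)^{(i)} = \operatorname{rank}(CA^{L-1}B)$ is also sound and has the added benefit of identifying $\Theta = \operatorname{rank}(CA^{L-1}B)/l$, which makes Fact \ref{Theta2} transparent as well.
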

\begin{fact}\label{Theta2}
System (\ref{SS1}), (\ref{SS2}) is trackable if and only if $\Theta =1.$ 
\end{fact}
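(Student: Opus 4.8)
The plan is to reduce this biconditional to the rank test of Theorem \ref{TrackThm1} by routing through the already-established identity that relates the component-wise trackability index to the rank of $CA^{L-1}B$. The observation driving everything is that $\Theta$, despite its definition in terms of the RGA, is secretly just a normalized rank.

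First I would rewrite $\Theta$ using its definition. Since $\vartheta(CA^{L-1}B)^{\top}\mathbf{1}_l = \sum_{i=1}^{l}\vartheta(CA^{L-1}B)^{(i)}$ is merely the sum of the entries of the component-wise index vector, and the earlier fact establishes that this sum equals ${\rm rank}(CA^{L-1}B)$, I would substitute directly to obtain the compact expression
\[
\Theta = \frac{{\rm rank}(CA^{L-1}B)}{l}.
\]
This is the crucial reduction: the scalar system trackability index is nothing more than the rank of the first non-zero Markov parameter, normalized by the number of outputs $l$.

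Next I would invoke the equivalence between statements \ref{TT1} and \ref{TT3} in Theorem \ref{TrackThm1}, which asserts that the system is trackable if and only if ${\rm rank}(CA^{L-1}B) = l$. Combining this with the displayed expression, $\Theta = 1$ holds if and only if ${\rm rank}(CA^{L-1}B) = l$, which in turn holds if and only if the system is trackable. Both directions of the biconditional are thus obtained simultaneously, since each arrow is just a substitution into an already-proven equivalence.

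I do not anticipate any genuine obstacle, as every ingredient is established earlier in the paper; the only thing to confirm is that the substitution is legitimate, which rests on the RGA identity (the sum of all RGA elements equals the rank, cited from \cite{skogestad2005multivariable}) that underlies the component-wise sum fact. One minor point of care is that the division by $l$ must be well-defined and the equivalence $\Theta = 1 \Leftrightarrow {\rm rank}(CA^{L-1}B) = l$ must be clean; this is guaranteed because $L$ is assumed to exist, so $CA^{L-1}B \neq 0$ and $l \geq 1$.
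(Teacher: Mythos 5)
Your proof is correct and takes essentially the route the paper intends: the paper states this fact without an explicit proof precisely because it follows immediately from the preceding fact $\sum_{i=1}^{l}\vartheta(CA^{L-1}B)^{(i)}={\rm rank}(CA^{L-1}B)$ (giving $\Theta={\rm rank}(CA^{L-1}B)/l$) combined with the equivalence of trackability and ${\rm rank}(CA^{L-1}B)=l$ from Theorem \ref{TrackThm1}. The only caveat, which does not affect correctness, is that your argument addresses the primary definition of $\Theta$ rather than the alternative $\vartheta(CA^{L-1}B)^{\top}\vartheta(CA^{L-1}B)/l$ mentioned afterwards, which is all the stated fact requires.
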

\begin{fact}\label{Theta3}
$\Theta =1$ if and only if $\vartheta ^{(1)}=\vartheta ^{(2)}=~\cdots=~ \vartheta ^{(l)}=1.$ 
\end{fact}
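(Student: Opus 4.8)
The plan is to prove Fact \ref{Theta3} by chaining together the definitions of the scalar index $\Theta$, the vector index $\vartheta(CA^{L-1}B)$, and the two already-established facts (Fact \ref{Theta2} and the preceding component-wise fact stating that the system is trackable if and only if every $\vartheta(CA^{L-1}B)^{(i)}=1$). The cleanest route is to observe that the equivalence being claimed, namely $\Theta=1 \iff \vartheta^{(1)}=\cdots=\vartheta^{(l)}=1$, can be obtained by transitivity: Fact \ref{Theta2} already gives that trackability is equivalent to $\Theta=1$, and the earlier component-wise trackability fact gives that trackability is equivalent to all $\vartheta^{(i)}=1$. Combining these two biconditionals immediately yields the desired statement, so at minimum the proof can be one line. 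I would, however, prefer a direct self-contained argument that does not route through the notion of trackability, to make the purely arithmetic nature of the claim transparent.

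First I would recall that $\Theta = \tfrac{1}{l}\sum_{i=1}^{l}\vartheta(CA^{L-1}B)^{(i)}$, which is just the average of the $l$ scalar components, each of which satisfies $0\le\vartheta^{(i)}\le 1$ by the earlier bounding fact. The reverse implication is trivial: if every $\vartheta^{(i)}=1$, then the sum is $l$ and $\Theta=l/l=1$. For the forward implication, suppose $\Theta=1$; then $\sum_{i=1}^{l}\vartheta^{(i)}=l$. Since each summand is at most $1$ and there are exactly $l$ of them, the sum can equal $l$ only if each summand attains its maximum value of $1$ — a standard argument that the mean of numbers in $[0,1]$ equals $1$ precisely when all the numbers equal $1$. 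This forces $\vartheta^{(i)}=1$ for all $i$, completing the equivalence.

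I do not anticipate any serious obstacle here, since the statement is essentially an elementary consequence of the $[0,1]$ bound on the components combined with the definition of $\Theta$ as their normalized sum. The only point requiring a moment of care is justifying that equality in $\sum\vartheta^{(i)}=l$ with each $\vartheta^{(i)}\le 1$ forces termwise equality; this is immediate from the fact that $\sum_{i=1}^{l}(1-\vartheta^{(i)})=0$ is a sum of nonnegative terms, which vanishes only when every term vanishes. I would state this explicitly to keep the argument airtight, and then close by noting that the equivalence with trackability in Fact \ref{Theta2} also follows, tying Fact \ref{Theta3} back to the preceding facts for consistency.
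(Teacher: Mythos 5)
Your proof is correct. The paper itself states Fact \ref{Theta3} without any proof, treating it as an immediate consequence of the definition $\Theta = \vartheta(CA^{L-1}B)^{\top}\mathbf{1}_l/l$ and the preceding facts, and your argument supplies exactly that reasoning: either the one-line transitivity through Fact \ref{Theta2} and the component-wise trackability fact, or the self-contained observation that an average of numbers in $[0,1]$ equals $1$ precisely when every term equals $1$ (since $\sum_{i=1}^{l}\bigl(1-\vartheta^{(i)}\bigr)=0$ with nonnegative summands forces each to vanish), so there is no gap or divergence to report.
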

{Alternatively, $\Theta$ may also be defined as}
{\begin{equation}
\Theta \triangleq \frac{\vartheta(CA^{L-1}B)^{\top} \vartheta(CA^{L-1}B)}{l },
\end{equation}}
{which satisfies the properties stated in Facts \ref{Theta1} through \ref{Theta3} as well.}
\begin{remark}
It must be noted that the system trackability indices $\vartheta(CA^{L-1}B)$ and $\Theta$ give a general indication of the expected tracking behaviour of the system, without a prior knowledge of reference commands or control inputs. The actual tracking behaviour exhibited by the system will however depend on the specific reference command and the controller and may deviate from that indicated by the system trackability indices based on how close or away the reference commands lie from the trackable space/set.
Once the reference command is known, the reference command trackability index $\theta({\mathcal{Y}_{{\rm ref},r,L}})$ provides a more accurate picture of the degree of tracking possible. These three trackability indices when used together give a better insight into the tracking behavior of the system.
\end{remark}
 \section{Other system properties and their relationship with Trackability}\label{sec:venn}
The relationship between trackability and other system properties like controllability, observability, input and state observability \cite{palanthandalam2007unbiased}, and output controllability is summarized with the help of a Venn diagram in Fig. \ref{venn}. Examples corresponding to each region numbered in the Venn diagram are provided in the Appendix. %supplementary document along with the manuscript. 
In the following, we discuss facts that elaborate the relations between properties depicted in the Venn diagram. {For system \eqref{SS1}, \eqref{SS2} with a delay $L$, let
\begin{equation}
\label{ctrb}
Q_{{\rm sc},r}\triangleq \begin{bmatrix}
B & AB & A^2B & A^3B & \cdots &A^{r-1}B
\end{bmatrix},
\end{equation}
  \begin{equation}
\label{Cctrb}
Q_{{\rm oc},r}\triangleq \begin{bmatrix}
CB & CAB & CA^2B & CA^3B & \cdots &CA^{r-1}B
\end{bmatrix} = C Q_{{\rm sc},r}
\end{equation}
and 
\begin{equation}
\label{obsv}
Q_{{\rm so},r} \triangleq [C^\top~A^\top C^\top~{(A^\top)}^2 C^\top \cdots ~{(A^\top)}^{r-1} C^\top]^\top.
\end{equation}  Then, $Q_{{\rm sc},n}$, $Q_{{\rm so},n}$ and $Q_{{\rm oc},n}$ represent the state controllability, state observability and the output controllability matrices respectively for the system (\ref{SS1}), (\ref{SS2}). We refer to systems that are not state controllable as state uncontrollable.}
{Also define  \begin{equation}
\Psi_r \triangleq \begin{bmatrix}
Q_{{\rm so},L-1} & 0\\
\Gamma_r & M_r
\end{bmatrix}
\end{equation} originally defined in \cite{palanthandalam2007unbiased}, where
 $\Psi_r \in \mathbb{R}^{(r+1)l \times (n+rm)}$.
}
In the following, we establish that {systems with delay $L$} cannot be trackable, input and state observable, and state uncontrollable. In other words, we establish that set number 18 in Fig. \ref{venn} is empty. Furthermore,  we also state a few results about trackable systems and output controllable systems.

\begin{figure*}[!htbp]
\centering
\includegraphics[width=0.99\linewidth]{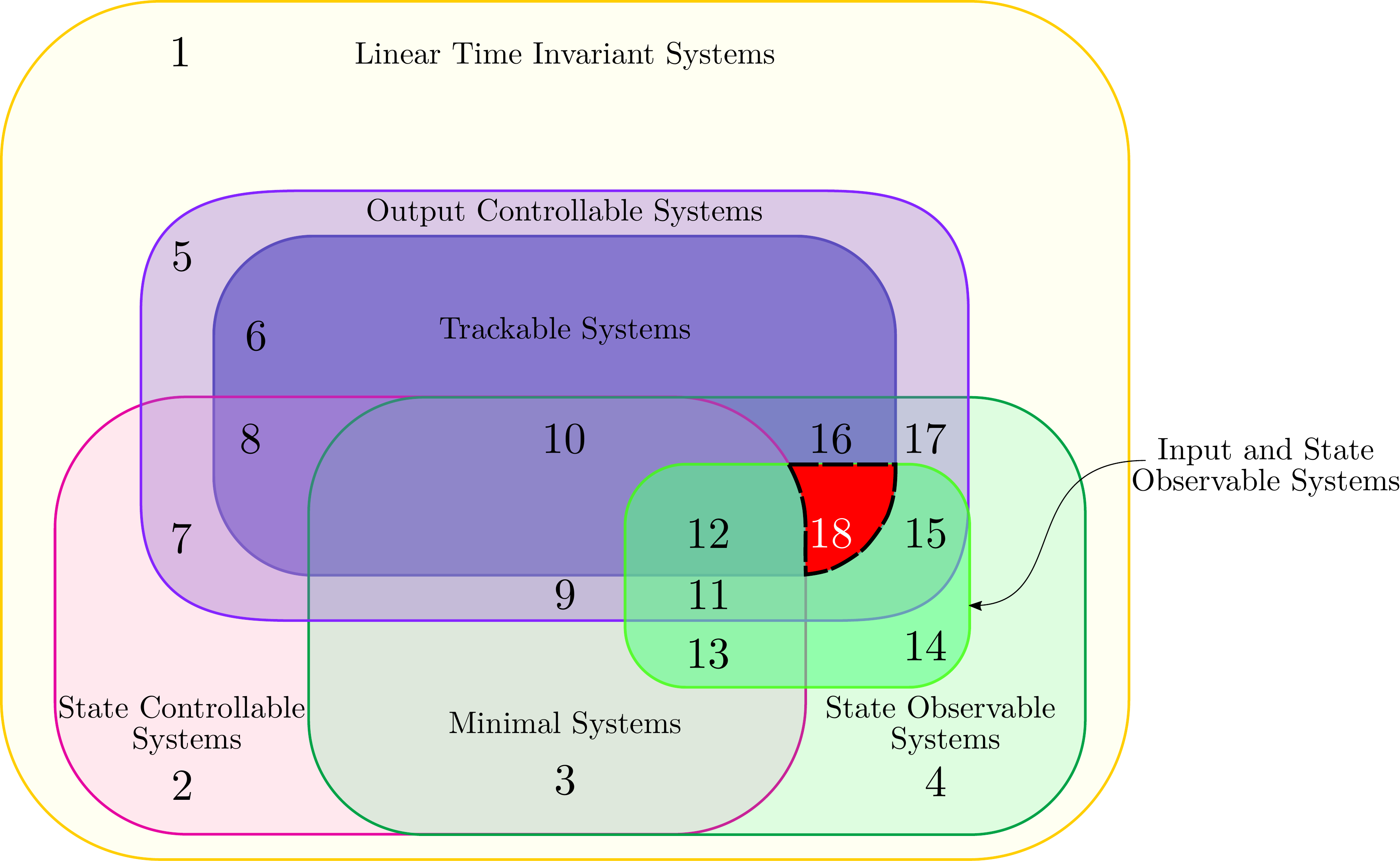}
\caption{Venn diagram showing relations between different systems properties}
\label{venn}
\end{figure*}
\begin{lemma}\label{fact3v}
If the system (\ref{SS1}), (\ref{SS2}) {with a delay $L$} is trackable, then the system is output controllable.
\end{lemma}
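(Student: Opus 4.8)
The plan is to reduce the statement to two rank conditions: the rank characterization of trackability already proved in Theorem \ref{TrackThm1}, and the standard rank test for output controllability. Recall that the system is output controllable precisely when its output controllability matrix $Q_{{\rm oc},n} = [CB~CAB~\cdots~CA^{n-1}B]$ has full row rank, i.e. ${\rm rank}(Q_{{\rm oc},n}) = l$. So it will suffice to show that trackability forces ${\rm rank}(Q_{{\rm oc},n}) = l$, and the whole argument reduces to a rank comparison.

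First I would invoke Theorem \ref{TrackThm1}: since the system with delay $L$ is trackable, the equivalence with statement \ref{TT3} gives ${\rm rank}(CA^{L-1}B) = l$. The key observation is then purely structural. Because $1 \leq L \leq n$, the exponent satisfies $0 \leq L-1 \leq n-1$, so the block $CA^{L-1}B$ occurs as one of the block-columns of $Q_{{\rm oc},n}$ (whose blocks range over $CA^{j}B$ for $j = 0, 1, \ldots, n-1$). Hence $\mathcal{R}(CA^{L-1}B) \subseteq \mathcal{R}(Q_{{\rm oc},n})$, and by monotonicity of rank under appending columns, ${\rm rank}(Q_{{\rm oc},n}) \geq {\rm rank}(CA^{L-1}B) = l$. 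Since $Q_{{\rm oc},n}$ has only $l$ rows, the reverse inequality ${\rm rank}(Q_{{\rm oc},n}) \leq l$ is automatic, so ${\rm rank}(Q_{{\rm oc},n}) = l$ and the system is output controllable.

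I do not expect a genuinely hard step here; once the trackability rank condition is in hand the conclusion is a one-line rank-monotonicity observation. The only point that will require care — the "obstacle," such as it is — is confirming that $CA^{L-1}B$ really is a sub-block of $Q_{{\rm oc},n}$ rather than lying beyond the horizon $n$; this is exactly what the earlier fact $1 \leq L \leq n$ guarantees. As an aside, the companion fact that $CA^{j}B = 0$ for $0 \leq j \leq L-2$ shows that the first $L-1$ blocks of $Q_{{\rm oc},n}$ vanish, so all of its rank is concentrated in the blocks from $CA^{L-1}B$ onward, but this sharpening is not needed for the claim.
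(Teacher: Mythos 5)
Your proof is correct and follows essentially the same route as the paper's: invoke Theorem \ref{TrackThm1} to get ${\rm rank}(CA^{L-1}B)=l$, then use $L\leq n$ to place that block inside the output controllability matrix and conclude ${\rm rank}(Q_{{\rm oc},n})=l$ by rank monotonicity. The paper's proof is just a terser statement of the identical argument, so no further comparison is needed.
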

\begin{proof} 
Since (\ref{SS1}), (\ref{SS2}) is trackable, ${\rm rank}(CA^{L-1}B)=l$, and $L \leq n$,  ${\rm rank}(Q_{{\rm oc},r})={\rm rank}\left(\begin{bmatrix}CB & CAB &\cdots & CA^{r-1}B \end{bmatrix}\right)=l.$
\end{proof}
The converse of Lemma \ref{fact3v} is not true, since there exist output controllable systems that are not trackable. See counterexamples 5, 7, 9, 11, 15, and 17. We state the contrapositive of Lemma \ref{fact3v} as a corollary.

\begin{corollary}\label{rem_t}
If  system (\ref{SS1}), (\ref{SS2}) is not output controllable, then it is not trackable.
\end{corollary}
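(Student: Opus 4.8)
The plan is to observe that this corollary is simply the logical contrapositive of Lemma \ref{fact3v}, so no independent argument is needed. Lemma \ref{fact3v} asserts the implication ``trackable $\Rightarrow$ output controllable'' for a system \eqref{SS1}, \eqref{SS2} with delay $L$. The statement to be established here, ``not output controllable $\Rightarrow$ not trackable,'' is logically equivalent to that implication by contraposition, since $(P \Rightarrow Q)$ and $(\neg Q \Rightarrow \neg P)$ are equivalent for any propositions $P$ and $Q$.

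Concretely, I would argue directly as follows. Suppose the system is not output controllable, so that ${\rm rank}(Q_{{\rm oc},n}) < l$. If the system were trackable, then by Theorem \ref{TrackThm1} we would have ${\rm rank}(CA^{L-1}B)=l$; since $L \leq n$, the block $CA^{L-1}B$ appears as a column block of $Q_{{\rm oc},n}$, forcing ${\rm rank}(Q_{{\rm oc},n})=l$ exactly as in the proof of Lemma \ref{fact3v}. This contradicts the assumed failure of output controllability, so the system cannot be trackable.

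The only thing worth spelling out is the identification of the contrapositive with the original implication; there is no analytic obstacle here. All the substantive rank reasoning -- that trackability forces full row rank of $CA^{L-1}B$ and hence of the output controllability matrix $Q_{{\rm oc},n}$ -- has already been carried out in the proof of Lemma \ref{fact3v}. I expect the main (and essentially only) point to handle carefully is making the logical equivalence explicit, so that the corollary reads as a clean restatement of Lemma \ref{fact3v} rather than as a genuinely new result.
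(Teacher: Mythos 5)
Your proposal is correct and matches the paper exactly: the paper introduces this corollary with the phrase ``We state the contrapositive of Lemma \ref{fact3v} as a corollary'' and gives no further proof, which is precisely your identification of the statement as the contrapositive of the implication ``trackable $\Rightarrow$ output controllable.'' Your optional direct argument via Theorem \ref{TrackThm1} is also sound and simply re-runs the rank reasoning already contained in the paper's proof of Lemma \ref{fact3v}.
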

{
\begin{lemma}\label{lmn}
Let {system \eqref{SS1}, \eqref{SS2} with delay $L=1$} be input and state observable and trackable. Then, $l=m=n$.
\end{lemma}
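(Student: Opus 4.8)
The plan is to reduce the statement to two rank facts and then finish by elementary dimension counting. First I would extract the consequence of trackability: by Theorem \ref{TrackThm1} specialized to $L=1$, trackability is equivalent to $\operatorname{rank}(CB)=l$. Since $CB\in\mathbb{R}^{l\times m}$, having full row rank $l$ forces $l\le m$. This is the only place the trackability hypothesis is used, and it pins down one of the two inequalities I will combine at the end.

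Next I would bring in input and state observability through the matrix $\Psi_r\in\mathbb{R}^{(r+1)l\times(n+rm)}$ defined above (following \cite{palanthandalam2007unbiased}): the system is input and state observable precisely when the map sending $\begin{bmatrix}x_0^\top & \mathcal{U}_{r-1}^\top\end{bmatrix}^\top$ to the stacked output $\mathcal{Y}_{r,0}$ is injective, i.e. when $\Psi_r$ has full column rank for the relevant $r\ge 1$. A matrix of full column rank has at least as many rows as columns, so
\[
(r+1)l\ \ge\ n+rm ,
\]
which rearranges to $l-n\ge r(m-l)$.

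Then I would combine the two bounds. From $l\le m$ we get $m-l\ge 0$, and since $r\ge 1$ the quantity $r(m-l)$ is nonnegative; hence $l-n\ge 0$, that is $l\ge n$. Together with the standing assumption $l\le n$ this yields $l=n$. Finally, $m\ge l=n$ from trackability, combined with the standing assumption $m\le n$, forces $m=n$, so $l=m=n$, as claimed.

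The hard part will not be the algebra, which is routine, but getting the two rank characterizations correct and mutually compatible: I must use the $L=1$ form of Theorem \ref{TrackThm1} for the bound $l\le m$, and I must be certain that input and state observability is exactly the full-column-rank condition on $\Psi_r$ with the stated dimensions $(r+1)l\times(n+rm)$ (in particular that the row count is $(r+1)l$, accounting for $y_0$, rather than $rl$). I would also confirm that the inequality is applied for some admissible $r\ge 1$; the conclusion does not depend on which such $r$ is taken, because $l\le m$ makes $r(m-l)\ge 0$ for every $r\ge 1$, so no delicate choice of horizon is required.
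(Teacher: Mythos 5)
Your proof is correct, but it closes the argument by a genuinely different route than the paper. The paper also starts from the two rank characterizations, but it combines them on the input columns of $\Psi_r$ alone: trackability with $L=1$ gives ${\rm rank}(M_r)=rl$ (full row rank, which by Lemma \ref{lemmaA1} is equivalent to your condition ${\rm rank}(CB)=l$), hence ${\rm rank}\left(\begin{bmatrix} 0\\ M_r\end{bmatrix}\right)=rl$, while input and state observability forces those $rm$ columns of $\Psi_r$ to be linearly independent, so the same matrix has rank $rm$; equating gives $l=m$ exactly, and the final step $l=m=n$ is then imported from Proposition II.2 of \cite{palanthandalam2009subspace}. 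You instead retain only the inequality $l\le m$ from trackability, but exploit the \emph{full} column count of $\Psi_r$ --- including the $n$ state columns --- to get $(r+1)l \ge n+rm$, and finish by elementary arithmetic using the paper's standing assumption $l,m\le n$. What your route buys is self-containedness: no external proposition is needed, and every ingredient (the rank test of Theorem \ref{TrackThm1}, the full-column-rank characterization of input and state observability that the paper itself uses in its Venn-diagram examples, and the blanket assumption $l,m\le n$) is already in the paper. The mild cost is the dependence on $l\le n$, but this is harmless: besides being a standing assumption, it follows from trackability anyway, since ${\rm rank}(C)=l$ for trackable systems. Note also that you could drop the separate use of $m\le n$: once $l=n$, your inequality $l-n\ge r(m-l)$ gives $0\ge r(m-l)$, so $m\le l$, which with $l\le m$ already yields $m=l=n$.
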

\begin{proof}
Since the system with $L=1$ is trackable, ${\rm rank}(M_r)=rl$ and therefore, $\operatorname{rank}\left( \begin{bmatrix}
 0\\
 M_r
\end{bmatrix} \right)=rl$.
 Furthermore, since the system is input and state observable,
${\rm rank}\left( \begin{bmatrix}
 0\\
 M_r
\end{bmatrix} \right)=rm$. Therefore, ${\rm rank}\left( \begin{bmatrix}
 0\\
 M_r
\end{bmatrix} \right)=rm=rl$, that is, $l=m$ and  $l=m=n$ follows from Proposition II.2 in \cite{palanthandalam2009subspace}.
\end{proof}}
\begin{thrm}\label{Thm2}
The set of systems that are trackable, input and state observable, and state uncontrollable is empty.
\end{thrm}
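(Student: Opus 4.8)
The plan is to establish the stronger statement that every system \eqref{SS1}, \eqref{SS2} that is simultaneously trackable and input and state observable is state controllable; this at once empties the triple intersection labelled set 18. So I would fix such a system and drive the hypotheses toward a full-rank controllability matrix.

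First I would read off the two rank conditions the hypotheses impose on the first non-zero Markov parameter $CA^{L-1}B$. Trackability gives, through Theorem \ref{TrackThm1}, that $\operatorname{rank}(CA^{L-1}B)=l$, equivalently that $M_r$ has full row rank. Input and state observability means $\Psi_r$ has full column rank; since the columns of $\Psi_r$ associated with the inputs constitute the sub-block $\begin{bmatrix} 0 \\ M_r\end{bmatrix}$, these columns must be linearly independent, so $M_r$ has full column rank, and Lemma \ref{lemmaA1} then forces $\operatorname{rank}(CA^{L-1}B)=m$. Hence $l=m$ and $CA^{L-1}B$ is a square, invertible $l\times l$ matrix.

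Next I would pin down the state dimension $n$. Full column rank of $\Psi_r\in\mathbb{R}^{(r+1)l\times(n+rm)}$ requires in particular that it have at least as many rows as columns, i.e. $(r+1)l\ge n+rm$; substituting $l=m$ collapses this to $l\ge n$, and together with the standing assumption $l\le n$ this yields $l=m=n$. This is exactly the conclusion packaged in Lemma \ref{lmn}, which I would invoke directly once $l=m$ is in hand.

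Finally, with $l=m=n$ the matrix $CA^{L-1}B$ is $n\times n$ of rank $n$; since $\operatorname{rank}(CA^{L-1}B)\le\operatorname{rank}(B)$ and $B\in\mathbb{R}^{n\times n}$, the input matrix $B$ is nonsingular. Then the controllability matrix $Q_{{\rm sc},n}=\begin{bmatrix} B & AB & \cdots & A^{n-1}B\end{bmatrix}$ already attains rank $n$ through its leading block $B$, so the system is state controllable, contradicting state uncontrollability and proving set 18 empty. I expect the main obstacle to be the intermediate dimension bookkeeping: correctly arguing that input and state observability transmits a full-column-rank condition onto $M_r$ (and hence $\operatorname{rank}(CA^{L-1}B)=m$), and then combining the full-row-rank and full-column-rank facts to force $l=m=n$. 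Once the three dimensions are equal, the nonsingularity of $B$ and the resulting controllability are immediate.
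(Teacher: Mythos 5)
Your proof is correct, and while it shares the paper's overall skeleton---argue that trackability plus input and state observability forces $l=m=n$, then contradict uncontrollability---the two halves are executed differently, in ways worth noting. For the first half, the paper simply invokes Lemma \ref{lmn}, whose own proof leans on Proposition II.2 of an external reference and is stated only for delay $L=1$; your row-versus-column count on $\Psi_r$ (full column rank needs $(r+1)l \geq n+rm$, which with $l=m$ and the standing assumption $l\leq n$ collapses to $l=m=n$) is self-contained, elementary, and goes through for arbitrary delay $L$, so it actually repairs a small gap in the paper, which applies Lemma \ref{lmn} in Theorem \ref{Thm2} without addressing the $L=1$ restriction. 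For the second half the routes genuinely diverge: the paper derives a contradiction through \emph{output} controllability---uncontrollability plus $l=n$ gives ${\rm rank}(Q_{{\rm oc},r})={\rm rank}(CQ_{{\rm sc},r})\leq {\rm rank}(Q_{{\rm sc},r})<l$, contradicting Lemma \ref{fact3v} (trackable $\Rightarrow$ output controllable)---whereas you prove controllability outright: ${\rm rank}(CA^{L-1}B)=n$ forces ${\rm rank}(B)=n$, so the leading block of $Q_{{\rm sc},n}$ already has rank $n$. Your endgame is more direct and yields the stronger positive statement (trackable $+$ ISO $\Rightarrow$ $B$ nonsingular $\Rightarrow$ state controllable) without needing output controllability as an intermediary; the paper's version, by contrast, deliberately routes through Lemma \ref{fact3v} to reinforce the trackability--output-controllability link that organizes its Venn-diagram discussion. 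Both are valid proofs of the emptiness of set 18.
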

\begin{proof}
Let (\ref{SS1}), (\ref{SS2}) be  input and state observable, and trackable but not state controllable.
Following Lemma \ref{lmn}, since (\ref{SS1}), (\ref{SS2}) is trackable and input and state observable, $l=m=n$. Further, since the (\ref{SS1}), (\ref{SS2}) is not controllable, ${\rm rank}(Q_{{\rm sc},r})<n$ and we have ${\rm rank}(Q_{{\rm oc},r})={\rm rank}(C Q_{{\rm sc},r}) \leq {\rm rank}(Q_{{\rm sc},r})<l=m=n$.
Therefore, the system (\ref{SS1}), (\ref{SS2}) is trackable but not output controllable, thus contradicting Lemma \ref{fact3v} and proving that  intersection of the set of trackable, state uncontrollable and input and state observable systems is empty. %This contradiction implies that intersection of the set of trackable, state uncontrollable and input and state observable systems is empty.
\end{proof}
In the following remark, alternative statements to Theorem \ref{Thm2} are enlisted.
\begin{remark} {For system \eqref{SS1}, \eqref{SS2}}, the following statements are equivalent to Theorem \ref{Thm2}:
\begin{enumerate}
\item Let \eqref{SS1}, \eqref{SS2} be trackable and input and state observable. Then, it is state controllable.
\item Let \eqref{SS1}, \eqref{SS2} be input and state observable and state uncontrollable. Then, it is not trackable.
\item Let \eqref{SS1}, \eqref{SS2} be trackable and  state uncontrollable. Then, it is not input and state observable.
\end{enumerate}
\end{remark}
In the remainder of this subsection, we present a few results about output controllability of the system.
\begin{remark} \label{minrem}
Minimal systems are not necessarily output controllable.
\end{remark}
Examples 3 and 13 in the Appendix validate Remark \ref{minrem}.
\begin{fact}\label{fact5}
If the system (\ref{SS1}), (\ref{SS2}) is output controllable, then ${\rm rank}(C)=l$.
\end{fact}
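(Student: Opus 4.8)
The plan is to exploit the factorization $Q_{{\rm oc},n} = C\,Q_{{\rm sc},n}$ recorded in \eqref{Cctrb}, together with two elementary rank inequalities. First I would recall that output controllability of \eqref{SS1}, \eqref{SS2} is precisely the statement that the output controllability matrix has full row rank, i.e. ${\rm rank}(Q_{{\rm oc},n}) = l$. This is the standard definition invoked throughout the paper, so the hypothesis supplies this equality directly.

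Next I would bound ${\rm rank}(C)$ from below. Since $Q_{{\rm oc},n} = C\,Q_{{\rm sc},n}$, submultiplicativity of rank for matrix products gives ${\rm rank}(Q_{{\rm oc},n}) = {\rm rank}(C\,Q_{{\rm sc},n}) \leq {\rm rank}(C)$. Chaining this with the hypothesis yields $l \leq {\rm rank}(C)$. Then I would bound ${\rm rank}(C)$ from above: because $C \in \mathbb{R}^{l \times n}$ has only $l$ rows, its rank cannot exceed $l$, so ${\rm rank}(C) \leq l$. Combining the two inequalities forces ${\rm rank}(C) = l$, which is the claim.

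There is no substantive obstacle here; the result is an immediate consequence of the product structure of $Q_{{\rm oc},n}$ and the row dimension of $C$. The only point worth stating carefully is the direction of the rank inequality — that left-multiplying $Q_{{\rm sc},n}$ by $C$ can only reduce, never increase, the rank — so that the full-row-rank condition defining output controllability is inherited by $C$ itself. I would also note in passing that the assumption $l \leq n$ from the preliminaries guarantees the upper bound ${\rm rank}(C) \leq l$ is the operative constraint rather than ${\rm rank}(C) \leq n$.
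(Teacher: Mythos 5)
Your proof is correct and follows essentially the same route as the paper's: both use the factorization $Q_{{\rm oc},r}=C\,Q_{{\rm sc},r}$, the inequality ${\rm rank}(C\,Q_{{\rm sc},r})\leq{\rm rank}(C)$ to obtain $l\leq{\rm rank}(C)$, and the row dimension of $C$ (with $l\leq n$) for the reverse bound. No gaps or differences worth noting.
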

\begin{proof} Since the system is output controllable,
${\rm rank}(Q_{{\rm oc},r})={\rm rank}(C Q_{{\rm sc},r})=l$.
Also, ${\rm rank}(C Q_{{\rm sc},r}) \leq {\rm rank}(C)$ and therefore, $l \leq {\rm rank}(C)$.
Since $C \in \mathbb{R}^{l \times n}$ and since $l \leq n$ the maximum possible rank for matrix $C$ is $l$. Hence, ${\rm rank}(C)=l$.
\end{proof}

\begin{fact}
If the system (\ref{SS1}), (\ref{SS2}) is trackable, then ${\rm rank}(C)=l$.
\end{fact}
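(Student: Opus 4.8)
The plan is to reduce this to rank facts already established in the excerpt, since trackability forces a full-row-rank condition on the first non-zero Markov parameter, and the rank of a matrix product cannot exceed the rank of any factor. The statement is an immediate corollary of the machinery already in place, so the proof should be short; the only care needed is to pin the rank of $C$ to \emph{exactly} $l$ rather than merely bounding it from one side.

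First I would invoke Theorem \ref{TrackThm1}, which states that trackability of the system with delay $L$ is equivalent to ${\rm rank}(CA^{L-1}B)=l$. This converts the hypothesis into a concrete statement about the rank of a specific Markov parameter. Next I would write $CA^{L-1}B = C\,(A^{L-1}B)$ and apply submultiplicativity of rank under products, giving ${\rm rank}(CA^{L-1}B) \leq {\rm rank}(C)$. Chaining these two observations yields $l \leq {\rm rank}(C)$. Finally, since $C \in \mathbb{R}^{l \times n}$ and the standing assumption $l \leq n$ caps the rank of $C$ at $l$, i.e.\ ${\rm rank}(C) \leq l$, the two inequalities combine to force ${\rm rank}(C)=l$.

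An even cleaner route, which avoids reproving the rank bound, is to compose two results already in the excerpt: Lemma \ref{fact3v} shows that a trackable system is output controllable, and Fact \ref{fact5} shows that an output controllable system satisfies ${\rm rank}(C)=l$. Composing these immediately gives the claim. I would likely present this chained argument as the main proof and mention the direct argument as the underlying reason, since it exposes precisely why trackability propagates a full-row-rank condition from $CA^{L-1}B$ up to $C$.

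There is no real obstacle in this proof; the single subtlety worth flagging explicitly is the use of the standing assumption $l \leq n$. Without it, one would only conclude $l \leq {\rm rank}(C)$, and the rank of $C$ could in principle exceed $l$ were $C$ taller than wide; it is the assumption $l \leq n$ that makes $l$ the maximal attainable rank of $C$ and hence turns the inequality into an equality.
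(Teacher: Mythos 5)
Your main argument---composing Lemma \ref{fact3v} (trackable implies output controllable) with Fact \ref{fact5} (output controllable implies ${\rm rank}(C)=l$)---is exactly the paper's own proof, and your supplementary direct argument via ${\rm rank}(CA^{L-1}B)\leq {\rm rank}(C)$ together with Theorem \ref{TrackThm1} is also valid. One small correction: the upper bound ${\rm rank}(C)\leq l$ requires no appeal to the standing assumption $l\leq n$, since the rank of a matrix can never exceed its number of rows (for $l>n$ one gets ${\rm rank}(C)\leq n<l$, not a rank exceeding $l$), so the ``subtlety'' you flag is spurious, though harmless.
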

\begin{proof}
The proof is immediate from Lemma \ref{fact3v} and Fact \ref{fact5}.
\end{proof}

\begin{fact}\label{fact13a}
Let system (\ref{SS1}), (\ref{SS2}) be minimal and ${\rm rank}(C)= l$. Then, the system is output controllable.
\end{fact}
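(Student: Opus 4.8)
The plan is to reduce output controllability to a statement about column spaces and to exploit only the state-controllability half of minimality. First I would recall that a minimal system is in particular state controllable, so that ${\rm rank}(Q_{{\rm sc},n})=n$. Since $Q_{{\rm sc},n}\in\mathbb{R}^{n\times nm}$ has rank $n$, its columns span all of $\mathbb{R}^n$; that is, $\mathcal{R}(Q_{{\rm sc},n})=\mathbb{R}^n$. This surjectivity onto the whole state space is the property I will push through the $C$ matrix.

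The key step is the factorization $Q_{{\rm oc},n}=CQ_{{\rm sc},n}$ recorded in \eqref{Cctrb}, together with the elementary identity $\mathcal{R}(CM)=C\,\mathcal{R}(M)$ valid for any conformable matrix $M$. Applying this with $M=Q_{{\rm sc},n}$ and substituting $\mathcal{R}(Q_{{\rm sc},n})=\mathbb{R}^n$ yields $\mathcal{R}(Q_{{\rm oc},n})=C\,\mathbb{R}^n=\mathcal{R}(C)$. Taking ranks and invoking the hypothesis ${\rm rank}(C)=l$ gives ${\rm rank}(Q_{{\rm oc},n})={\rm rank}(C)=l$, which is precisely the output-controllability condition. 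This mirrors, in the reverse direction, the inequality ${\rm rank}(CQ_{{\rm sc},r})\le{\rm rank}(C)$ already used in the proof of Fact \ref{fact5}, except that here controllability upgrades the inequality to an equality.

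There is essentially no hard obstacle in this argument; the one point requiring care is to interpret state controllability correctly as \emph{full row rank} of $Q_{{\rm sc},n}$, so that its column space is all of $\mathbb{R}^n$ and $C$ is genuinely applied to the entire state space, rather than conflating it with full column rank. It is worth remarking that the proof invokes only the state-controllability component of minimality and never uses state observability, so the stated fact in fact holds under the weaker hypothesis that the system is state controllable with ${\rm rank}(C)=l$.
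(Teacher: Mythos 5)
Your proof is correct and in essence the same as the paper's: both reduce the claim to the observation that minimality forces $Q_{{\rm sc},n}$ to have full row rank $n$ (i.e.\ to be surjective onto $\mathbb{R}^n$), so that right-multiplying $C$ by the controllability matrix cannot lower its rank, giving ${\rm rank}(Q_{{\rm oc},n})={\rm rank}(CQ_{{\rm sc},n})={\rm rank}(C)=l$. The only difference is presentational — the paper argues by contradiction, encoding surjectivity through a right inverse $Q_{{\rm sc},r}^{\dagger}$ with $Q_{{\rm sc},r}Q_{{\rm sc},r}^{\dagger}=I_n$ and the inequality ${\rm rank}(C)\leq{\rm rank}(CQ_{{\rm sc},r})$, whereas you argue directly via $\mathcal{R}(CQ_{{\rm sc},n})=C\,\mathcal{R}(Q_{{\rm sc},n})=\mathcal{R}(C)$ — and your closing remark that only the state-controllability half of minimality is needed applies equally to the paper's own argument.
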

\begin{proof}
 Assume that (\ref{SS1}), (\ref{SS2}) is minimal, has ${\rm rank}(C)= l$ but is not output controllable. Then, ${\rm rank}(C Q_{{\rm sc},r})={\rm rank}(Q_{{\rm oc},r})< l$. % and ${\rm min}\left({\rm rank}(C), {\rm rank}(Q_{{\rm sc},r}) \right) <l$.
Since the system is minimal, there exists $Q_{{\rm sc},r} ^\dagger \in \mathbb{R}^{rm \times n}$ such that $Q_{{\rm sc},r} Q_{{\rm sc},r}^\dagger = I_n$. Further, ${\rm rank}(C)={\rm rank}(C Q_{{\rm sc},r} Q_{{\rm sc},r} ^\dagger) \leq {\rm rank}(C Q_{{\rm sc},r}) < l$ and therefore, ${\rm rank}(C)<l$ which contradicts the assumption. Hence, minimal systems with ${\rm rank}(C)= l$ are output controllable.
\end{proof}

\subsection{Number of inputs and outputs}
It must be noted that systems with $l>m$ are not trackable since they do not satisfy conditions in Theorem \ref{TrackThm1} and control inputs for exact tracking of arbitrary reference commands do not exist for these systems. For trackable systems with $l=m$, it is easy to see that control input is unique for tracking a given reference command. 

We now make remarks about trackability as a system property with regard to zeros of the system.
\subsection{Multivariable zeros and trackability}
{Define a matrix  $\tilde{M}_{r} $ such that
\[ \tilde{M}_{r} \triangleq \begin{bmatrix}
CA^{L-1}B & CA^LB & \cdots & CA^{r-1}B\\
0 &CA^{L-1}B  & \cdots & CA^{r-2}B\\
\vdots & &\ddots & \vdots \\
0 & \cdots &0 & CA^{L-1}B
\end{bmatrix} \in \mathbb{R}^{(r-L+1)l \times (r-L+1)m}.\] 
The following fact may be stated therefore.
\begin{fact}\label{Mrtilde}
 ${\rm rank}(\tilde{M}_{r})={\rm rank}({M_{r}})$.
\end{fact}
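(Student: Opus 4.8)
The plan is to realize $\tilde{M}_{r}$ as a left-and-right permutation of $M_{r}$, so that the rank equality follows at once from the invariance of rank under multiplication by invertible matrices. First I would record the block structure of the two matrices. Writing $N \triangleq r-L+1$ for the common number of block rows and block columns, the matrix $M_{r}$ is lower block-triangular and block-Toeplitz, with $(i,j)$ block equal to $CA^{L-1+i-j}B$ for $i \geq j$ and $0$ for $i<j$; dually, $\tilde{M}_{r}$ is upper block-triangular and block-Toeplitz, with $(i,j)$ block equal to $CA^{L-1+j-i}B$ for $j \geq i$ and $0$ for $j<i$. Both share the same diagonal block $CA^{L-1}B$, and their supports are reflections of one another across the block diagonal.

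The central observation is that reversing the order of the block rows and, independently, the order of the block columns of $M_{r}$ produces exactly $\tilde{M}_{r}$. To make this precise, let $R \in \mathbb{R}^{Nl \times Nl}$ and $R' \in \mathbb{R}^{Nm \times Nm}$ denote the block-reversal matrices, i.e.\ the block anti-diagonal identities whose nonzero blocks are $I_l$, respectively $I_m$, placed along the anti-diagonal; each is a permutation matrix and hence orthogonal and invertible. I would then verify the identity
\begin{equation*}
\tilde{M}_{r} = R\, M_{r}\, R'
\end{equation*}
by a direct block computation: the $(i,j)$ block of the right-hand side is the $(N+1-i,\,N+1-j)$ block of $M_{r}$, which equals $CA^{L-1+(N+1-i)-(N+1-j)}B = CA^{L-1+j-i}B$ and is nonzero precisely when $N+1-i \geq N+1-j$, that is, when $j \geq i$. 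This coincides with the $(i,j)$ block of $\tilde{M}_{r}$, and importantly the individual Markov-parameter blocks are transported unchanged (no transposition of the blocks themselves occurs).

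Granting this identity, the conclusion is immediate: since $R$ and $R'$ are invertible, left multiplication by $R$ and right multiplication by $R'$ preserve rank, whence ${\rm rank}(\tilde{M}_{r}) = {\rm rank}(M_{r})$.

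I expect the only real obstacle to be the index bookkeeping in establishing $\tilde{M}_{r} = R\,M_{r}\,R'$: one must check carefully that simultaneously reversing block rows and block columns carries the lower-triangular Toeplitz support of $M_{r}$ onto the upper-triangular Toeplitz support of $\tilde{M}_{r}$, and that the exponent $L-1+i-j$ is sent to $L-1+j-i$. The small case $N=2$, where $M_{r} = \left[\begin{smallmatrix} CA^{L-1}B & 0 \\ CA^{L}B & CA^{L-1}B \end{smallmatrix}\right]$ and $\tilde{M}_{r} = \left[\begin{smallmatrix} CA^{L-1}B & CA^{L}B \\ 0 & CA^{L-1}B \end{smallmatrix}\right]$, exhibits the reversal transparently and serves as a guard against off-by-one errors.
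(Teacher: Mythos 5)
Your proposal is correct and takes essentially the same route as the paper: both realize $\tilde{M}_{r}$ as the two-sided block-reversal $P_{l,r}\,M_{r}\,P_{m,r}$ of $M_{r}$, where $P_{l,r}$ and $P_{m,r}$ are the block anti-diagonal permutation (reversal) matrices, exactly your $R$ and $R'$. The only difference is cosmetic and in your favor: the paper finishes by invoking the Frobenius-type inequality $\operatorname{rank}(EF)+\operatorname{rank}(FG)-\operatorname{rank}(F)\leq\operatorname{rank}(EFG)$ to get $\operatorname{rank}(M_{r})\leq\operatorname{rank}(\tilde{M}_{r})$ and then asserts equality, whereas you conclude rank equality directly from the invertibility of the permutation matrices, which is the cleaner argument.
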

\begin{proof}
See Appendix \ref{ProofMrTilde}.
\end{proof}
}
We now recall from \cite{palanthandalam2009subspace} the matrix $\Phi_r \in \mathbb{R}^{ (n+rl) \times (r+1)m}$, defined as 
\begin{equation}\label{Phi_r}
\Phi_r \triangleq \left[\begin{array}{c c c c c c}
B & & AB & A^2B &\cdots & A^r B\\ \hhline{------}
0 &\temp & CB & CAB & \cdots & CA^{r-1}B\\
0 &\temp & 0 &CB  & \cdots & CA^{r-2}B\\
\vdots  &\temp & \vdots & &\ddots & \vdots \\
0 &\temp & 0 & \cdots &0 & CB
\end{array}\right]=\left[
\begin{array}{c}
Q_{{\rm sc},r}\\ \hhline{-}
0 ~ |~\tilde{M}_{r} \\ 
\end{array}\right],
\end{equation}
for a system \eqref{SS1}, \eqref{SS2} {with delay $L=1$,} to restate a theorem from \cite{palanthandalam2009subspace} here as a fact, {based on the assumption that the system in minimal.}
\begin{fact} 
\label{mvZerosFact}
The following statements are equivalent:
\begin{enumerate} [label=\roman*)]
\item Either $l <m$ or $l=m=n$, the triplet $(A,B,C)$ in the system (\ref{SS1}), (\ref{SS2}) has no invariant zeros.
\item ${\rm rank} \left( \Phi_{n-1} \right) = n+ (n-1)l$.
\end{enumerate}
\end{fact}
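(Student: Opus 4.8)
The plan is to detect rank deficiency of $\Phi_{n-1}$ through its left null space and to convert a nontrivial left null vector into an invariant zero, while disposing of the dimension condition by a dimension count. Since the fact is stated for $L=1$, the bottom block of $\Phi_{n-1}$ is the Toeplitz matrix $\tilde M_{n-1}$ built from $CB, CAB, \dots$. Writing a candidate left null row vector as $[\xi^\top~\eta_1^\top~\cdots~\eta_{n-1}^\top]$ with $\xi\in\mathbb{R}^n$ and each $\eta_j\in\mathbb{R}^l$, and multiplying against the $n$ column blocks $B,AB,\dots,A^{n-1}B$, gives $\xi^\top B=0$ together with $\xi^\top A^{j}B+\sum_{i=1}^{j}\eta_i^\top CA^{\,j-i}B=0$ for $j=1,\dots,n-1$. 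First I would introduce the adjoint sequence $\zeta_0=\xi$, $\zeta_j=A^\top\zeta_{j-1}+C^\top\eta_j$, and verify by induction that $\zeta_j=(A^{j})^\top\xi+\sum_{i=1}^{j}(A^{\,j-i})^\top C^\top\eta_i$, so that \emph{all} the null-space equations collapse to $B^\top\zeta_j=0$ for $j=0,\dots,n-1$. This identifies the left null space of $\Phi_{n-1}$ with the length-$n$ output-nulling trajectories of the dual system $(A^\top,C^\top,B^\top)$ driven from initial state $\xi$ by the free inputs $\eta_j$.

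Next I would connect these trajectories to invariant zeros. Seeking a geometric mode $\zeta_j=z_0^{\,j}v$ in the recursion forces $B^\top v=0$ and $(z_0I-A^\top)v\in\operatorname{range}(C^\top)$, i.e. a nonzero $(v,w)$ with
\[
\begin{bmatrix} z_0 I - A^\top & -C^\top \\ B^\top & 0 \end{bmatrix}\begin{bmatrix} v \\ w \end{bmatrix}=0 ,
\]
which is precisely the dual Rosenbrock pencil dropping rank at $z_0$. Since the pencil transpose is unimodularly equivalent to that of $(A,B,C)$, the dual and primal share the same invariant zeros, so such a mode exists exactly when $(A,B,C)$ has an invariant zero. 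A Cayley–Hamilton argument supplies the finite-horizon reduction: a trajectory sustained over the $n$ samples $j=0,\dots,n-1$ lies in, and can be continued within, the supremal output-nulling subspace, so restricting attention to $\Phi_{n-1}$ loses nothing. Minimality is what I would lean on here to guarantee that every rank drop of the pencil is a genuine transmission zero rather than an input- or output-decoupling mode, and that the pencil has the expected normal rank $n+l$.

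Separately I would dispatch the dimension condition by a pure count. The matrix $\Phi_{n-1}$ has $n+(n-1)l$ rows and $nm$ columns, so full row rank forces $nm\ge n+(n-1)l$, equivalently $n(m-l)\ge n-l$. Case-splitting on the sign of $m-l$ and using $l,m\le n$ shows this holds precisely when $l<m$, or when $l=m=n$; in the remaining cases $l>m$ or $l=m<n$ the columns are strictly deficient and full row rank is impossible. Hence the disjunction ``$l<m$ or $l=m=n$'' is exactly the necessary counting condition for full row rank of $\Phi_{n-1}$.

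Finally I would assemble the equivalence. For \ref{TT2}$\Rightarrow$\ref{TT1}-style direction (ii)$\Rightarrow$(i): full row rank forces the counting inequality, hence the dimension disjunction, and forces a trivial left null space, hence by the adjoint correspondence no output-nulling mode and therefore no invariant zero. For (i)$\Rightarrow$(ii): the disjunction supplies adequate column count, and the absence of invariant zeros makes the output-nulling trajectory set trivial, so the left null space is trivial and $\Phi_{n-1}$ attains full row rank $n+(n-1)l$. The main obstacle I anticipate is the middle step, namely rigorously matching finite-horizon output-nulling trajectories of the dual to the pencil definition of invariant zeros and confirming, under minimality, that no degeneracy of $C$ or of the normal rank intrudes as a spurious rank loss; by contrast the adjoint-recursion reduction and the dimension count are essentially mechanical.
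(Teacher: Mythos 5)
You should first be aware that the paper contains no proof of Fact \ref{mvZerosFact} at all: it is imported verbatim, with the minimality hypothesis, from \cite{palanthandalam2009subspace}, so what you have attempted is a from-scratch proof of that cited theorem. Within your attempt, the two mechanical pieces are sound. The identification of the left null space of $\Phi_{n-1}$ with horizon-$n$ output-nulling trajectories of the dual system $(A^{\top},C^{\top},B^{\top})$ via the adjoint recursion $\zeta_j=A^{\top}\zeta_{j-1}+C^{\top}\eta_j$, $B^{\top}\zeta_j=0$, is correct; the counting argument ($nm\ge n+(n-1)l$ iff $l<m$ or $l=m=n$, given $l,m\le n$) is exactly right; and the direction (ii)$\Rightarrow$(i) for zeros also goes through, since a pencil null pair $(v,w)$ at $z_0$ yields the explicit left null vector built from $\zeta_j=z_0^{\,j}v$, $\eta_j=z_0^{\,j-1}w$.

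The genuine gap is the step you yourself flag as ``the main obstacle,'' and it sits in the direction (i)$\Rightarrow$(ii): the claim that ``the absence of invariant zeros makes the output-nulling trajectory set trivial'' is false as a standalone implication, and proving it under the stated hypotheses is essentially the whole content of the theorem. Concretely, if the transfer function $C(zI-A)^{-1}B$ has normal rank strictly less than $l$ (which minimality does not preclude, even with $l<m$), then the dual system is not left-invertible, so there exist nonzero dual input sequences $\lbrace\eta_j\rbrace$ starting from $\zeta_0=0$ whose output $B^{\top}\zeta_j$ vanishes identically; truncating to horizon $n$ produces a nontrivial left null vector of $\Phi_{n-1}$ even though no isolated $z_0$ witnesses a rank drop of the Rosenbrock pencil below its (already deficient) normal rank. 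Whether such a system counts as having ``invariant zeros'' depends on which definition is in force --- rank below $n+l$, versus rank below normal rank --- and your sketch defers this point rather than resolves it. Relatedly, your geometric-mode construction only handles pure exponential trajectories $\zeta_j=z_0^{\,j}v$; the passage from an arbitrary nontrivial horizon-$n$ output-nulling trajectory to such a mode (or to a genuine pencil rank deficiency) is exactly where the output-nulling subspace machinery, the dimension condition, and minimality must interact, and Cayley--Hamilton alone does not supply it. Until that middle step is carried out, what you have is a correct reduction framework wrapped around the unproved core of the result.
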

The following result can now be stated.
{
\begin{proposition}\label{zeroFact}
If ${\rm rank} \left( \Phi_{n-1} \right) = n+ (n-1)l$, the system is trackable with a delay $L=1$. 
\end{proposition}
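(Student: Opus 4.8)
The plan is to exploit the block decomposition of $\Phi_{n-1}$ already recorded in \eqref{Phi_r}, namely
\begin{equation*}
\Phi_{n-1} = \begin{bmatrix} Q_{{\rm sc},n-1}\\ 0 \mid \tilde{M}_{n-1} \end{bmatrix},
\end{equation*}
and to read off from the full-row-rank hypothesis the rank of the lower block $\tilde{M}_{n-1}$, which Fact \ref{Mrtilde} then ties back to ${\rm rank}(M_{n-1})$. Once the rank of $M_{n-1}$ is pinned down, Lemma \ref{lemmaA1} and Theorem \ref{TrackThm1} deliver trackability. Since the construction of $\Phi_r$ is carried out for $L=1$, I would work throughout in the $L=1$ regime, so that $CA^{L-1}B = CB$ is the first Markov parameter.

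First I would note that, because $\Phi_{n-1} \in \mathbb{R}^{(n+(n-1)l)\times nm}$, the hypothesis ${\rm rank}(\Phi_{n-1}) = n + (n-1)l$ says precisely that $\Phi_{n-1}$ has full row rank, i.e.\ its $n + (n-1)l$ rows are linearly independent. A subset of linearly independent rows is again linearly independent, so in particular the bottom $(n-1)l$ rows $[\,0 \mid \tilde{M}_{n-1}\,]$ are linearly independent; since prepending a zero column block does not alter rank, this forces ${\rm rank}(\tilde{M}_{n-1}) = (n-1)l$. Invoking Fact \ref{Mrtilde} then gives ${\rm rank}(M_{n-1}) = (n-1)l$.

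The concluding step is routine. For $L=1$ and $r=n-1$ the matrix $M_{n-1}$ lives in $\mathbb{R}^{(n-1)l \times (n-1)m}$, so rank $(n-1)l$ is full row rank; in particular $(n-1)l \leq (n-1)m$, whence $l \leq m$ and $(n-1)l = \min\!\left((n-1)l,(n-1)m\right)$. Lemma \ref{lemmaA1} applied with $r=n-1$ then yields ${\rm rank}(CA^{L-1}B) = {\rm rank}(CB) = \min(l,m) = l$, and the implication \ref{TT3} $\Rightarrow$ \ref{TT1} of Theorem \ref{TrackThm1} establishes that the system is trackable with delay $L=1$.

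I do not anticipate a serious obstacle: the only genuine content lies in the block-triangular rank bookkeeping of the first two steps, which is elementary once \eqref{Phi_r} and Fact \ref{Mrtilde} are in hand. The point deserving a little care is that the proposition does not reuse the minimality or zero-freeness hypotheses of Fact \ref{mvZerosFact}; the argument above shows the stated implication holds without them, the inequality $l \leq m$ emerging automatically from the column count of $M_{n-1}$ rather than being assumed. The degenerate case $n=1$, where $\tilde{M}_{n-1}$ is vacuous, is handled separately by the standing assumption $l \leq n$ together with $CB \neq 0$ guaranteed by the definition of $L$.
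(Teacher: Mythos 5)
Your proof is correct and takes essentially the same route as the paper's: read off ${\rm rank}(\tilde{M}_{n-1})=(n-1)l$ from the block structure of $\Phi_{n-1}$ in \eqref{Phi_r}, transfer this to ${\rm rank}({M}_{n-1})$ via Fact \ref{Mrtilde}, and conclude ${\rm rank}(CB)=l$ from Lemma \ref{lemmaA1}, hence trackability from Theorem \ref{TrackThm1}. Your write-up is in fact slightly more careful than the paper's one-line argument, which invokes Fact \ref{mvZerosFact} even though (as you observe) its minimality and zero-freeness hypotheses are never needed, writes $(n-l)l$ where $(n-1)l$ is meant, and leaves implicit both the $l\leq m$ bookkeeping behind the application of Lemma \ref{lemmaA1} and the degenerate $n=1$ case that you address explicitly.
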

\begin{proof}
Following Facts \ref{mvZerosFact} and \ref{Mrtilde}, if ${\rm rank} \left( \Phi_{n-1} \right) = n+ (n-1)l$, then ${\rm rank} \left( \tilde{M}_{n-1} \right)={\rm rank} \left( {M}_{n-1} \right)=(n-l)l$ and from Lemma {\ref{lemmaA1}}, it follows that ${\rm rank }(CB)=l$.
\end{proof}}
{It thus follows from Fact \ref{mvZerosFact} and Proposition \ref{zeroFact} that a {minimal} system with the triplet $A,~B$ and $C$ and {delay $L=1$}, having either $l<m$ or $l=m=n$, and not having { invariant zeros}, is trackable.} However, the converse is not necessarily true, since there exist {minimal} systems with {$L=1$ delay}, which are  trackable and still have invariant zeros. For example, the system (\ref{SS1}), (\ref{SS2}) having matrices \[ 
\begin{split} A=\begin{bmatrix}
0.2 &0.3 &0.4\\
1 &0.1 &0\\
0 &0.8 &0.6
\end{bmatrix},~B=\begin{bmatrix}
1 & 0\\
0 & 0\\
0 &1
\end{bmatrix}\\ \text{~and~}C=\begin{bmatrix}
1 & 0 &0
\end{bmatrix} \end{split}\]
is minimal, {with $l<m$} and trackable, but has an invariant zero at $0.1$.
 \subsection{Other remarks}
Non-minimum phase systems have unstable zero dynamics and system inversion may not provide bounded control inputs for tracking reference commands.  If non-minimum phase systems are trackable, exact tracking of given reference commands is possible, that is, Theorem \ref{TrackThm1} holds for non-minimum phase systems, even if boundedness of control inputs cannot be guaranteed. However, there is certainly a possibility for defining trackability in the sense of bounded inputs and bounded outputs, which the presented work considers beyond its scope.

Theorem \ref{TrackThm1} is applicable to both, stable or unstable systems,  since stability is a property independent of ${\rm rank}(CA^{L-1}B)$. For an unstable system which is stabilizable as well as trackable, there exists a $K$ such that ${\rm sp.rad.}(A-BK)<1$  and control inputs can be determined for simultaneous stabilization and  tracking of reference commands for the system $$x_{k+1}=(A-BK)x_k + B {u}_k,$$
$$y_k=C x_k.$$  
However, it is important to note that trackability does not imply stabilizability. That is, there exist systems which are trackable but have states that are unstable as well as uncontrollable and cannot be stabilized.

The property of trackability is invariant under similarity transformation as is expected.
Let $\hat{A}=TAT^{-1}, ~\hat{B}=TB \text{ and }\hat{C}=CT^{-1}$ be matrices similar to $A, ~B \text{ and }C$ respectively under similarity transformation represented by invertible matrix $T \in \mathbb{R}^{n \times n}$. Then, $\hat{C} \hat{A}^{L-1} \hat{B} = CA^{L-1}B$ and $\hat{M}_{r}=M_{r}$. 

{While trackability is defined here for systems with a delay of $L$, it is also worth noting that it is possible for a system to have different delays in different input-output channels and hence not have a {\it uniform} delay $L$. The current framework allows for this generalization, however, the details of such a generalization are left for a future work.}

%meaning that trackability of a system \eqref{SS1}, \eqref{SS2}  

\section{Conclusion}\label{sec:conclusion}
Tracking behavior of system outputs was investigated as an existence problem with the help of basic tools in linear algebra, by setting up the definitions of trackability sets and trackable systems.  Trackable sets give information about reference commands that the dynamical system can follow, which is more useful than knowing if a system is right invertible or not. It was further noted that initial conditions of the system influence the set of reference commands the system can track.  Furthermore, for tracking untrackable reference commands, a workaround in the form of projections was proposed, to achieve tracking performance permitted by capabilities of the system, while keeping the tracking error minimum. Indices to give a sense of expected tracking performance from a system, even before controller design or control synthesis, were presented. The definitions of trackability indices suggest that tracking performance depends not only on the reference commands, but also on the interaction between the input and output variables of the system. A simple rank test to determine if system is trackable or not was proposed, while putting into context the rank tests given in earlier works that discussed similar concepts. It was shown that trackability is a different notion from state controllability and output controllability. More so, it was shown that output controllability is a necessary but not a sufficient condition for a system to track arbitrary reference commands. A Venn diagram summarizing the interconnections between fundamental system properties was presented along with examples and facts for supporting claims about relation of trackability with other properties. Even though the notion of trackability is presented here for discrete-time systems, it is thought that the concept is applicable to continuous-time systems, \textit{mutatis mutandis}. It is also felt that, by defining the property of trackability in an LTI framework, taking a linear algebraic approach, a larger audience would able to gain access to the notion of trackability and tools based on it. %Extensions to a larger class of systems could then be taken up, once the idea is appreciated by the control community, at large.
{Extensions to account for systems having different delays in different input-output channels can be taken up as future work.}

\newpage
\section*{References}
\bibliographystyle{elsarticle-num} 
\bibliography{ref}
\newpage
\appendix
\section*{Appendices}
%\addcontentsline{toc}{section}{Appendices}
\renewcommand{\thesubsection}{\Alph{subsection}}
\renewcommand{\theequation}{\thesubsection.\arabic{equation}}
%\subsection{{Proof of Fact \ref{fact_r}:}}%\label{proofA}

\subsection{Theorem \ref{TrackThm1} and earlier literature}\label{Erl} 
{Theorem \ref{TrackThm1} is closely related to several results in the literature, and is similar to matrix rank tests proposed in earlier papers, most notably in the works \cite{brockett1965reproducibility,sain1969invertibility,kurek2002trackability}. We briefly recount a few of the earlier works that discuss the ability of systems to follow reference commands, often discussed in different frameworks, with different terminologies and assumptions, to give the reader an overview of the literature.}

A  rank condition for checking functional reproducibility of a system - a similar notion to trackability, is given in \cite{brockett1965reproducibility}, in terms of full row rank of a matrix defined as \[{\mathcal{M}_q} \triangleq \begin{bmatrix}
CB & CAB & \cdots & CA^{q-1}B & CA^{q}B & \cdots &CA^{2q-1}B\\
0 &CB  & \cdots & CA^{q-2}B &CA^{q-1}B & \cdots &CA^{2q-2}B\\
\vdots &\vdots &\ddots &\vdots &\vdots &\cdots &\vdots\\
0 & \cdots &0 & CB & CAB & \cdots & CA^{q-1}B
\end{bmatrix}. \]
It is obvious that, if ${{\rm rank}(CB)}=l$, then ${\rm rank}({\mathcal{M}_q})={q}l$. Furthermore, it is easy to see that the converse of this statement is not true \cite{albrecht1981reproducibility}. It is worth noting that Brockett and Mesarovi\`c \cite{brockett1965reproducibility}, first introduced the notion of reproducibility - the ability of system to achieve with its outputs the desired reference commands. Brockett and Mesarovi\`c also discuss the notions of (uniform) functional reproducibility, asymptotic reproducibility and point-wise reproducibility in the context of tracking desired trajectories.

 In \cite{sain1969invertibility}, authors have discussed invertibility of systems from the perspective of both existence and construction of inverses, while giving rank conditions for existence of $L$ - delay inverses. Furthermore, the authors have related their results to the notion of functional reproducibility discussed earlier in \cite{brockett1965reproducibility} by defining 
 a matrix $\mathcal{M}_{D_i}$ defined as\[ {\mathcal{M}}_{D_i} \triangleq \begin{bmatrix}
D& CB & CAB & \cdots & CA^{i-1}B\\
0 &D &CB  & \cdots & CA^{i-2}B\\
\vdots &\vdots &\ddots &\vdots &\vdots\\
0 &0 & \cdots &D & CB\\
0 &0 & \cdots &0 & D
\end{bmatrix} \] and specifying the rank test ${\rm rank}({\mathcal{M}}_{D_i})-{\rm rank}({\mathcal{M}}_{D_{i-1}})=l$ as a condition for checking functional reproducibility (trackability) of the system.
More recently, in \cite{kurek2002trackability} a matrix \[
\Delta(0:j) \triangleq \begin{bmatrix}
D &0  &0 &\cdots &0 \\
CB &D &0 &\cdots &0 \\
CAB &CB &D &\cdots &0 \\
\vdots &\vdots &\vdots  &\ddots &0 \\
CA^{j-1}B &CA^{j-2}B &\cdots &CB &D\\
\end{bmatrix}.\] is defined for determining the trackability of the system, by checking either of the following equivalent conditions.
\begin{enumerate}[label=\roman*)]
\item ${\rm rank}(\Delta(0:n))-{\rm rank}(\Delta(0:n-1))=l$
\item ${\rm rank}(\Delta(0:n))$ has full row rank
\item ${\rm rank}(\Delta(n:2n+1))$ has full row rank where,

\end{enumerate}
It may be noted that the matrices ${\mathcal{M}}_{D_i}$  \cite{sain1969invertibility} and $\Delta(0:j)$ \cite{kurek2002trackability} are closely related to each other, and are constructed based on Markov parameters and the block triangular structure. Further, a notion of bounded-output-bounded-input (BOBI) trackability is discussed in \cite{kurek2002trackability} as well.  
Theorem \ref{TrackThm1} discussed earlier provides a simple rank test based on one Markov parameter $CA^{L-1}B$ for systems with delay $L$. In fact, the present paper takes a different approach from the works  \cite{brockett1965reproducibility,sain1969invertibility,kurek2002trackability} by defining the property of trackability through the notion of trackable and untrackable sets which provide a different insight into the tracking performance of the systems.

In \cite{germani1983functional}, a notion of $\epsilon$-controllability was presented, which the authors state, is a generalization of functional output controllability.
The authors give a rank condition on matrix $CBB^{\top} C^{\top}$ which is equivalent to  ${\rm rank}(CA^{L-1}B)=l$ with $L=1$, for determining functional output controllability.
The paper by Aoki \cite{aoki}, has discussed output controllability and perfect output controllability with respect to rank condition on transfer function matrix and the Tinbergen condition $(m \geq l)$. 
Furthermore, the author extends the definitions and condition for perfect output controllability to nonlinear system by stating that the nonlinear system is perfectly output controllable if its linearization is perfectly output controllable.
Wohltmann, in \cite{wohltmann1985target} has presented the definitions for target path controllability and rank conditions to determine target path controllability. 
It is worthwhile to note that the rank conditions presented in \cite{wohltmann1985target} are same as for ${\rm rank}(CA^{L-1}B)$ given by Theorem \ref{TrackThm1} for systems with delay $L=1$. 
The property of target path controllability for linear time varying case based on rank conditions on individual system matrices and fundamental subspaces associated with them is discussed in \cite{engwerda1988control}.
Furthermore, the author also defines a notion of admissible target functions, which is similar to the definitions of trackable sets. 
Notions of admissible states and inputs are also discussed in a similar manner. In \cite{marro2002convolution}, geometric conditions for right invertibility of discrete time LTI systems are discussed.
In reference \cite{kovalev1998criteria},  definitions for functional output controllability and invertibility of nonlinear systems along with criteria of `zero output defect' to determine functional output controllability are discussed. Also, the Tinbergen condition - $m \geq l$ \cite{tinbergen1952theory} is stated as necessary condition for functional output controllability.
\newpage
\subsection{{Venn Diagram Examples:}}
We list in the following, 17 examples for systems corresponding to the sets indicated by numbers in Venn diagram shown in Fig. \ref{venn}. It may be recalled that, set numbered 18 is a null set and therefore no examples exist for this set. Note that, in the examples that follow, $r=n$ is chosen.
\subsubsection*{Example 1}
$ A =
\begin{bmatrix}
     0     &1     &0\\
     0     &0     &1\\
     0     &0     &0
\end{bmatrix},~ B =
\begin{bmatrix}
     1     &0\\
     0     &1\\
     0     &0
\end{bmatrix} \text{ and }
C =
\begin{bmatrix}
0     &0     &1
\end{bmatrix}
$  
 
$
 Q_{{\rm sc},n} =
\begin{bmatrix}

     1     &0     &0     &1     &0     &0\\
     0     &1     &0     &0     &0     &0\\
     0     &0     &0     &0     &0     &0
\end{bmatrix}
$, 
$
Q_{{\rm so},n} =
\begin{bmatrix}
     0     &0     &1\\
     0     &0     &0\\
     0     &0     &0
\end{bmatrix}
$, 
$
 Q_{{\rm oc},n} =
\begin{bmatrix}
     0     &0     &0     &0     &0     &0
\end{bmatrix}
$.  Note that
$
CB = CAB = CA^2B=
\begin{bmatrix}

     0     &0
\end{bmatrix}
$ and therefore, {$L$ is not defined for the system.}
Also, ${\rm rank}(\Psi_r)=1$ and  $n+rm= 9$.

The system is
   \underline{not} controllable,
   \underline{not} state observable,
   \underline{not} minimal,
   \underline{not} output controllable,
   \underline{not} input and state observable,
   \underline{not} trackable.

\subsubsection*{Example 2}
  $ A =
\begin{bmatrix}
     0     &1\\
     0     &1
\end{bmatrix}
\text{, }
B =
\begin{bmatrix}
0\\     1
\end{bmatrix}
\text{ and }     
C =
\begin{bmatrix}
     0    &10\\
     0     &1
\end{bmatrix}
$
  Then,
$
 Q_{{\rm sc},n} =
\begin{bmatrix}
     0     &1\\
     1     &1
\end{bmatrix}
$, 
$
Q_{{\rm so},n} =
\begin{bmatrix}
     0    &10\\
     0    & 1\\
     0    &10\\
     0     &1
\end{bmatrix}
$, 
$
Q_{{\rm oc},n} =
\begin{bmatrix}
    10    &10\\
     1     &1
\end{bmatrix}
$ and
$
CB =
\begin{bmatrix}
    10\\
     1
\end{bmatrix}
$, which implies $L=1$.
Also, 
${\rm rank}(\Psi_r) =   3$ and $n+rm = 4$.
 
The system is  controllable, but
   \underline{not} state observable,
   \underline{not} minimal,
   \underline{not} output controllable,
   \underline{not} input and state observable,
   \underline{not} trackable.

\subsubsection*{Example 3}
 
$
A =
\begin{bmatrix}
     0     &0\\
     1     &0
\end{bmatrix}
$, 
$
B =
\begin{bmatrix}
     1\\
     0
\end{bmatrix}
$ and
$
C =
\begin{bmatrix}
     2    &-4\\
    -1     &2
\end{bmatrix}
$
  Then,
$
 Q_{{\rm sc},n} =
\begin{bmatrix}
     1     &0\\
     0     &1
\end{bmatrix}
$, 
$
Q_{{\rm so},n} =
\begin{bmatrix}
     2    &-4\\
    -1     &2\\
    -4     &0\\
     2     &0
\end{bmatrix}
$, 
$
Q_{{\rm oc},n} =
\begin{bmatrix}
     2    &-4\\
    -1     &2
\end{bmatrix}
$ and
$
CB =
\begin{bmatrix}
     2\\
    -1
\end{bmatrix}
$, which implies, $L=1$.
Also, ${\rm rank}(\Psi_r) =   3$ and  $n+rm = 4$.
 
The system is
  controllable,
  state observable,
  minimal, but   \underline{not} output controllable,
   \underline{not} input and state observable,
   \underline{not} trackable.
 
\subsubsection*{Example 4}
 $
A =
\begin{bmatrix}
     0.1     &0\\
     0     &0
\end{bmatrix}
$, 
$
B =
\begin{bmatrix}
     1     &0\\
     0     &0
\end{bmatrix}
$ and
$
C =
\begin{bmatrix}
     1     &0\\
     0     &1
\end{bmatrix}
$
 
Then,
$
 Q_{{\rm sc},n} =
\begin{bmatrix}
     1     &0     &0.1     &0\\
     0     &0     &0     &0
\end{bmatrix}
$, 
$
Q_{{\rm so},n} =
\begin{bmatrix}
     1     &0\\
     0     &1
     0.1     &0
     0     &0
\end{bmatrix}
$, 
$
 Q_{{\rm oc},n} =
\begin{bmatrix}
     1     &0     &0.1     &0\\
     0     &0     &0     &0
\end{bmatrix}
$ and
$
CB =
\begin{bmatrix}
     1     &0\\
     0     &0
\end{bmatrix}
$, which implies, $L=1$. 
Also, ${\rm rank}(\Psi_r)=4$ and $n+rm = 6$.
 
The system is state observable, but
   \underline{not} controllable,
   \underline{not} minimal,
   \underline{not} output controllable,
   \underline{not} input and state observable,
   \underline{not} trackable.

\subsubsection*{Example 5}
 
$
A =
\begin{bmatrix}
     1     &0     &0 &1\\
     0     &0     &1 &0\\
     0     &1     &0 &0\\
     0     &0     &0 &0
\end{bmatrix}
$, 
$
B =
\begin{bmatrix}
     0     &0\\
     1     &1\\
     0     &0\\
     0     &0
\end{bmatrix}
$ and
$
C =
\begin{bmatrix}
     0     &1     &0  &0\\
     0     &0     &1 &0\\
\end{bmatrix}
$
 
Then,
$
 Q_{{\rm sc},n} =
\begin{bmatrix}
     0     &0     &0     &0     &0     &0     &0     &0\\
     1     &1     &0     &0     &1     &1     &0     &0\\
     0     &0     &1     &1     &0     &0     &1     &1\\
     0     &0     &0     &0     &0     &0     &0     &0
\end{bmatrix}
$, 
$
Q_{{\rm so},n} =
\begin{bmatrix}
     0     &1     &0     &0\\
     0     &0     &1     &0\\
     0     &0     &1     &0\\
     0     &1     &0     &0\\
     0     &1     &0     &0\\
     0     &0     &1     &0\\
     0     &0     &1     &0\\
     0     &1     &0     &0
\end{bmatrix}
$, 
$
 Q_{{\rm oc},n} =
\begin{bmatrix}
1     &1     &0     &0     &1     &1     &0     &0\\
     0     &0     &1     &1     &0     &0     &1     &1
\end{bmatrix}
$ and
$
CB =
\begin{bmatrix}
	 1		&1\\
     0      &0
\end{bmatrix}
$, which implies, $L=1$. Also, ${\rm rank}(\Psi_r) = 6$ and $n+rm =  12$.
 
The system is output controllable, but
   \underline{not} controllable,
   \underline{not} state observable,
   \underline{not} minimal,
   \underline{not} input and state observable,
   \underline{not} trackable.

\subsubsection*{Example 6}
$
A =
\begin{bmatrix}
    0.1         &0         &0         &0\\
         0         &0    &0.1         &0\\
         0    &1         &0         &0\\
         0         &0         &0         &0
\end{bmatrix}
$, 
$
B =
\begin{bmatrix}
     0     &0\\
     1     &1\\
     0     &1\\
     0     &0
\end{bmatrix}
$ and
$
C =
\begin{bmatrix}
     0     &0     &1 &0
\end{bmatrix}
$
 
Then,
$
 Q_{{\rm sc},n} =
\begin{bmatrix}
         0         &0         &0         &0         &0         &0         &0         &0\\
    1    &1         &0         &0    &0.1    &0.1         &0         &0\\
         0         &0    &1    &1         &0         &0    &0.1    &0.1\\
         0         &0         &0         &0         &0         &0         &0         &0
\end{bmatrix}
$, 
$
Q_{{\rm so},n} =
\begin{bmatrix}
0         &0    &1         &0\\
         0    &1         &0         &0\\
         0         &0    &0.1         &0\\
         0    &0.1         &0         &0
\end{bmatrix}
$, 
$
 Q_{{\rm oc},n} =
\begin{bmatrix}
0         &0    &1    &1    &0         &0    &0.1    &0.1
\end{bmatrix}
$ and
$
CAB =
\begin{bmatrix}
     1     &1
\end{bmatrix}
$, which implies, $L=2$, since $CB=\begin{bmatrix}
     1     &1
\end{bmatrix}$.
Also ${\rm rank}(\Psi_r)= 5$ and $n+rm  =  12$.

The system is  output controllable, trackable, but
   \underline{not} controllable,
   \underline{not} state observable,
   \underline{not} minimal,
   \underline{not} input and state observable.

\subsubsection*{Example 7}
 $
A =
\begin{bmatrix}
     1     &0     &0     &1\\
     0     &0     &1     &0\\
     0     &1     &0     &0\\
     0     &0     &0     &0
\end{bmatrix}
$, 
$
B =
\begin{bmatrix}
     0     &0\\
     1     &1\\
     0     &0\\
     0     &-1
\end{bmatrix}
$ and 
$
C =
\begin{bmatrix}
    0     &1     &0     &0\\
    0     &0     &1     &0
\end{bmatrix}
$
 
Then,
$$
 Q_{{\rm sc},n} =
\begin{bmatrix}
     0     &0     &0    &-1     &0    &-1     &0    &-1\\
     1     &1     &0     &0     &1     &1     &0     &0\\
     0     &0     &1     &1     &0     &0     &1     &1\\
     0    &-1     &0     &0     &0     &0     &0     &0
\end{bmatrix}
,$$ 
$
Q_{{\rm so},n} =
\begin{bmatrix}
     1     &0     &0     &0\\
     0     &0     &0     &1\\
     1     &0     &0     &1\\
     0     &0     &0     &0\\
     1     &0     &0     &1\\
     0     &0     &0     &0\\
     1     &0     &0     &1\\
     0     &0     &0     &0
\end{bmatrix}
, $
$
 Q_{{\rm oc},n} =
\begin{bmatrix}
     0     &0     &0    &-1     &0    &-1     &0    &-1\\
     0    &-1     &0     &0     &0     &0     &0     &0
\end{bmatrix}
$ and
$
CB =
\begin{bmatrix}
     0     &0\\
     0     &-1
\end{bmatrix}
$, which implies, $L=1$.
Also, ${\rm rank}(\Psi_r)= 6$ and $n+rm = 12$.
 
The system is controllable, output controllable,
   \underline{not} state observable,
   \underline{not} minimal,
   \underline{not} input and state observable,
   \underline{not} trackable.
 
\subsubsection*{Example 8}
  $
A =
\begin{bmatrix}
    0.1         &0         &0\\
         0         &0    &0.1\\
         0    &1         &0
\end{bmatrix}
$, 
$
B =
\begin{bmatrix}
     0     &1\\
     1     &1\\
     0     &1
\end{bmatrix}
$ and
$
C =
\begin{bmatrix}
     0     &0     &1
\end{bmatrix}
$
 
Then,
$
 Q_{{\rm sc},n} =
\begin{bmatrix}
         0    &1         &0    &0.1         &0    &0.01\\
    1        &0         &0    &0.1    &0.1         &0\\
         0    &1    &1         &0         &0    &0.1
\end{bmatrix}
$, 
$
Q_{{\rm so},n} =
\begin{bmatrix}
         0         &0    &1\\
         0    &1         &0 \\
         0         &0    &0.1
\end{bmatrix}
$, 
$
 Q_{{\rm oc},n} =
\begin{bmatrix}
         0    &1    &1    &1        & 0    &0.1
\end{bmatrix}
$ and 
$
CB =
\begin{bmatrix}
     0     &1
\end{bmatrix}
$, which implies, $L=1$. Also, ${\rm rank}(\Psi_r)=  4$ and $n+rm =   9$.

The system is  controllable,   output controllable,  trackable, but
   \underline{not} state observable,
   \underline{not} minimal,
   \underline{not} input and state observable.
 
\subsubsection*{Example 9}
 $
A =
\begin{bmatrix}
     0     &1\\
     0     &0
\end{bmatrix}
$, 
$
B =
\begin{bmatrix}
     1\\
     1
\end{bmatrix}
$ and
$
C =
\begin{bmatrix}
     1    &-1
\end{bmatrix}
$  
Then,\\
$
 Q_{{\rm sc},n} =
\begin{bmatrix}
     0     &0     &0     &0     &0     &0     &1     &2     &3\\
     0     &0     &0     &1     &2     &3     &0     &0     &0\\
     1     &2     &3     &0     &0     &0     &0     &0     &0
\end{bmatrix}
$, 
$
Q_{{\rm so},n} =
\begin{bmatrix}
     1     &0     &0\\
     0     &1     &0\\
     0     &1     &0\\
     0     &0     &1\\
     0     &0     &1\\
     0     &0     &0
\end{bmatrix}
$,\\
$
 Q_{{\rm oc},n} =
\begin{bmatrix}
     0     &0     &0     &0     &0     &0     &1     &2     &3\\
     0     &0     &0     &1     &2     &3     &0     &0     &0
\end{bmatrix}
$ and
$
CAB = \begin{bmatrix}
     0     &0     &0\\
     1     &2     &3
\end{bmatrix}
$, which implies, $L=2$, since $CB= \begin{bmatrix}
     0     &0     &0\\
     0     &0     &0
\end{bmatrix}$. Also, ${\rm rank}(\Psi_r) = 5$ and $n+rm =  12.$
  
The system is controllable, state observable, minimal, output controllable, but
   \underline{not} input and state observable,
   \underline{not} trackable.
 
\subsubsection*{Example 10}
 
$
A =
\begin{bmatrix}
     0     &1\\
     0    & 0
\end{bmatrix}
$, 
$
B =
\begin{bmatrix}
     -1\\
     1
\end{bmatrix}
$ and 
$
C =
\begin{bmatrix}
     1    & 1
\end{bmatrix}
$  
Then,
$
 Q_{{\rm sc},n} =
\begin{bmatrix}
     -1    & 1\\
     1    & 0
\end{bmatrix}
$, 
$
Q_{{\rm so},n} =
\begin{bmatrix}
     1   &  1\\
     0   &  1
\end{bmatrix}
$, 
$
 Q_{{\rm oc},n} =
\begin{bmatrix}
     0     &1
\end{bmatrix}
$ and
$
CAB = 1$ which implies, $L=2$, since $CB=0$. Also, ${\rm rank}(\Psi_r) = 3$ and $n+rm = 4$.

The system is controllable, state observable, minimal,  output controllable, trackable, but
   \underline{not} input and state observable.

\subsubsection*{Example 11}
  $
A =
\begin{bmatrix}
     1     &0     &0\\
     0     &0    & 1\\
     0     &1    & 0
\end{bmatrix}
$, 
$
B =
\begin{bmatrix}
     1     &0\\
     0     &0\\
     0     &1
\end{bmatrix}
$ and 
$
C =
\begin{bmatrix}
     1     &0    & 0\\
     0     &1    & 0\\
     0     &0    & 1
\end{bmatrix}
$
 
Then,
$
Q_{{\rm sc},n} =
\begin{bmatrix}
     1     &0     &1     &0     &1     &0\\
     0     &0     &0     &1     &0     &0\\
     0     &1    & 0     &0     &0     &1
\end{bmatrix}
$, 
$
Q_{{\rm so},n} =
\begin{bmatrix}
     1     &0     &0\\
     0     &1     &0\\
     0     &0     &1\\
     1     &0    & 0\\
     0    & 0     &1\\
     0     &1     &0\\
     1     &0     &0\\
     0     &1     &0\\
     0     &0     &1
\end{bmatrix}
$, \\
$
 Q_{{\rm oc},n} =
\begin{bmatrix}
     1     &0     &1    & 0    & 1    & 0\\
     0    & 0    & 0     &1    & 0    & 0\\
     0     &1     &0     &0    & 0    & 1
\end{bmatrix}
$ and
$
CB =
\begin{bmatrix}
     1     &0\\
     0     &0\\
     0    & 1
\end{bmatrix}
$ which implies, $L=1$. Also, ${\rm rank}(\Psi_r) =  9$ and $n+rm = 9$.

 The system is controllable, state observable, minimal, output controllable,
  input and state observable, but
   \underline{not} trackable.

\subsubsection*{Example 12}
 $
A =
\begin{bmatrix}
     0     &1\\
     0    & 0
\end{bmatrix}
$, 
$
B =
\begin{bmatrix}
     1     &0\\
     0     &1
\end{bmatrix}
$ and 
$
C =
\begin{bmatrix}
     1    & 0\\
     0    & 1
     \end{bmatrix}
$  
Then,
$
Q_{{\rm sc},n} =
\begin{bmatrix}
     1    & 0    &0     &1\\
     0     &1    & 0    & 0
\end{bmatrix}
$, 
$
Q_{{\rm so},n} =
\begin{bmatrix}
     1    & 0\\
     0    & 1\\
     0    & 1\\
     0    & 0
\end{bmatrix}
$, 
$
 Q_{{\rm oc},n} =
\begin{bmatrix}
     1    & 0     &0     &1\\
     0    & 1     &0     &0
\end{bmatrix}
$ and 
$
CB =
\begin{bmatrix}
     1     &0\\
     0     &1
\end{bmatrix}
$ which implies, $L=1$. Also, ${\rm rank}(\Psi_r) = 6$ and $n+rm =  6$.

The system is controllable, state observable, minimal, output controllable,
  input and state observable, trackable.
 
\subsubsection*{Example 13}
  $
A =
\begin{bmatrix}
     0    & 1    & 0\\
     0    & 0    & 1\\
     0    & 0   &  0
\end{bmatrix}
$, 
$
B =
\begin{bmatrix}
     0\\
     0\\
     1
\end{bmatrix}
$ and 
$
C =
\begin{bmatrix}
     1     &0     &0\\
     0     &0    & 1\\
     1    & 0    & 1
\end{bmatrix}
$
 
Then, $
Q_{{\rm sc},n} =
\begin{bmatrix}
     0    & 0    & 1\\
     0    & 1    & 0\\
     1    & 0    & 0
\end{bmatrix}
$, 
$
Q_{{\rm so},n} =
\begin{bmatrix}
     1     &0     &0\\
     0    & 0     &1\\
     1    & 0     &1\\
     0    & 1    & 0\\
     0    & 0     &0\\
     0    & 1     &0\\
     0    & 0    & 1\\
     0    & 0     &0\\
     0    & 0    & 1
\end{bmatrix}
$, 
$
 Q_{{\rm oc},n} =
\begin{bmatrix}
     0     &0     &1\\
     1     &0     &0\\
     1     &0     &1
\end{bmatrix}
$ and 
$
CB =
\begin{bmatrix}
     0\\
     1\\
     1
\end{bmatrix}
$, which implies $L=1$. Also, ${\rm rank}(\Psi_r) = 6$ and $n+rm = 6$.

The system is controllable, state observable, minimal,  input and state observable, but
\underline{not} output controllable,
\underline{not} trackable.

\subsubsection*{Example 14}
 $
A =
\begin{bmatrix}
     1     &0     &0\\
     0     &0     &1\\
     0     &0     &1
\end{bmatrix}
$, 
$
B =
\begin{bmatrix}
     0    & 1\\
     1     &1\\
     1     &1
\end{bmatrix}
$ and 
$
C =
\begin{bmatrix}
     1    & 0     &0\\
     0    & 1    & 0\\
     0    & 0    & 1
\end{bmatrix}
$
 
Then,
$
Q_{{\rm sc},n} =
\begin{bmatrix}
     0     &1    & 0     &1    & 0    & 1\\
     1     &1     &1    & 1    & 1    & 1\\
     1    & 1    & 1    & 1    & 1    & 1
\end{bmatrix}
$, 
$
Q_{{\rm so},n} =
\begin{bmatrix}
     1    & 0     &0\\
     0     &1    & 0\\
     0     &0    & 1\\
     1    & 0    & 0\\
     0    & 0    & 1\\
     0    & 0    & 1\\
     1     &0    & 0\\
     0    & 0    & 1\\
     0    & 0    & 1
\end{bmatrix}
$, 
$
 Q_{{\rm oc},n} =
\begin{bmatrix}
     0     &1     &0    & 1     &0     &1\\
     1     &1     &1    & 1    & 1    & 1\\
     1     &1     &1    & 1    & 1    & 1
\end{bmatrix}
$ and
$
CB =
\begin{bmatrix}
     0     &1\\
     1     &1\\
     1     &1
\end{bmatrix}
$, which implies $L=1$. Also, ${\rm rank}(\Psi_r) =  9$ and $n+rm =  9$.
 
The system is state observable,   input and state observable, but
   \underline{not} controllable,
   \underline{not} minimal,
   \underline{not} output controllable,
   \underline{not} trackable.
 
\subsubsection*{Example 15}
 $
A =
\begin{bmatrix}
     1    &-1     &0\\
     0     &1     &0\\
     0     &0     &1
\end{bmatrix}
$, 
$
B =
\begin{bmatrix}
     1\\
     1\\
     1
\end{bmatrix}
$ and 
$
C =
\begin{bmatrix}
     1    & 0     &0\\
     0    & 0     &1
\end{bmatrix}
$  
Then,
$
 Q_{{\rm sc},n} =
\begin{bmatrix}
     1     &0    &-1\\
     1     &1     &1\\
     1     &1     &1
\end{bmatrix}
$, 
$
Q_{{\rm so},n} =
\begin{bmatrix}
     1    & 0    & 0\\
     0    & 0    & 1\\
     1    &-1     &0\\
     0    & 0     &1\\
     1    &-2     &0\\
     0    & 0     &1
\end{bmatrix}
$, 
$
 Q_{{\rm oc},n} =
\begin{bmatrix}
     1    & 0    &-1\\
     1     &1     &1
\end{bmatrix}
$ and 
$
CB =
\begin{bmatrix}
     1\\
     1
\end{bmatrix}
$, which implies $L=1$. Also, ${\rm rank}(\Psi_r) = 6$ and $n+rm = 6$.

The system is   state observable, output controllable,   input and state observable, but
   \underline{not} controllable,
   \underline{not} minimal,
   \underline{not} trackable.
 
\subsubsection*{Example 16}
  $
A =
\begin{bmatrix}
     0     &1\\
     0     &1
\end{bmatrix}
$, 
$
B =
\begin{bmatrix}
     1\\
     0
\end{bmatrix}
$ and 
$
C =
\begin{bmatrix}
     1     &1
\end{bmatrix}
$  
Then,
$
Q_{{\rm sc},n} =
\begin{bmatrix}
     1     &0\\
     0    & 0
\end{bmatrix}
$, 
$
Q_{{\rm so},n} =
\begin{bmatrix}
     1    & 1\\
     0    & 2
\end{bmatrix}
$, 
$
 Q_{{\rm oc},n} =
\begin{bmatrix}
     1     &0
\end{bmatrix}
$ and 
$
CB = 1$, which implies $L=1$.
Also, ${\rm rank}(\Psi_r)  =  3$ and $n+rm =  4$.

The system is state observable,   output controllable,   trackable, but
   \underline{not} controllable,
   \underline{not} minimal,
   \underline{not} input and state observable.

\subsubsection*{Example 17}
  $
A =
\begin{bmatrix}
     0     &0     &0\\
     1     &0     &0\\
     0     &1    & 0
\end{bmatrix}
$, 
$
B =
\begin{bmatrix}
     0 &0\\
     -1 &0\\
     1 &-1
\end{bmatrix}
$ and 
$
C =
\begin{bmatrix}
     0     &1     &-1\\
     0     &0     &1
\end{bmatrix}
$  
Then,
$
 Q_{{\rm sc},n} =
\begin{bmatrix}
     0     &0     &0     &0     &0     &0\\
    -1     &1     &0     &0     &0     &0\\
     1    &-1    &-1     &1     &0     &0
\end{bmatrix}
$, 
$
Q_{{\rm so},n} =
\begin{bmatrix}
     0     &1    &-1\\
     0     &0     &1\\
     1    &-1     &0\\
     0     &1     &0\\
    -1     &0     &0\\
     1     &0     &0
\end{bmatrix}
$,\\
$
Q_{{\rm oc},n} =
\begin{bmatrix}
    -2     &2     &1    &-1     &0     &0\\
     1    &-1    &-1     &1     &0     &0
\end{bmatrix}
$ and 
$
CB =
\begin{bmatrix}
     -2     &2\\  1   &-1
\end{bmatrix}
$, which implies $L=1$. Also, ${\rm rank}(\Psi_r) = 6$ and $n+rm = 9$.
The system is state observable, output controllable,   trackable but \underline{not} controllable, 
\underline{not} minimal, \underline{not} input and state observable.

\newpage
{\subsection{Proof of Fact \ref{Mrtilde}} \label{ProofMrTilde}
We recall a fact from \cite[page~134]{bernstein2009matrix} and restate it here.  
\begin{fact}\label{triplematrixproduct}
Let $E \in \mathbb{R}^{a \times b}$, $F \in \mathbb{R}^{b \times c}$ and $G \in \mathbb{R}^{c \times d}$ and $a,b,c,d \in \mathbb{N}$. Then, $\operatorname{rank}(EF)+\operatorname{rank}(FG)-\operatorname{rank}(F)\leq \operatorname{EFG}.$
\end{fact}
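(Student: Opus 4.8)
The plan is to recognize this as the Frobenius rank inequality (reading the right-hand side as $\operatorname{rank}(EFG)$, which is what is intended) and to prove it by sandwiching the rank of a single auxiliary block matrix between the two sides. Concretely, I would introduce
\[
N \triangleq \begin{bmatrix} EF & 0 \\ F & FG \end{bmatrix},
\]
first checking dimensional consistency: with $E\in\mathbb{R}^{a\times b}$, $F\in\mathbb{R}^{b\times c}$, $G\in\mathbb{R}^{c\times d}$, the blocks $EF$ ($a\times c$), the zero block ($a\times d$), $F$ ($b\times c$) and $FG$ ($b\times d$) assemble into a genuine $(a+b)\times(c+d)$ matrix. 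The goal is to show $\operatorname{rank}(N)\ge\operatorname{rank}(EF)+\operatorname{rank}(FG)$ and, separately, $\operatorname{rank}(N)\le\operatorname{rank}(F)+\operatorname{rank}(EFG)$; chaining these two bounds gives exactly $\operatorname{rank}(EF)+\operatorname{rank}(FG)-\operatorname{rank}(F)\le\operatorname{rank}(EFG)$.

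For the lower bound I would invoke the elementary fact that a block lower-triangular matrix satisfies $\operatorname{rank}\left(\begin{bmatrix} X & 0 \\ Z & Y\end{bmatrix}\right)\ge\operatorname{rank}(X)+\operatorname{rank}(Y)$, applied with $X=EF$ and $Y=FG$. I would justify this in one line: choose $\operatorname{rank}(EF)$ linearly independent columns from the left block and $\operatorname{rank}(FG)$ linearly independent columns from the right block, then use the triangular zero pattern to argue that the corresponding columns of $N$ stay independent (independence of the top entries forces the left coefficients to vanish, after which independence of the bottom entries forces the right coefficients to vanish). For the upper bound I would apply the rank-preserving left multiplication by the invertible matrix $\begin{bmatrix} I & -E \\ 0 & I\end{bmatrix}$, which yields
\[
\begin{bmatrix} I & -E \\ 0 & I \end{bmatrix} N = \begin{bmatrix} 0 & -EFG \\ F & FG \end{bmatrix}.
\]
Writing the right-hand side as $\begin{bmatrix} 0 & -EFG \\ 0 & 0\end{bmatrix}+\begin{bmatrix} 0 & 0 \\ F & FG\end{bmatrix}$ and using subadditivity of rank gives $\operatorname{rank}(N)\le\operatorname{rank}(EFG)+\operatorname{rank}\!\begin{bmatrix} F & FG\end{bmatrix}$; and since every column of $FG$ lies in the column space of $F$, appending those columns cannot raise the rank, so $\operatorname{rank}\!\begin{bmatrix} F & FG\end{bmatrix}=\operatorname{rank}(F)$, giving $\operatorname{rank}(N)\le\operatorname{rank}(EFG)+\operatorname{rank}(F)$.

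Combining the two bounds completes the argument. The only genuinely delicate point is the block-triangular lower bound, where I would be careful that the column-independence argument does not tacitly assume square diagonal blocks; the remaining pieces (the row operation, subadditivity, and the $\operatorname{rank}\begin{bmatrix} F & FG\end{bmatrix}=\operatorname{rank}(F)$ reduction) are routine bookkeeping, with the main caution being that the elimination is performed on the correct side so that $EF-E\cdot F=0$ and $0-E\cdot FG=-EFG$ emerge as claimed.
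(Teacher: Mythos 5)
Your proof is correct, but it takes a genuinely different route from the paper for a simple reason: the paper does not prove this fact at all. It recalls it verbatim from Bernstein's \emph{Matrix Mathematics} (citing page 134) as a known result --- it is the Frobenius rank inequality, and the right-hand side of the paper's statement, ``$\operatorname{EFG}$,'' is a typo for $\operatorname{rank}(EFG)$, which you correctly read. Your argument is the classical self-contained proof of that inequality, and every step checks out: the elimination
\[
\begin{bmatrix} I & -E \\ 0 & I \end{bmatrix}
\begin{bmatrix} EF & 0 \\ F & FG \end{bmatrix}
=
\begin{bmatrix} 0 & -EFG \\ F & FG \end{bmatrix}
\]
is computed correctly (and performed on the correct side), subadditivity plus the column-space containment $\mathcal{R}(FG)\subseteq\mathcal{R}(F)$ gives $\operatorname{rank}(N)\leq\operatorname{rank}(EFG)+\operatorname{rank}(F)$, and your column-selection argument for the block-triangular lower bound $\operatorname{rank}(N)\geq\operatorname{rank}(EF)+\operatorname{rank}(FG)$ is valid exactly as written --- it nowhere uses squareness of the diagonal blocks, only the zero pattern in the top-right, so the delicacy you flag is handled. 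What your approach buys is a proof from first principles using only rank subadditivity and invariance under invertible multiplication, making the appendix self-contained; what the paper's citation buys is brevity, treating the inequality as standard background so that the effort of Appendix C can go into the permutation-matrix argument for $\operatorname{rank}(\tilde{M}_r)=\operatorname{rank}(M_r)$, which is where the fact is actually used. Either choice is legitimate; yours would simply replace a bibliographic pointer with a short complete proof.
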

The proof of Fact \ref{Mrtilde} may then be given as follows.
\begin{proof}
Define a permutation matrix \cite[pp.~165-166]{bernstein2009matrix}
\begin{equation}\label{permmatrix}
P_{q,r} \triangleq \begin{bmatrix}0 &0 &0 &\cdots &I_q\\
0 &0 &0 &\iddots &0\\
0 &0 &I_q &\cdots &0\\
0 &I_q &0 &\cdots &0\\
I_q &0 &0 &\cdots &0
 \end{bmatrix} \in \mathbb{R}^{(r-L+1)q \times (r-L+1)q}.
\end{equation}
Note that, $P_{q,r}^2=I_{(r-L+1)q}$ and $\operatorname{rank}(P_{q,r})=(r-L+1)q$. It is noted that pre-multiplying or post-multiplying a given matrix by a permutation matrix amounts to permutation/interchange  of rows or columns respectively \cite[pp.~21-22]{banerjee2014linear}, \cite[pp.~61-62]{gentle2007matrix} and therefore does not affect the number of linearly dependent rows or columns in the resultant matrix and therefore does not affect the rank of the resultant matrix. Next, noting that $\tilde{M}_r = P_{l,r} M_r P_{m,r}$ and using Fact \ref{triplematrixproduct} with $E=P_{l,r}$, $G=M_r$, $G=P_{l,r}$ we have, $$\operatorname{rank}(P_{l,r}M_r)+\operatorname{rank}(M_rP_{m,r})-\operatorname{rank}(M_r)\leq \operatorname{rank}(P_{l,r} M_r P_{m,r}).$$
Noting that, $\operatorname{rank}(P_{l,r}M_r)=\operatorname{rank}(M_rP_{m,r})=\operatorname{rank}(M_r)$, we now have
$$\operatorname{rank}(M_r)\leq \operatorname{rank}(P_{l,r}) M_r P_{m,r}),$$ and therefore
$\operatorname{rank}(M_r)= \operatorname{rank}( \tilde{M}_r)$. \end{proof}}}

\end{document}